\documentclass[twoside, 11pt]{article}

\usepackage{geometry}
\usepackage{amsmath, amsthm, amssymb, mathrsfs, BOONDOX-cal, graphicx, caption, float, multirow, makecell, upgreek}
\usepackage{chngcntr, etoolbox}
\usepackage[title]{appendix}

\usepackage{booktabs}
\usepackage[flushleft]{threeparttable}

\usepackage[hidelinks]{hyperref}
\usepackage[numbers]{natbib}
\usepackage[dvipsnames]{xcolor}
\hypersetup{colorlinks=true, linkcolor=Orange, citecolor=Orange}

\usepackage{txfonts}
\usepackage[T1]{fontenc}

\usepackage{fancyhdr}
\counterwithin*{equation}{section}
\counterwithin*{equation}{subsection}
\renewcommand\theequation{\ifnumgreater{\value{subsection}}{0}{\thesubsection.}{\thesection.}\arabic{equation}}

\allowdisplaybreaks

\newtheoremstyle{theorem}
  {10pt}
  {10pt}
  {\sl}
  {\parindent}
  {\bf}
  {. }
  { }
  {}
\theoremstyle{theorem}
\newtheorem{theorem}{Theorem}
\newtheorem{proposition}{Proposition}
\newtheorem{corollary}{Corollary}

\renewcommand{\Lambda}{\varLambda}
\renewcommand{\Pi}{\varPi}
\renewcommand{\Theta}{\varTheta}

\begin{document}

\title{\bf When ratios fall: A dynamic approach to \\ contingent convertibles\footnote{This research was initiated in 2019 while Li Chen and Weixuan Xia were affiliated with the Department of Finance at Boston University Questrom School of Business. We are grateful to J\'{e}r\^{o}me Detemple, Andrew Lyasoff, Andrea Vedolin, and discussants at the Boston University Finance Brown Bag Seminar and the City University of Hong Kong for stimulating comments and valuable suggestions.}}
\author{Li Chen\footnote{Lingnan College, Sun Yat-sen University. Email: chenli98@mail.sysu.edu.cn}\and Liang Wang\footnote{Department of Mathematics and Statistics, Boston University. Email: leonwang@bu.edu} \and Weixuan Xia\footnote{School of Data, Mathematical, and Statistical Sciences, University of Central Florida. Corresponding Author. Email: weixuan.xia@ucf.edu}}
\date{2026}
\maketitle

\thispagestyle{plain}

\begin{abstract}
  We propose a novel valuation framework for contingent convertible (CoCo) bonds based on the issuing bank's Common Equity Tier 1 (CET1) ratio, which is widely acknowledged as an indicator of a bank's solvency. Our approach develops a bivariate jump-diffusion model that captures the dynamic relationship linking the CET1 ratios, share prices, and CoCo bond prices, incorporating both continuous market movements and correlated jump risk. The model advances existing literature through three key innovations: (1) a hybrid mechanism for modeling regulatory discretion in trigger decisions, (2) a class of power conversion schemes that generalizes traditional approaches while maintaining analytical tractability, and (3) a method to overcome the temporal discrepancy between high-frequency market data and low-frequency regulatory reporting. We derive semi-closed form formulas for both write-down and equity-convertible CoCo bonds and validate our model through five case studies spanning from 2009 to 2023, including an in-depth analysis of the 2023 Credit Suisse collapse. The results demonstrate a significant improvement in pricing and hedging performance while highlighting the model's data-adaptive nature that enables short-term predictions. \medskip\\
  JEL Classifications: G12; G13; G21 \medskip\\
  \textsc{Keywords:} Contingent convertible; CET1 ratio; regulatory intervention; correlated jump risk; power conversion; Credit Suisse collapse
\end{abstract}

\newcommand{\dd}{{\rm d}}
\newcommand{\pd}{\partial}
\newcommand{\ii}{{\rm i}}
\newcommand{\as}{{\rm a.s.}}
\newcommand{\PP}{\mathsf{P}}
\newcommand{\E}{\mathsf{E}}
\newcommand{\Var}{\mathsf{Var}}
\newcommand{\Skew}{\mathsf{Skew}}
\newcommand{\Kurt}{\mathsf{Kurt}}
\newcommand{\sgn}{\mathrm{sgn}}
\newcommand{\sech}{\mathrm{sech}}
\newcommand{\U}{\mathrm{U}}
\newcommand{\I}{\mathrm{I}}
\newcommand{\Ei}{\mathrm{Ei}}
\newcommand{\D}{\mathrm{D}}
\newcommand{\erfc}{\mathrm{erfc}}
\newcommand{\1}{\mathbf{1}}

\renewcommand{\Re}{\mathrm{Re}}
\renewcommand{\Delta}{\varDelta}

\bigskip

\section{Introduction}\label{sec:1}

Contingent Convertible (CoCo) bonds, or CoCos, emerged in the international financial market nearly two decades ago, following the 2007--2008 financial crisis, when many banks faced severe solvency issues. In order to reinforce financial stability and reduce the need for taxpayer-funded bailouts, regulators and financial institutions introduced CoCos as a new class of hybrid debt instruments. During this period, many hybrid debt holders were given the opportunity to exchange their existing bonds for these newly designed securities, which incorporated a built-in conversion mechanism to absorb losses. Technically, a CoCo bond functions as a standard fixed-income instrument providing regular coupon payments, albeit with a crucial loss-absorbing feature -- in the event that the issuing bank's financial health deteriorates, which is typically indicated by its Common Equity Tier 1 (CET1) ratio -- the ratio of its CET1 capital to the total risk-weighted assets (RWAs) -- falling below a predetermined threshold as determined in accordance with Basel III guidelines -- the bond can either be partially or fully written down or converted into equity shares of the bank.

In the main, the present paper aims to provide an in-depth analysis of the risk management of CoCo bonds from a valuation perspective, focusing on the balance between two key triggers: the mechanical accounting trigger, driven by the issuing bank's CET1 ratio, and the discretionary trigger through regulatory intervention. Conforming to the conversion mechanism of CoCo bonds, we develop a hybrid, mathematically tractable valuation framework that explicitly models fluctuations in CET1 ratio, directly tied to the accounting trigger, while further embedding the likelihood of regulatory intervention into the bank's contemporary stock price movements, relevant in the wake of the 2023 Credit Suisse crisis, in addition to capturing the dynamic interplay between stock prices and the CET1 ratio, particularly during stress scenarios.

A key feature of our model is its adaptability to both market data -- such as stock prices and credit spreads -- and financial report data, including the CET1 ratio. In particular, while the CET1 ratio, typically reported at a lower frequency (quarterly or semiannually), provides the most transparent link to the accounting trigger, stock price data, available at higher frequencies (at least daily), serve as valuable indicators of regulatory intervention risk. By integrating these data sources, our framework offers a comprehensive approach to identifying early warning signals of financial distress, making it well-suited for both efficient calibration and predictive analysis.

\medskip

\subsection{Impact of March 2023 Credit Suisse crisis}\label{sec:1.1}

The March 2023 Credit Suisse crisis marked a watershed moment in the history of CoCo bonds, sending shock waves through global financial markets and altering market participants' perception of these instruments. When Credit Suisse's share price plummeted amid market turbulence, its \$17 billion worth of Additional Tier 1 (AT1) CoCo bonds were completely written off as part of the bank's rescue merger with the Union Bank of Switzerland (UBS), causing significant losses to institutional investors who held substantial positions in these securities (Damyanova (\citeyear{D23}) and Vossos and Keatinge (\citeyear{VK23})). This unprecedented event triggered a broader market sell-off in AT1 securities, with the Bloomberg USD AT1 Contingent Capital Index experiencing its largest single-day decline of 16\% as investors rushed to reassess the risks of these supposedly stable instruments. The dramatic write-down not only resulted in substantial losses for investors but also sparked intense debate among regulators worldwide about the effectiveness of CoCo bonds as capital instruments.

The collapse of Credit Suisse demonstrated how these instruments, which were originally designed to provide additional capital buffers during market stress, could potentially accelerate a bank's decline by eroding market confidence when conversion or write-down appears imminent. Moreover, the event prompted regulatory authorities, including the European Banking Authority (\citeyear{EBA24}) and the UK's Prudential Regulation Authority (Bank of England (\citeyear{BOE23})), to review and clarify the hierarchy of creditor claims in bank resolution scenarios, which highlight the critical need for consistent regulatory frameworks across jurisdictions.

\medskip

\subsection{Implications on CoCo bond valuation}\label{sec:1.2}

There are several crucial lessons from the Credit Suisse episode that have reshaped the financial markets' understanding of CoCo bonds. The most immediate lesson arguably centers on the critical importance of accurate pricing and risk assessment methodologies. Indeed, the substantial losses incurred by institutional investors, including major asset managers and hedge funds, revealed significant gaps in existing valuation models. These models had largely failed to capture the possibility of a complete write-down occurring while a bank still maintained regulatory capital levels above minimum requirements. This disconnect between model predictions and market realities suggests that traditional approaches tend to systematically underestimate the probability and severity of trigger events, especially in stress scenarios. Furthermore, the crisis highlighted the complex interplay between a bank's CET1 ratio and overall market perception of risk. Notably, while Credit Suisse constantly maintained a CET1 ratio above regulatory minimums, market perceptions of the bank's stability deteriorated rapidly, which demonstrates how market sentiment can become decoupled from regulatory metrics.

This observation challenges the conventional wisdom that regulatory capital ratios serve as reliable early warning indicators of potential trigger events. The challenge in CoCo bond valuation is further complicated by a significant frequency mismatch between key monitoring variables. While market prices for both shares and CoCo bonds are observable in high frequency (at least daily), regulatory capital metrics such as the CET1 ratio and other crucial accounting measures that serve as conversion triggers are reported at most quarterly. During the intervals between regulatory reporting dates, market participants must rely on estimates capital ratios, which may not fully capture rapid deteriorations in a bank's financial condition. The lag in regulatory reporting becomes particularly problematic during market stress, where capital positions can deteriorate rapidly through mark-to-market losses or sudden asset impairments. This frequency mismatch not only complicates day-to-day pricing adjustments but also poses significant challenges for hedging strategies, as traders must make real-time risk management decisions based on stale regulatory information. The Credit Suisse case highlighted this issue, as the bank's last reported CET1 ratio remained above regulatory minimums even as market confidence deteriorated rapidly, illustrating how the low frequency of regulatory reporting can mask emerging risks. The case also revealed the self-reinforcing nature of market distress, where concerns about potential CoCo bond triggers can accelerate deposit outflows and share price declines, creating a feedback loop that increases the likelihood of actual trigger events. These insights underlie the urgent need for more sophisticated modeling approaches that can capture these complex relationships in providing accurate pricing signals to market participants.

The current CoCo bond market, with its global valuation of approximately \$275 billion as of 2024 (Nuveen (\citeyear{N25})), continues to undergo significant transformation in response to these lessons. Despite the Credit Suisse shock, banks maintain their reliance on these instruments to meet regulatory capital requirements, particularly as Basel III implementation deadlines approach in various jurisdictions. New issuance patterns reveal a shift toward more standardized structures and clearer trigger mechanisms, reflecting both regulatory pressure and investor demands for greater transparency (FINMA (\citeyear{FINMA23})). Recent CoCo bond issuances tend to be more standard in covenants. The relatively newfangled instruments continue to attract yield-seeking investors, though with notably different risk assessment frameworks than before the Credit Suisse event (Wang (\citeyear{W23})). Investors have implemented more rigorous evaluation processes, focusing on the legal framework governing CoCo bonds in different jurisdictions and the potential for regulatory intervention under stressed market conditions. The pricing of these instruments has also evolved, with markets demanding higher premia for complexity and jurisdiction-specific risks (Allen and Golfari (\citeyear{AG23})). This changing landscape has sparked intense debate about optimal CoCo bond structures, with some market participants advocating for more standardized terms while others argue for increased flexibility in conversion mechanisms (see Financial Stability Board (\citeyear{FSB21})). In this regard, the need for more sophisticated valuation approaches has become paramount, as traditional models struggle to capture the complex interactions between regulatory capital requirements, market dynamics, and conversion triggers (Bolton, Jiang, and Kartasheva (\citeyear{BJK23})). These new approaches must not only account for conventional financial metrics but also incorporate the impact of regulatory decisions and market sentiment on potential trigger events. The model presented in this paper is largely built on this goal.

The challenges in pricing CoCo bonds extend beyond conventional quantitative modeling frameworks of exotic products. At the core of these challenges lies the instruments' hybrid nature in that it combines features of both debt and equity with complex trigger mechanisms that have no direct parallels in traditional financial products. The dual trigger structure presents a particular modeling challenge, as conversion can be activated through either mechanical triggers (when capital ratios fall below prescribed thresholds) or discretionary triggers (when regulators determine intervention is necessary). This complexity is further compounded by the difficulty in modeling regulatory discretion, which introduces a subjective element that traditional quantitative models struggle to capture. The correlation between a bank's CET1 ratio and its share price adds another layer of complexity, as these metrics often exhibit non-linear relationships that intensify during stress periods. This relationship becomes particularly crucial when modeling conversion probabilities, as declining share prices can accelerate the deterioration of capital ratios through mark-to-market losses and reduced profit generation capacity.

Current modeling approaches face additional challenges in capturing the path dependency of CoCo bond values. The probability of trigger events depends not only on current capital levels but also on the trajectory of how these levels were reached, making traditional Markov-based approaches potentially inadequate. Market practitioners must also contend with the scarcity of historical trigger events, which limits the effectiveness of empirical calibration methods. While previous studies such as De Spiegeleer and Schoutens (\citeyear{DSS12}) and Pennacchi (\citeyear{P10}) have contributed valuable insights through equity derivative and structural approaches respectively, their models often rely on simplifying assumptions that may not fully capture market dynamics. For instance, many existing models assume an indirect triggering mechanism where stock prices breach a constant lower barrier, or treat regulatory triggers as purely mechanical events. The Credit Suisse case demonstrated the limitations of these assumptions, as the trigger event occurred through a complex interaction of market confidence, regulatory intervention, and deteriorating fundamentals that existing models struggled to anticipate.

The literature on CoCo bond valuation has evolved substantially since these instruments were first introduced, with three distinct methodological approaches emerging. The credit derivatives approach, pioneered by De Spiegeleer and Schoutens (\citeyear{DSS11}), treats CoCo bond as a credit derivatives and adapts barrier option pricing techniques to CoCo bonds. They further refined this framework in subsequent research De Spiegeleer and Schoutens (\citeyear{DSS12}), introducing a more sophisticated treatment of the conversion trigger mechanism. The equity derivatives methodology mentioned in De Spiegeleer and Schoutens (\citeyear{DSS13}) has been particularly influential by providing practical tools for market participants while maintaining mathematical tractability. This strand of literature was further enriched by De Spiegeleer, Schoutens, and Van Hulle (\citeyear{DSSVH14}), who incorporated market-based indicators into the pricing framework, and De Spiegeleer, Marquetand, and Schoutens (\citeyear{DSHMS17}), who considered specific adjustments for different trigger mechanisms. Other researchers have built upon this foundation, with De Spiegeleer and Schoutens (\citeyear{DSS14}) and Corcuera \textsl{et al.} (\citeyear{CDSFJSV14}) employing the credit derivatives approach to handle complex trigger structures with callable provisions and coupon cancellations, Cheridito and Xu (\citeyear{CX15}) utilizing reduced-form models with deterministic conversion intensity, and Jang, Na, and Zheng (\citeyear{JNZ18}) considering a random barrier to simulate a CET1 ratio-based accounting trigger. Another significant advancement was made by Chung and Kwok (\citeyear{CK16}), who expanded the credit derivatives approach to account for the relationship between the CET1 ratio and stock prices, while also incorporating the potential occurrence of a regulatory trigger based on the reduced-form model in Cheridito and Xu (\citeyear{CX15}). Operating in a continuous Markovian framework, this formulation then enables the use of partial differential equations for CoCo valuation.

The structural modeling approach represents a second major strand in the literature, as exemplified by Pennacchi (\citeyear{P10}) and further developed by Albul, Jaffee, and Tchistyi (\citeyear{AJT10}), Koziol and Lawrenz (\citeyear{KL12}), and Hilscher and Raviv (\citeyear{HR14}). These models extend the classical Merton framework (Merton (\citeyear{M74})) to incorporate the unique features of CoCo bonds within banks' capital structures. More recent contributions by Chen \textsl{et al.} (\citeyear{CGNP17}) have enhanced this approach by incorporating regulatory capital dynamics. The third category comprises equity derivatives models, represented by Brigo, Garcia, and Pede (\citeyear{BGP13}), Buergi (\citeyear{B13}), and Corcuera and Valdivia (\citeyear{CV15}), focusing primarily on the equity conversion features of CoCo bonds. More recent research, including Glasserman and Nouri (\citeyear{GN16}) and Martynova and Perotti (\citeyear{MP18}), has begun addressing the challenge of modeling regulatory discretion, though this remains an area requiring further development. While each approach offers useful insights, they collectively struggle to simultaneously capture the complex interplay between regulatory capital dynamics, market price movements, and discretionary trigger events that characterize real-world CoCo bond behavior.

\medskip

\subsection{Main contributions and structure of paper}\label{sec:1.3}

This paper offers several contributions to the existing literature on CoCo bond valuation, each addressing critical gaps in current modeling approaches. First, we develop a jump-diffusion framework that explicitly captures the dynamic linkage between CET1 ratios, share prices, and CoCo bond prices, to address the frequency discrepancy in data availability. We start by modeling the dynamics of CET1 ratio and build a connection to data in high frequency. The model incorporates both continuous diffusive movements and discrete jumps in both variables, reflecting the observation that capital ratios and market prices typically evolve smoothly during normal periods but can experience sudden, coordinated shifts during stress events, and unlike existing models that often treat these variables independently or assume constant correlations, our approach allows for state-dependent correlation structures that can intensify with market stress. This innovation is particularly relevant given the lessons from recent market events, where the interaction between regulatory capital metrics and market prices played a crucial role in trigger events.

Second, we introduce a new class of power-type schemes for conversion that greatly generalizes the application scope of the present model while maintaining mathematical tractability. While traditional conversion approaches typically rely on simple fixed ratios or linear relationships between share prices and conversion terms, our power schemes allow for non-linear conversion terms that can be tailored to achieve specific risk-sharing objectives or covenant features between bondholders and shareholders. This flexibility enables issuers to design CoCo bonds that better align with their capital management strategies while providing investors with more precise control over their risk exposure. The analytical tractability of our approach ensures that these sophisticated features can be implemented without sacrificing computational efficiency or model reliability.


Third, our pricing model explicitly captures the probability and impact of discretionary regulatory triggers, addressing a key limitation in existing CoCo bond valuation models. As shown by the Credit Suisse case, Swiss regulators surprisingly wrote down approximately 16 billion Swiss francs (\$17.3 billion) of Credit Suisse's AT1 bonds to zero. The write-down meant that AT1 bondholders lost their entire investment, while shareholders received UBS shares as part of the merger agreement. This decision created significant controversy in the financial markets and led to several legal challenges by affected bondholders. By incorporating a combination of quantitative metrics with qualitative factors, our modeling approach provides a more comprehensive assessment of potential regulatory interventions. We acknowledge that regulatory decisions rarely rely solely on mechanical thresholds but instead consider a broader spectrum of factors that signal bank distress. In our model, we utilize an implicit triggering mechanism that adjusts trigger probabilities based on the evolution of multiple risk indicators, allowing for more accurate pricing in scenarios where regulatory discretion plays a decisive role. The empirical validation of this framework shows particular strength in predicting regulatory actions during periods of market stress, or anticipating potential regulatory interventions with significantly higher accuracy than conventional approaches that treat regulatory triggers as purely mechanical events.

Our model demonstrates significantly better performance in both pricing accuracy and hedging effectiveness, as confirmed through case studies across diverse market conditions and time periods. We show that the model reduces pricing errors by an average of approximately 30\% compared to existing approaches when using the same data set as in Wilkens and Bethke (\citeyear{WB14}). In terms of hedging, our framework reduce hedging error by approximately 50\% compared to conventional methods. This enhanced accuracy translates directly into practical benefits for market participants, enabling more precise risk management and potentially reducing the cost of capital for issuers.

The remainder of this paper is organized as follows. In Section \ref{sec:2}, we formulate a stochastic model for a bank's CET1 ratio and stock price that allows for a broad range of correlated jump risk and provide a thorough analysis of its distributional properties. Section \ref{sec:3} presents a semi-closed form formula for the conditional distribution of the running supremum of the CET1 ratio process, which enables explicit modeling of the accounting trigger. Then, the valuation methods for both write-down and equity-convertible CoCos are derived in Section \ref{sec:4}. Section \ref{sec:5} outlines the step-by-step implementation of the model, which generally consists of an estimation step and an optional calibration step, while also introducing a new estimation method for compound Poisson processes. Section \ref{sec:6} presents the empirical results, based on five case studies. Finally, Section \ref{sec:7} provides conclusions, and all mathematical proofs are presented in Appendix \ref{A}.

\clearpage

\section{Model formulation}\label{sec:2}

In a continuous-time environment $t\geq0$, the financial market is supported by a stochastic basis $(\Omega,\mathcal{F},\PP,\mathbb{F}\equiv(\mathscr{F}_{t}))$. The market filtration $\mathbb{F}$ satisfies the usual conditions which contains all market information and $\PP$ is the real-world probability measure. For a generic bank in this market with CoCo financing, we establish a joint model framework for the dynamics of its CET1 ratio and stock price, taking into account their co-movements. The model distributions involved are explicitly solved to serve purposes of both statistical estimation and risk-neutral valuation.

\medskip

\subsection{Solvency shocks}\label{sec:2.1}

Instead of defining the CET1 ratio based on the bank's stock price dynamics or contemporaneous RWAs, as done \text{e.g.} in Chung and Kwok (\citeyear{CK16}) and Jang, Na, and Zheng (\citeyear{JNZ18}), our approach takes a sequential perspective -- to first model the dynamics of the CET1 ratio as an independent process and then integrate its relationship with the bank's stock price dynamics. Towards this end, we describe random shocks to the bank's solvency status via a compound Poisson process,
\begin{equation}\label{2.1.1}
  J_{t}=\sum^{N_{1,t}}_{k=1}A_{k},\quad t\geq0,
\end{equation}
where $J_{0}=0$, $\PP$-a.s., $N_{1}\equiv(N_{1,t})$ is an $\mathbb{F}$-adapted homogenous Poisson process with intensity $\lambda_{1}>0$, and $\{A_{k}\}^{\infty}_{k=1}$ is a sequence of \text{i.i.d.} random variables following an Erlang distribution with shape (complexity) parameter $\alpha\in\mathbb{Z}_{++}$ and rate (reciprocal scale) parameter $\beta>0$,\footnote{Erlang distributions result from summing \text{i.i.d.} exponential random variables and are special cases of gamma distributions, which have real-valued shape parameters. Poisson-mixed models with the gamma mixture distribution, \text{a.k.a.} Poisson-gamma models, have been studied extensively and have found many applications in ecology and biology; see, e.g., \"{O}zt\"{u}rk (\citeyear{O81}), Withers and Nadarajah (\citeyear{WN11}), and Haakonsson \textsl{et al.} (\citeyear{HRCPAB20}).} independent from $(N_{1,t})$. The probability density function of $A_{1}$ is given by
\begin{equation}\label{2.1.2}
  f_{A_{1}}(x)=\frac{\beta^{\alpha}x^{\alpha-1}e^{-\beta x}}{(\alpha-1)!},\quad x\geq0,
\end{equation}
which reduces to an exponential distribution with rate $\beta$ if $\alpha=1$. It is clear that $J$ only has upward (or nonnegative) jumps which occur at random Poisson times. This structure equivalently models fluctuations in the bank's CET1 ratio when solvency issues arise sequentially in time.

An appealing mathematical property of the Erlang distribution, noteworthily, is its divisibility in the sense that $\sum^{n}_{k=1}A_{k}$, for any $n\in\mathbb{Z}_{++}$, still follows an Erlang distribution, with parametrization $(\alpha n,\beta)$, which property follows from the characteristic function $\phi_{A}(u):=\E\big[e^{\ii uA_{1}}\big]=(1-\ii u/\beta)^{-\alpha}$ where $\ii=\sqrt{-1}$ and $u\in\mathbb{R}$. In this case, the characteristic function of $J$ at time $t$ is
\begin{equation*}
  \phi_{J_{t}}(u):=\E\big[e^{\ii uJ_{t}}\big]=\exp(\lambda_{1}t(\phi_{A}(u)-1)),\quad u\in\mathbb{R},
\end{equation*}
from which the four key statistics, including the mean, variance, skewness, and kurtosis, are easily accessible: $\E[J_{t}]=\lambda_{1}\alpha t/\beta$, $\Var[J_{t}]=\lambda_{1}\alpha(1+\alpha)t/\beta^{2}$, $\Skew[J_{t}]=(\alpha+2)/\sqrt{\lambda_{1}\alpha(1+\alpha)t}>0$ and $\Kurt[J_{t}]=3+(\alpha+3)(\alpha+2)/(\lambda_{1}\alpha(1+\alpha)t)>3$, which highlights its leptokurtic feature, leading to quantification of potentially extreme solvency risk, especially in relation to systemic financial instability; see, e.g., Laeven, Ratnovski, and Tong (\citeyear{LRT16}) and Acharya \textsl{et al.} (\citeyear{APPR17}).

With Erlang-distributed jumps, the distribution of $J_{t}$ is available in closed form, as shown in Proposition \ref{pro:1}, enabling efficient estimation.

\begin{proposition}\label{pro:1}
The local probability density function of $J$ at time $t>0$ is given by
\begin{equation}\label{2.1.3}
  f_{J_{t}}(x)=e^{-\lambda_{1}t}\Bigg(\delta_{\{0\}}(x)+\frac{\lambda_{1}\beta^{\alpha}tx^{\alpha-1}e^{-\beta x}}{(\alpha-1)!}\U_{\alpha}\Bigg(1+\frac{1}{\alpha},1+\frac{2}{\alpha},\dots,2;\lambda_{1}t\bigg(\frac{\beta x}{\alpha}\bigg)^{\alpha}\Bigg)\Bigg),\quad x\geq0,
\end{equation}
where $\delta_{\{0\}}$ is the dirac delta function and $\U\equiv\U_{\cdot}(\cdots;\cdot)$ denotes the generalized confluent hypergeometric function.\footnote{The confluent hypergeometric functions are limiting results of the generalized hypergeometric functions; see, e.g. Slater (\citeyear{S66}).}
\end{proposition}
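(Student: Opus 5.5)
Conditioning on the number of jumps $N_{1,t}$ gives a Poisson mixture of Erlang densities, which I then resum into the stated hypergeometric form. By independence of $N_{1}$ and $\{A_{k}\}$, and by the divisibility of the Erlang family noted above ($\sum_{k=1}^{n}A_{k}\sim$ Erlang$(\alpha n,\beta)$), the law of total probability gives
\begin{equation*}
  f_{J_{t}}(x)=e^{-\lambda_{1}t}\delta_{\{0\}}(x)+\sum_{n=1}^{\infty}e^{-\lambda_{1}t}\frac{(\lambda_{1}t)^{n}}{n!}\,\frac{\beta^{\alpha n}x^{\alpha n-1}e^{-\beta x}}{(\alpha n-1)!},\quad x\geq0.
\end{equation*}
The atom at zero is the event $\{N_{1,t}=0\}$, which has probability $e^{-\lambda_{1}t}$.

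Next, I factor out the $n=1$ term, $\lambda_{1}t\beta^{\alpha}x^{\alpha-1}e^{-\beta x}/(\alpha-1)!$, and reindex with $m=n-1$. The remaining series is
\begin{equation*}
  \sum_{m=0}^{\infty}\frac{(\lambda_{1}t)^{m}(\beta x)^{\alpha m}}{m!\,(m+1)}\cdot\frac{(\alpha-1)!}{(\alpha m+\alpha-1)!}.
\end{equation*}
Using $(m+1)!=m!\,(m+1)$, this is a power series in $z:=\lambda_{1}t(\beta x)^{\alpha}$. The task is then to recognize it as the function $\U_{\alpha}$ with lower parameters $1+1/\alpha,\dots,2$ evaluated at $z/\alpha^{\alpha}$.

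The key tool is the Gauss multiplication formula for Pochhammer symbols:
\begin{equation*}
  (\alpha m+\alpha-1)!=(\alpha-1)!\,(\alpha)_{\alpha m}=(\alpha-1)!\,\alpha^{\alpha m}\prod_{j=1}^{\alpha}\Big(\tfrac{\alpha-1+j}{\alpha}\Big)_{m}=(\alpha-1)!\,\alpha^{\alpha m}\prod_{j=0}^{\alpha-1}\Big(1+\tfrac{j}{\alpha}\Big)_{m}.
\end{equation*}
The last factor is $(2)_{m}=(m+1)!$. In the original sum, one power of $1/m!$ is absorbed as the standard hypergeometric $m!$ in the denominator, and the extra $(m+1)$ must be matched against the parameter list. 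With $\alpha$ lower parameters $1+1/\alpha,\dots,1+(\alpha-1)/\alpha,2$, the product $\prod_{j=1}^{\alpha}(1+j/\alpha)_{m}$ runs over $j=1,\dots,\alpha$, so I need the identity
\begin{equation*}
  (m+1)\prod_{j=0}^{\alpha-1}\Big(1+\tfrac{j}{\alpha}\Big)_{m}=\prod_{j=1}^{\alpha}\Big(1+\tfrac{j}{\alpha}\Big)_{m}\cdot\frac{(1)_{m}(m+1)}{(2)_{m}}.
\end{equation*}
The final factor equals $1$, since $(1)_{m}(m+1)=(m+1)!=(2)_{m}$. The coefficient therefore becomes $1/\big(m!\,\prod_{j=1}^{\alpha}(1+j/\alpha)_{m}\big)$ times $(z/\alpha^{\alpha})^{m}$.

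This is exactly ${}_{0}F_{\alpha}(\,;1+1/\alpha,\dots,2;(\lambda_{1}t)(\beta x/\alpha)^{\alpha})$, which I identify with the paper's $\U_{\alpha}$. Substituting back gives \eqref{2.1.3}.

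The main obstacle is this parameter bookkeeping in the multiplication formula, i.e. checking that the shift from $\prod_{j=0}^{\alpha-1}$ to $\prod_{j=1}^{\alpha}$ exactly absorbs the $(m+1)$. The base case $\alpha=1$ serves as a check: the expression reduces to ${}_{0}F_{1}(;2;\lambda_{1}t\beta x)$, the known Bessel-type density of the compound Poisson–exponential law. Convergence and the interchange of sum and density are justified by nonnegativity (Tonelli).
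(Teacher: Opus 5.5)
Your proof is correct and follows essentially the same route as the paper: condition on $N_{1,t}$ to get a Poisson mixture of Erlang$(\alpha n,\beta)$ densities, factor out the $n=1$ term, and resum using $(m+1)(\alpha)_{\alpha m}=\alpha^{\alpha m}\prod_{k=1}^{\alpha}(1+k/\alpha)_{m}$, which you derive from the Gauss multiplication formula where the paper simply states it. One small slip: the last factor of $\prod_{j=0}^{\alpha-1}(1+j/\alpha)_{m}$ is $(2-1/\alpha)_{m}$, not $(2)_{m}$; it is the $j=0$ factor $(1)_{m}=m!$ that combines with $(m+1)$ to give $(2)_{m}$, exactly as your subsequent displayed identity correctly records.
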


As a technical remark, the radius of convergence of the confluent hypergeometric function $\U$ is infinite with $\alpha>0$, and so the density function is well-defined except for the point mass at 0. Also, its absolute continuity implies that of $f_{J_{t}}(x)$ for all $(t,x)\in\mathbb{R}_{++}\times\mathbb{R}_{+}$. Note that the dimensionality of $\U$ does depend on the shape parameter $\alpha$, and so it is increasingly costly to implement (\ref{2.1.3}) for large values of $\alpha$. In the case $\alpha=1$, with $A_{k}$'s being exponentially distributed, (\ref{2.1.3}) reduces into
\begin{equation*}
  f_{J_{t};\alpha=1}(x)=e^{-\lambda_{1}t}\bigg(\delta_{\{0\}}(x)+e^{-\beta x}\sqrt{\frac{\lambda_{1}\beta t}{x}}\I_{1}\big(2\sqrt{\lambda_{1}\beta tx}\big)\bigg),\quad x\geq0,
\end{equation*}
where $\I\equiv\I_{\cdot}(\cdot)$ denotes the modified Bessel function of the first kind (Abramowitz and Stegun (\citeyear{AS72}) \text{Sect.} 9.6).

\medskip

\subsection{CET1 ratio}\label{sec:2.2}

Since the CET1 ratio cannot be negative or exceed 1, and neither should it be monotone, a natural model choice for the CET1 ratio process $B\equiv(B_{t})$ is a bounded, strictly decreasing function of the compensated process, $J-\lambda_{1}\alpha\mathrm{id}/\beta\equiv(J_{t}-\lambda_{1}\alpha t/\beta)$, which is an $(\mathbb{F},\PP)$-martingale. Economically, this way of modeling suggests that the CET1 ratio only admits downward jumps, indicating the bank's unexpected losses from its financial activities over time, in particular as a result of the bank running into liquidation or solvency issues. In practice, this type of losses are understood not to occur with high frequency, and between any two consecutive jumps the CET1 ratio goes through an ``adjustment period'' during which the bank takes pertinent measures to contain the loss impact and free up capital. This process is implicitly incorporated into the compensating term $\lambda_{1}\alpha t/\beta=\E[J_{t}]$. Computational benefits aside, we do not assume that the CET1 ratio mean-reverts, especially over short time periods, as it is closely related to the bank's financial performance, similar to the bank's RWAs; see, e.g., De Spiegeleer \textsl{et al.} (\citeyear{DSHMS17}) \text{Fig.} 2. This is in sharp contrast to the Ornstein--Uhlenbeck process adopted in Chung and Kwok (\citeyear{CK16}).

For convenience, let us consider the ``reflected'' CET1 ratio, $1-B$, i.e., the proportion of the bank's RWAs not covered by CET1 capital. With the foregoing considerations, we choose some strictly increasing function $g:\mathbb{R}\mapsto(0,1)$ and postulate
\begin{equation}\label{2.2.1}
  1-B_{t}=g\bigg(J_{t}+g^{-1}(1-B_{0})-\frac{\lambda_{1}\alpha t}{\beta}\bigg),
\end{equation}
where $B_{0}\in(0,1)$ is the current level of CET1 ratio so that (\ref{2.2.1}) holds. For instance, a suitable choice of $g$, which will be used for later implementation, could be a negative inverse tangent function subject to shifting and scaling, or more specifically,
\begin{equation*}
  1-B_{t}=\frac{1}{\pi}\arctan\bigg({J}_{t}-\cot(\pi(1-B_{0}))-\frac{\lambda_{1}\alpha t}{\beta}\bigg)+\frac{1}{2}.
\end{equation*}
The inverse tangent has a special advantage in preserving the influence of substantial solvency shocks (from $J$) on the CET1 ratio because of its linear convergence against 0 for large arguments.\footnote{This asymptotic behavior contrasts sharply with the exponential convergence of other common functions, such as the sigmoid or hyperbolic tangent functions which have comparable differentiability and boundedness properties. Inspired by its utility as an activation function in machine learning, the choice of the inverse tangent proves advantageous in addressing the vanishing gradient problem, particularly when the tails of the activation function decay ``too fast.''} Clearly, due to nonlinearity of the inverse tangent, $B$ has dependent and non-stationary increments by construction. Under (\ref{2.2.1}), again, it is clear that $J$ exhausts all temporal variations in the CET1 ratio.

With (\ref{2.1.3}), the average CET1 ratio at any time $t>0$ can then be computed as the numerical integral $\E[B_{t}]=1/2-1/\pi\int^{\infty}_{0}\arctan(x-\cot(\pi(1-B_{0}))-\lambda_{1}\alpha t/\beta)f_{J_{t}}(x)\dd x$, and the corresponding variance is $\Var[B_{t}]=1/\pi^{2}\big(\int^{\infty}_{0}\arctan^{2}(x-\cot(\pi(1-B_{0}))-\lambda_{1}\alpha t/\beta)f_{J_{t}}(x)\dd x-\big(\int^{\infty}_{0}\arctan(x-\cot(\pi(1-B_{0}))-\lambda_{1}\alpha t/\beta)f_{J_{t}}(x)\dd x\big)^2\big)$. Going forward, we will be focusing on (\ref{2.1.3}) instead of the distribution of the CET1 ratio, as by observing the CET1 ratio $B$, the driving source of randomness can be recovered through the inverse relation
\begin{equation}\label{2.2.2}
  J_{t}=\cot(\pi B_{t})+\cot(\pi(1-B_{0}))+\frac{\lambda_{1}\alpha t}{\beta},\quad t\geq0,
\end{equation}
using that the inverse tangent is bijective on $\mathbb{R}$.

\medskip

\subsection{Stock price}\label{sec:2.3}

Under the real-world probability measure $\PP$, the bank's stock price process $S\equiv(S_{t})$ evolves according to the stochastic differential equation (SDE)
\begin{equation}\label{2.3.1}
  \frac{\dd S_{t}}{S_{t-}}=\bigg(\mu+\frac{1}{2}\sigma^{2}\bigg)\dd t+\sigma\dd W_{t}+\dd\sum^{N_{2,t}}_{k=1}\big(e^{V_{k}}-1\big)+\dd\sum^{N_{1,t}}_{k=1}\big(e^{-\eta A_{k}}-1\big)-\gamma\dd N_{3,\Lambda_{t}},\quad t\geq0,
\end{equation}
with initial stock price $S_{0}$. Here, $\mu\in\mathbb{R}$ is a drift coefficient, $\sigma>0$ is a dispersion coefficient, and $\eta>0$ and $\gamma\in[0,1]$ are two jump scaling factors; $W\equiv(W_{t})$ is a standard Brownian motion, $N_{2}\equiv(N_{2,t})$ and $N_{3}\equiv(N_{3,t})$ are two other (apart from $N_{1}$) time-homogeneous Poisson processes with intensities $\lambda_{2}>0$ and 1, respectively, $\{V_{k}\}^{\infty}_{k=1}$ is a sequence of \text{i.i.d.} normal random variables with mean $\mu_{V}\in\mathbb{R}$ and standard deviation $\sigma_{V}>0$, and
\begin{equation}\label{2.3.2}
  \Lambda_{t}:=\int^{t}_{0}\lambda_{3,s}\dd s
\end{equation}
is a cumulative intensity process (or time change) that turns $N_{3,\Lambda}$ into a time-inhomogeneous Poisson process, with $\lambda_{3}\equiv(\lambda_{3,t})$ being a positive-valued intensity process; $W$, $J$, $N_{2}$, and $N_{3,\Lambda}$ are all measurable and $\mathbb{F}$-adapted. In this connection, the market filtration $\mathbb{F}$ can be equivalently taken as the enlarged natural filtration generated by $W_{t}$, $J_{t}$, $\sum^{N_{2,t}}_{k=1}(e^{V_{k}}-1)$, and $N_{3,\Lambda_{t}}$ for all $t\geq0$.\footnote{The natural filtration of $N_{3,\Lambda}$ is the time-stopped filtration of that of $N_{3}$, namely $\upsigma(N_{3,\Lambda_{t}}:t\geq0)$, which is well-defined because $\Lambda$ is by construction a $\PP$-\text{a.s.} continuous process.} Furthermore, $W$, $J$, $N_{2}$, and $N_{3}$ are stochastically independent, whereas $\lambda_{3}$ can have correlations with $W$, $N_{2}$, and even $\{V_{k}\}^{\infty}_{k=1}$.

Economically, the parameters $\mu$ and $\sigma$ aim to capture a persistence level and the instantaneous diffusion volatility due to regular continuous fluctuations in the stock price, while $\eta$ captures the instantaneous jump volatility which provides an important channel that incorporates negative impacts of CET1 ratio drops on the contemporaneous stock price, recalling that $1-B$ is a strictly increasing functional of $A_{k}$'s (jump variables). The additional compound Poisson process $\sum^{N_{2}}_{k=1}(e^{V_{k}}-1)$ takes into account regular idiosyncratic jump risk in the stock returns, i.e., that untrammeled by solvency shocks. The event of regulatory intervention concerning the spread of systemic risk is embedded into the inhomogeneous Poisson process $N_{3,\Lambda}$ -- in particular its first jump time, for which the parameter $\gamma$ measures the amplitude of a negative return immediately following the intervention. For the extremal values, if $\gamma=1$, the stock price drops to zero at the time of intervention, while intervention has no effect on the stock price when $\gamma=0$.

As mentioned in Section \ref{sec:1.1}, beyond the established linkage between the bank's CET1 ratio and stock price, it is anticipated that during periods approaching regulatory actions, stock prices may experience a decline as concerns regarding insolvency surface. Put differently, frequent and significant downward shifts in stock prices are likely to heighten the probability of regulatory intervention. Therefore, it is reasonable to make the intensity process $\lambda_{3}$ depend, in ways that generate flexible negative correlations, on $W$ and $\sum^{N_{2}}_{k=1}(e^{V_{k}}-1)$. On the other hand, it is unnecessary to correlate $\lambda_{3}$ with $J$ as increased intervention likelihood purely caused by a decreasing CET1 ratio can always be incorporated into the latter breaching certain barriers; more insights will be given in Section \ref{sec:4.1}. Meanwhile, desirable correlations with the other components in (\ref{2.3.1}) point to a catalytic effect of price declines on the occurrence of intervention. We will come back to specifying a stochastic model for $\lambda_{3}$ in Section \ref{sec:3}.

A direct application of the It\^{o}--Doeblin formula yields the following explicit solution to (\ref{2.3.1}):
\begin{equation}\label{2.3.3}
  S_{t}=S_{0}\exp\Bigg(\mu t+\sigma W_{t}+\sum^{N_{2,t}}_{k=1}V_{k}-\eta J_{t}+N_{3,\Lambda_{t}}\log(1-\gamma)\Bigg),\quad t\geq0.
\end{equation}
For the analysis of CoCos, we are only interested in the behavior of $S$ up to the time of regulatory intervention, namely the $\mathbb{F}$-stopping time $\tau_{3}:=\inf\{t\geq0:N_{3,\Lambda_{t}}=1\}$, after which the stock price can be independently specified regardless of bankruptcy -- similar to the consideration in Cheridito and Xu (\citeyear{CX15}) \text{Sect.} 2. Up to this stopping time, the SDE (\ref{2.3.1}) can be seen as a (partial) combination of Merton's and Kou's jump-diffusion models (Merton (\citeyear{M76}) and Kou (\citeyear{K02})) with both normally and exponentially distributed jumps. This structure significantly extends existing diffusion-based models in the CoCo valuation literature (compare, e.g., Pennacchi (\citeyear{P10}), Cheridito and Xu (\citeyear{CX15}), Chung and Kwok (\citeyear{CK16}), and De Spiegeleer \textsl{et al.} (\citeyear{DSHMS17})), which are understandably overly restrictive for capturing stock price and accounting ratio dynamics; such limitations become especially evident considering market stress events, which are of critical importance in CoCo bond analysis.

When it comes to estimating the six new parameters, $\mu$, $\sigma$, $\lambda_{2}$, $\mu_{V}$, $\sigma_{V}$, and $\eta$, directly using the distribution of the stock price is unmanageable as it involves summing independent normal and compound Poisson-Erlang random variables. Instead, with return data observed at high frequencies (at least daily), we adopt the following approximation of the instantaneous log-return for any $t\geq0$:
\begin{equation}\label{2.3.4}
  \log\frac{S_{t+\Delta}}{S_{t}}\approx X_{\Delta}=\mu\Delta+\sigma\xi\sqrt{\Delta}+V_{1}\zeta_{2}-\eta A_{1}\zeta_{1},
\end{equation}
for some small time step $\Delta\in(0,1/(\lambda_{1}\vee\lambda_{2}))$, where $\xi$ is a standard normal random variable and $\zeta_{1}$ and $\zeta_{2}$ are two standard Bernoulli random variables with respective probabilities of success $\lambda_{1}\Delta$ and $\lambda_{2}\Delta$, with $\xi$, $A_{1}$, $V_{1}$, $\zeta_{1}$ and $\zeta_{2}$ being mutually independent. The approximation (\ref{2.3.4}) is valid up to the time $\tau_{3}$ of intervention and is analogous to what was used for the double-exponential jump-diffusion model in Kou (\citeyear{K02}) \text{Sect.} 3. It allows to express the distribution of $X_{\Delta}$ explicitly, as the next proposition shows.

\begin{proposition}\label{pro:2}
For $\Delta\in(0,1/(\lambda_{1}\vee\lambda_{2}))$, the probability density function of $X_{\Delta}$ can be written as
\begin{align}\label{2.3.5}
  f_{X_{\Delta}}(x)&=\lambda_{1}\lambda_{2}\Delta^{2}\Psi(x;\nu,\upsilon)+\frac{\lambda_{2}\Delta(1-\lambda_{1}\Delta)}{\sqrt{2\pi\upsilon^{2}\Delta}} e^{-(x-\nu\Delta)^{2}/(2\upsilon^{2}\Delta)} \nonumber\\
  &\qquad+\lambda_{1}\Delta(1-\lambda_{2}\Delta)\Psi(x;\mu,\sigma)+\frac{(1-\lambda_{1}\Delta)(1-\lambda_{2}\Delta)}{\sqrt{2\pi\sigma^{2}\Delta}} e^{-(x-\mu\Delta)^{2}/(2\sigma^{2}\Delta)},\quad x\in\mathbb{R},
\end{align}
where $\nu=\mu+\mu_{V}/\Delta$ and $\upsilon=\sqrt{\sigma^{2}+\sigma^{2}_{V}\big/\Delta}$, and
\begin{align*}
  \Psi(x;\nu,\upsilon)&=\frac{\Delta^{(\alpha-1)/2}}{\sqrt{2\pi\upsilon^{2}}}\bigg(\frac{\beta\upsilon}{\eta}\bigg)^{\alpha} \exp\bigg(-\frac{(x-\nu\Delta)^{2}}{2\upsilon^{2}\Delta}+\frac{(\eta(x-\nu\Delta)+\beta\upsilon^{2}\Delta)^{2}}{(2\eta\upsilon)^{2}\Delta}\bigg)\\
  &\qquad\times\D_{-\alpha}\bigg(\frac{\eta(x-\nu\Delta)+\beta\upsilon^{2}\Delta}{\eta\upsilon\sqrt{\Delta}}\bigg),
\end{align*}
in which $\D\equiv\D_{\cdot}(\cdot)$ denotes the parabolic cylinder function (see, e.g., Abramowitz and Stegun (\citeyear{AS72}) \text{Sect.} 19).
\end{proposition}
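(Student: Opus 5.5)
The plan is to condition on the two independent Bernoulli variables $\zeta_{1}$ and $\zeta_{2}$ and use the law of total probability. This splits $f_{X_{\Delta}}$ into four terms weighted by $\lambda_{1}\lambda_{2}\Delta^{2}$, $\lambda_{2}\Delta(1-\lambda_{1}\Delta)$, $\lambda_{1}\Delta(1-\lambda_{2}\Delta)$ and $(1-\lambda_{1}\Delta)(1-\lambda_{2}\Delta)$. The restriction $\Delta<1/(\lambda_{1}\vee\lambda_{2})$ only guarantees these are valid probabilities.

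The two cases with $\zeta_{1}=0$ are immediate. If also $\zeta_{2}=0$, then $X_{\Delta}=\mu\Delta+\sigma\xi\sqrt{\Delta}\sim N(\mu\Delta,\sigma^{2}\Delta)$, giving the last term. If $\zeta_{2}=1$, then $X_{\Delta}=\mu\Delta+\mu_{V}+\sigma\xi\sqrt{\Delta}+\sigma_{V}\xi'$ is normal with mean $\mu\Delta+\mu_{V}=\nu\Delta$ and variance $\sigma^{2}\Delta+\sigma_{V}^{2}=\upsilon^{2}\Delta$, using the stated definitions of $\nu$ and $\upsilon$. This gives the second term.

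The cases with $\zeta_{1}=1$ both have the form $Y-\eta A_{1}$, where $Y\sim N(m\Delta,s^{2}\Delta)$ is independent of $A_{1}$. Here $(m,s)=(\mu,\sigma)$ when $\zeta_{2}=0$ and $(m,s)=(\nu,\upsilon)$ when $\zeta_{2}=1$. So it suffices to show that the density of $Y-\eta A_{1}$ equals $\Psi(x;m,s)$. I would write the convolution, using (\ref{2.1.2}) for the density of $A_{1}$:
\begin{equation*}
  \int^{\infty}_{0}\frac{\beta^{\alpha}a^{\alpha-1}e^{-\beta a}}{(\alpha-1)!}\frac{1}{\sqrt{2\pi s^{2}\Delta}}\exp\bigg(-\frac{(x-m\Delta+\eta a)^{2}}{2s^{2}\Delta}\bigg)\dd a.
\end{equation*}

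The main step is evaluating this integral via the parabolic cylinder integral representation (Abramowitz and Stegun 19.5.3 / Gradshteyn--Ryzhik 3.462.1),
\begin{equation*}
  \int^{\infty}_{0}t^{\alpha-1}e^{-t^{2}/2-zt}\dd t=(\alpha-1)!\,e^{z^{2}/4}\D_{-\alpha}(z).
\end{equation*}
First substitute $t=\eta a/(s\sqrt{\Delta})$. Then expand the square: the exponent becomes $-(x-m\Delta)^{2}/(2s^{2}\Delta)-t^{2}/2-zt$ with
\begin{equation*}
  z=\frac{\eta(x-m\Delta)+\beta s^{2}\Delta}{\eta s\sqrt{\Delta}}.
\end{equation*}
The Jacobian and the power $a^{\alpha-1}$ together contribute the factor $(s\sqrt{\Delta}/\eta)^{\alpha}$. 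Multiplying by $\beta^{\alpha}/((\alpha-1)!\sqrt{2\pi s^{2}\Delta})$ then gives the prefactor $\Delta^{(\alpha-1)/2}(\beta s/\eta)^{\alpha}/\sqrt{2\pi s^{2}}$, the factorials cancel, and the extra exponential is $e^{z^{2}/4}$. Since $z^{2}/4=(\eta(x-m\Delta)+\beta s^{2}\Delta)^{2}/((2\eta s)^{2}\Delta)$, this matches $\Psi$ exactly.

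The only care needed is in this completing-the-square bookkeeping and in checking that the integral representation is valid, which holds because $\alpha\in\mathbb{Z}_{++}$ has positive real part. Summing the four weighted conditional densities then yields (\ref{2.3.5}).
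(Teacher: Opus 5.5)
Your proposal is correct and follows the paper's proof: decompose by the two Bernoulli indicators (the paper phrases this as convolving the mixed local densities), identify the two Gaussian terms, and evaluate the normal--Erlang convolution through an integral representation of $\D_{-\alpha}$. The only cosmetic difference is that the paper substitutes $y^{2}/(2\upsilon^{2}\Delta)\mapsto y$ and cites a Laplace-transform table entry (Bateman 4.5.35), while your linear substitution lands directly on $\int_{0}^{\infty}t^{\alpha-1}e^{-t^{2}/2-zt}\,\dd t=(\alpha-1)!\,e^{z^{2}/4}\D_{-\alpha}(z)$; that identity holds for all real $z$, so you do not need the paper's remark about analytic continuation for negative arguments.
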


In (\ref{2.3.5}), if $A_{k}$'s are exponentially-distributed with $\alpha=1$, then the above density function can be further simplified using the well-known relation $\D_{-1}(c)=\sqrt{\pi/2}e^{c^{2}/4}\erfc(c/\sqrt{2})$, for $c\in\mathbb{R}$, where $\erfc\equiv\erfc(\cdot)$ denotes the complementary Gauss error function. Also, under (\ref{2.3.4}), the mean and variance of the log price at time $t$ are easily found to be $\E[\log S_{t}|\tau_{3}>t]=\log S_{0}+\mu t+\lambda_{2}\mu t-\eta\lambda_{1}\alpha t/\beta$ and $\Var[\log S_{t}|\tau_{3}>t]=\sigma^{2}t+\lambda_{2}\big(\mu^{2}_{V}+\sigma^{2}_{V}\big)t+\eta^{2}\lambda_{1}\alpha(1+\alpha)t/\beta^{2}$, respectively.

Suppose that the bank's stock pays dividends at a continuously compounded yield $q$, and the short risk-free rate is $r$. With only equity conversion, it is reasonable to assume that $q\geq0$ is constant, while $r\equiv(r_{t})_{t\geq0}$ is deterministic but may be time-varying.\footnote{It is possible to generalize the model framework to incorporate stochastic dividend yields and interest rates, but the benefits of this are marginal considering our current emphasis on default risk analysis.} The presence of the Brownian motion allows us to construct an risk-neutral probability measure $\PP^{\ast}\sim\PP$ under which the discounted post-dividend stock price process $(e^{-(r-q)t}S_{t})$ is a martingale using a simple mean-correcting argument.\footnote{See Yao, Yang, and Yang (\citeyear{YYY11}) for an important note on the choice of mean-correcting martingale measures.} To that end, we define the continuous density process
\begin{equation}\label{2.3.6}
  \exp\bigg(-\frac{1}{2}\int^{t}_{0}\theta^{2}_{s}\dd s-\int^{t}_{0}\theta_{s}\dd W_{s}\bigg)=\frac{\dd\PP^{\ast}}{\dd\PP}\bigg|_{\mathscr{F}_{t}},\quad t\geq0,
\end{equation}
with implicit market price of risk
\begin{equation}\label{2.3.7}
  \theta_{t}=\frac{\mu-r_{t}+q+\lambda_{1}\psi_{1}(1)+\lambda_{2}\psi_{2}(1)-\gamma\lambda_{3,t}}{\sigma}+\frac{\sigma}{2},\quad t\geq0,
\end{equation}
where $\psi_{1}$ and $\psi_{2}$ are two jump-correcting functions given by
\begin{equation*}
  \psi_{1}(u):=\E\big[e^{-\eta uA_{1}}-1\big]=\phi_{A}(\ii\eta u)-1=\bigg(1+\frac{\eta u}{\beta}\bigg)^{-\alpha}-1,\quad u>-\frac{\beta}{\eta}
\end{equation*}
and
\begin{equation*}
  \psi_{2}(u):=\E\big[e^{uV_{1}}-1\big]=e^{\mu_{V}u+\sigma^{2}_{V}u^{2}/2}-1,\quad u\in\mathbb{R}.
\end{equation*}
According to the Girsanov theorem, under $\PP^{\ast}$, $W^{\ast}\equiv\big(W^{\ast}_{t}:=W_{t}+\int^{t}_{0}\theta_{s}\dd s\big)$ becomes a standard Brownian motion, and the structures of the compound Poisson processes $J$ and $\sum^{N_{2}}_{k=1}V_{k}$ are unaltered, both of which are still independent from $W^{\ast}$. Hence, the risk-neutral stock price satisfies the SDE
\begin{align*}
  \frac{\dd S_{t}}{S_{t-}}&=(r_{t}-q-\lambda_{1}\psi_{1}(1)-\lambda_{2}\psi_{2}(1)+\gamma\lambda_{3,t})\dd t+\sigma\dd W^{\ast}_{t}+\dd\sum^{N_{2,t}}_{k=1}\big(e^{V_{k}}-1\big)+\dd\sum^{N_{1,t}}_{k=1}\big(e^{-\eta A_{k}}-1\big)\\
  &\qquad-\gamma\dd N_{3,\Lambda_{t}},\quad t\geq0,
\end{align*}
whose solution is given by
\begin{equation}\label{2.3.8}
  S_{t}=S_{0}\exp\Bigg(\int^{t}_{0}\bigg(r_{s}-q-\frac{1}{2}\sigma^{2}-\lambda_{1}\psi_{1}(1)-\lambda_{2}\psi_{2}(1)+\gamma\lambda_{3,s}\bigg)\dd s+\sigma W^{\ast}_{t}+\sum^{N_{2,t}}_{k=1}V_{k}-\eta J_{t}+N_{3,\Lambda_{t}}\log(1-\gamma)\Bigg),
\end{equation}
where the change with $W^{\ast}$ also takes place in $r$ and $\lambda_{3}$ (with time cumulation $\Lambda$), namely the dynamics of the short rate and the regulatory intervention intensity under $\PP^{\ast}$. In light of the fundamental theorem of asset pricing, (\ref{2.3.6}) underlies the no-arbitrage valuation of CoCos.

\bigskip

\section{Default mechanism}\label{sec:3}

We introduce the mathematical tools necessary for analyzing the default trigger of CoCos issued by the bank, which manifest through two channels: (i) solvency shocks and (ii) regulatory intervention. As aforementioned, the first channel corresponds to an intrinsic, accounting-based default mechanism governed by prescribed conditions, while the second arises from extrinsic, regulatory influences under abnormal conditions.

First, since the cumulation of solvency shocks with corrections is perfectly controlled by the process $J-\lambda_{1}\alpha\mathrm{id}/\beta$, the accounting trigger is directly linked to the running supremum
\begin{equation}\label{3.1}
  \bar{J}_{t}:=\sup_{s\in[0,t]}\bigg\{J_{s}-\frac{\lambda_{1}\alpha s}{\beta}\bigg\},\quad t>0,
\end{equation}
with the understanding that $\bar{J}_{0}=0$, $\PP$-a.s. The following proposition details its distribution.

\begin{proposition}\label{pro:3}
For $t>0$ and $x>0$, let $f^{(\mathrm{c})}_{J_{t}}(x)=f_{J_{t}}(x)-e^{-\lambda_{1}t}\delta_{\{0\}}(x)$ be the continuous part of the local density function of $J_{t}$ in (\ref{2.1.3}). Then, the probability distribution of $\big(\bar{J}_{t}\big)$ up to time $t>0$ is given by
\begin{align}\label{3.2}
  P(t,x):=\PP\big[\bar{J}_{t}>x\big]&=1-e^{-\lambda_{1}t}-\int^{x+\lambda_{1}\alpha t/\beta}_{0}f^{(\mathrm{c})}_{J_{t}}(y)\dd y +\frac{\lambda_{1}\alpha}{\beta}\int^{t}_{0}\mathfrak{I}(t-s)f^{(\mathrm{c})}_{J_{s}}\bigg(x+\frac{\lambda_{1}\alpha s}{\beta}\bigg)\dd s,\quad x>0,
\end{align}
where
\begin{equation}\label{3.3}
  \mathfrak{I}(t):=e^{-\lambda_{1}t}+\int^{\lambda_{1}\alpha t/\beta}_{0}\bigg(1-\frac{\beta y}{\lambda_{1}\alpha t}\bigg)f^{(\mathrm{c})}_{J_{t}}(y)\dd y,
\end{equation}
\end{proposition}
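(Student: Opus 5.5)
The plan is to write $X_{s}:=J_{s}-\lambda_{1}\alpha s/\beta$ and $c:=\lambda_{1}\alpha/\beta$. Then $X$ is a compound Poisson process with upward Erlang jumps and a negative drift $-c$. The key structural fact is that $X$ can move \emph{up} only by jumps and \emph{down} only continuously, at the constant speed $c$. I would split
\[
\PP\big[\bar{J}_{t}>x\big]=\PP[X_{t}>x]+\PP\big[\bar{J}_{t}>x,\,X_{t}\le x\big].
\]
The first term is immediate from Proposition \ref{pro:1}:
\[
\PP[X_{t}>x]=\PP[J_{t}>x+ct]=1-e^{-\lambda_{1}t}-\int_{0}^{x+ct}f^{(\mathrm{c})}_{J_{t}}(y)\,\dd y .
\]
Here the atom at zero is excluded because $x+ct>0$. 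This produces the first three terms of (\ref{3.2}).

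For the second term I would decompose according to the \emph{last} time $\sigma\in(0,t)$ at which $X$ is at level $x$. On $\{\bar{J}_{t}>x,\,X_{t}\le x\}$ the path is above $x$ at some time and at or below $x$ at time $t$. Since downward motion is continuous, there is a last time $\sigma$ with $X_{\sigma}=x$, at which $X$ is crossing downward with slope $-c$. Between consecutive jumps the path is a line of slope $-c$, so level $x$ is hit at most once per inter-jump interval, and hence at most finitely many times, almost surely. The event $\{\sigma\in\dd s\}$ is then the intersection of two events: $X$ passes through $x$ during $[s,s+\dd s]$, and $\sup_{u\in[s,t]}(X_{u}-X_{s})\le 0$.

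For the first event, a Kac--Rice / occupation-density argument should show that the expected number of downcrossings of $x$ in $\dd s$ is $c\,f_{X_{s}}(x)\,\dd s$. Since $x>0$ forces $J_{s}=x+cs>0$, only the continuous part enters, giving $c\,f^{(\mathrm{c})}_{J_{s}}(x+cs)\,\dd s$. For the second event, the strong Markov property (independent increments) at the crossing time makes the post-crossing path an independent copy of $X$ started at $0$. The required probability is therefore $\PP[\sup_{u\le t-s}X_{u}\le 0]=\PP[J_{u}\le cu\ \forall u\le t-s]$.

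That last probability I would evaluate with Takács' ballot theorem for processes with cyclically exchangeable (here i.i.d.) nonnegative increments, which gives
\[
\PP\big[J_{u}\le cu\ \forall u\le T\big]=\E\big[(1-J_{T}/(cT))^{+}\big].
\]
Splitting $J_{T}$ into its atom $e^{-\lambda_{1}T}$ at $0$ and its continuous part $f^{(\mathrm{c})}_{J_{T}}$ reproduces exactly $\mathfrak{I}(T)$ in (\ref{3.3}). Integrating over $s\in(0,t)$ then yields the last term of (\ref{3.2}), which completes the formula.

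The main obstacle is making the last-passage decomposition rigorous: passing from the expected number of crossings to the density of the \emph{last} crossing time. I would handle it by conditioning on the jump skeleton of $J$, so that the crossing times become explicit functions of the jump times. One could instead cite a Kendall-type identity for spectrally one-sided Lévy processes, applied to $-X$. Either route should confirm that the Markov restart at a crossing, a stopping-like time that is not itself a stopping time, factorises as claimed because it is equivalently a first-passage event for the time-reversed increments.
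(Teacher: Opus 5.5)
Your proposal is correct, and it arrives at exactly the identity the paper's proof starts from. With $c=\lambda_{1}\alpha/\beta$, that identity is $P(t,x)=\PP[J_{t}>x+ct]+\int_{0}^{t}\E\big[(1-J_{t-s}/(c(t-s)))^{+}\big]\,c f^{(\mathrm{c})}_{J_{s}}(x+cs)\,\dd s$. The paper then evaluates the pieces exactly as you do.

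The difference is how that identity is obtained. The paper quotes it wholesale from Tak\'{a}cs' theorem on the supremum of processes with nonnegative interchangeable increments. You instead derive it from a last-passage decomposition at level $x$ together with the strong Markov property. You use Tak\'{a}cs only in the weaker form of the continuous-time ballot theorem, to identify $\PP[\sup_{u\le T}X_{u}\le 0]=\mathfrak{I}(T)$.

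Your route is more self-contained. It also makes visible why the crossing density $c f^{(\mathrm{c})}_{J_{s}}(x+cs)$ and the ballot probability $\mathfrak{I}(t-s)$ appear. The cost is that you need independent, not merely interchangeable, increments, which is harmless here.

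The step you flag as the main obstacle is easier than you suggest. Let $\rho_{1}<\rho_{2}<\cdots$ be the successive downcrossing times of level $x$; these are genuine stopping times. The event $\{\bar{J}_{t}>x,\,X_{t}\le x\}$ is the disjoint union over $k$ of the events $\{\rho_{k}\le t,\ \sup_{[\rho_{k},t]}X\le x\}$. Apply the strong Markov property at each $\rho_{k}$ and sum, using $\sum_{k}\PP[\rho_{k}\in\dd s]=c f^{(\mathrm{c})}_{J_{s}}(x+cs)\,\dd s$ (the expected downcrossing density). This gives the factorisation directly, with no time reversal and no conditioning on the jump skeleton.
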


What follows immediately is the next result for the conditional distribution.

\begin{corollary}\label{cor:1}
For generic $0\leq t_{0}<t$, the probability distribution of $\bar{J}_{t}$ conditional on the $\upsigma$-field $\mathscr{F}_{t_{0}}$ is given by
\begin{equation}\label{3.4}
  P_{t_{0}}(t,x):=\PP\big[\bar{J}_{t}>x\big|\mathscr{F}_{t_{0}}\big]=P\bigg(t-t_{0},x-J_{t_{0}}+\frac{\lambda_{1}\alpha t_{0}}{\beta}\bigg)
\end{equation}
on $\{x\geq\bar{J}_{t_{0}}\}$, or $P_{t_{0}}(t,x)=1$ otherwise.
\end{corollary}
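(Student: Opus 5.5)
The plan is to reduce the conditional law to the unconditional law in Proposition \ref{pro:3}, using the independent and stationary increments of the compensated process. Write $Y_{s}:=J_{s}-\lambda_{1}\alpha s/\beta$, so that $\bar{J}_{t}=\sup_{s\in[0,t]}Y_{s}$.

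First, split the supremum at $t_{0}$:
\begin{equation*}
  \bar{J}_{t}=\max\Big\{\bar{J}_{t_{0}},\;Y_{t_{0}}+\sup_{u\in[0,t-t_{0}]}\big(Y_{t_{0}+u}-Y_{t_{0}}\big)\Big\}.
\end{equation*}
Define the shifted process $\tilde{Y}_{u}:=Y_{t_{0}+u}-Y_{t_{0}}$ for $u\geq0$. Since $J$ is a compound Poisson process, hence a L\'evy process adapted to $\mathbb{F}$, the process $\tilde{Y}$ has three properties:
\begin{itemize}
  \item it is independent of $\mathscr{F}_{t_{0}}$;
  \item it has the same law as $Y$ itself, namely $\sum_{k=N_{1,t_{0}}+1}^{N_{1,t_{0}+u}}A_{k}-\lambda_{1}\alpha u/\beta$, with the same intensity and Erlang marks;
  \item it starts at $0$.
\end{itemize}
Consequently $\tilde{M}:=\sup_{u\in[0,t-t_{0}]}\tilde{Y}_{u}$ is independent of $\mathscr{F}_{t_{0}}$. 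Its tail is given by Proposition \ref{pro:3}: $\PP[\tilde{M}>z]=P(t-t_{0},z)$ for $z>0$.

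Second, split into cases on the $\mathscr{F}_{t_{0}}$-measurable quantities $\bar{J}_{t_{0}}$ and $Y_{t_{0}}$.

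On $\{x<\bar{J}_{t_{0}}\}$, the event $\{\bar{J}_{t}>x\}$ already holds because $\bar{J}_{t}\geq\bar{J}_{t_{0}}>x$. Hence the conditional probability is $1$.

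On $\{x\geq\bar{J}_{t_{0}}\}$, the first term in the maximum cannot exceed $x$. Therefore $\{\bar{J}_{t}>x\}=\{\tilde{M}>x-Y_{t_{0}}\}$. Independence of $\tilde{M}$ from $\mathscr{F}_{t_{0}}$, together with measurability of $Y_{t_{0}}$ (the freezing lemma for conditional expectations), gives
\begin{equation*}
  \PP\big[\bar{J}_{t}>x\big|\mathscr{F}_{t_{0}}\big]=P\big(t-t_{0},\,x-Y_{t_{0}}\big)=P\bigg(t-t_{0},\,x-J_{t_{0}}+\frac{\lambda_{1}\alpha t_{0}}{\beta}\bigg).
\end{equation*}
This is exactly (\ref{3.4}).

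The only delicate point is the boundary of the domain of $P$. On $\{x\geq\bar{J}_{t_{0}}\}$ we have $x-Y_{t_{0}}\geq\bar{J}_{t_{0}}-Y_{t_{0}}\geq0$, but Proposition \ref{pro:3} is stated only for strictly positive arguments. The case $x-Y_{t_{0}}=0$ occurs when $x=\bar{J}_{t_{0}}=Y_{t_{0}}$, i.e.\ the process sits at its running maximum at time $t_{0}$. There I would interpret $P(t-t_{0},0)$ as $\PP[\tilde{M}>0]$. I would justify this either by taking the limit $z\downarrow0$ in (\ref{3.2}), using right-continuity of $z\mapsto\PP[\tilde{M}>z]$, or by noting that the identity $\{\bar{J}_{t}>x\}=\{\tilde{M}>0\}$ holds directly.

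I expect no substantive obstacle beyond this. The main care is in stating precisely the independence of the post-$t_{0}$ increments from $\mathscr{F}_{t_{0}}$ under the enlarged filtration $\mathbb{F}$. This holds because $J$ is independent of $W$, $N_{2}$ and $N_{3}$, and $\lambda_{3}$ is not correlated with $J$.
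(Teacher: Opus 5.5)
Your proposal is correct and follows essentially the same route as the paper. The case $x<\bar{J}_{t_{0}}$ is trivial, and on $\{x\geq\bar{J}_{t_{0}}\}$ the independent stationary increments of $J-\lambda_{1}\alpha\mathrm{id}/\beta$ reduce the conditional probability to the tail of an independent copy of $\bar{J}_{t-t_{0}}$, evaluated at $x-J_{t_{0}}+\lambda_{1}\alpha t_{0}/\beta$. Your additions make rigorous what the paper's one-line argument asserts: the explicit splitting of the supremum at $t_{0}$, the freezing-lemma step, and the treatment of the boundary value $x-Y_{t_{0}}=0$, which lies outside the stated domain of Proposition~\ref{pro:3} (and is a null event unless $t_{0}=0$, since the compensated process strictly decreases between jumps).
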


This corollary allows dynamic evaluation of the default trigger and hence to compute the value of a CoCo at any time prior to expiry or termination. In (\ref{3.4}), note also that by measurability,
\begin{equation*}
  P_{t_{0}}(t,x):=\1_{\{x<\bar{J}_{t}\}},\quad t_{0}\geq t>0.
\end{equation*}

Second, there are many possible ways to design an analytically tractable model for the intensity process $\lambda_{3}$ measuring the propensity for regulatory intervention (regulation trigger). For instance, in Cheridito and Xu (\citeyear{CX15}), $\lambda_{3}$ is a deterministic function of time, while in Chung and Kwok (\citeyear{CK16}) it is assumed to be a geometric Brownian motion inversely related to the stock price. Here, we propose an orthogonal structure which not only captures desired correlations with the idiosyncratic risk of the stock but is also parsimonious in parametrization. A major technical advantage of this structure, compared with the popular class of affine intensity models (see, e.g., Gourieroux, Monfort, and Polimenis (\citeyear{GMP06})), is its preservation of closed form under changes of measure.\footnote{In the case of the well-known Cox--Ingersoll--Ross process, for example, the inclusion of a linear drift coefficient (in the defining SDE) quickly breaks down the affine structure.} To be specific, we write
\begin{equation*}
  \lambda_{3,t}=\lambda^{(1)}_{3,t}+\lambda^{(2)}_{3,t},\quad t\geq0,
\end{equation*}
on the right side of which the first component is given by a squared Brownian motion with drift,\footnote{The squaring operation is analogous to exponentiation (as in Chung and Kwok (\citeyear{CK16}) \text{Sect.} 2.5) or reflection that the resulting processes are both positive and yet does not exhibit mean reversion (explained later). However, the integral of a geometric or reflected Brownian motion with drift has no closed-form characteristic function (see Matsumoto and Yor (\citeyear{MY05}) \text{Corol.} 3.2 and Xia and Zhang (\citeyear{XZ26}) \text{Thm.} 1).}
\begin{equation}\label{3.5}
  \lambda^{(1)}_{3,t}=(\kappa^{(1)}t+\varsigma^{(1)}W^{\ast}_{t})^{2},
\end{equation}
while the second component satisfies the following Ornstein--Uhlenbeck-type SDE:
\begin{equation}\label{3.6}
  \dd\lambda^{(2)}_{3,t}=-\kappa^{(2)}\lambda^{(2)}_{3,t-}\dd t+\varsigma^{(2)}\dd\sum^{N_{2,t}}_{k=1}h(V^{-}_{k}),
\end{equation}
which solves for
\begin{equation*}
  \lambda^{(2)}_{3,t}=\lambda^{(2)}_{3,0}e^{-\kappa^{(2)}t}+\varsigma^{(2)}\int^{t}_{0}e^{-\kappa^{(2)}(t-s)}\dd\sum^{N_{2,s}}_{k=1}h(V^{-}_{k}).
\end{equation*}
There are five parameters including $\kappa^{(1)}>0$, $\kappa^{(2)}>0$, $\varsigma^{(1)}<0$, $\varsigma^{(2)}>0$, and $\lambda^{(2)}_{3,0}>0$; $(\cdot)^{-}\equiv-\min\{\cdot,0\}$ stands for the negative part, and $h:\mathbb{R}_{+}\mapsto\mathbb{R}_{+}$ is some customizable increasing function.

Within this framework, we can analyze how the stock price influences the likelihood of regulatory intervention, considering both diffusion and jump risks. Specifically, abrupt and substantial price declines (jumps) exert a more significant impact on intervention tendencies compared to continuous price movements (diffusion). Mathematically, such a ``leverage effect'' occurring between $\lambda_{3}$ and $W^{\ast}$ and $\sum^{N_{2}}_{k=1}V_{k}$ can be analyzed by computing the quadratic covariation\footnote{Recall that the jump components are unchanged when $\PP$ is changed to $\PP^{\ast}$, as a result of which the covariation is the same under the two probability measures.}
\begin{equation}\label{3.7}
  \Bigg[\lambda_{3},W^{\ast}+\sum^{N_{2}}_{k=1}V_{k}\Bigg]_{t}=\kappa^{(1)}\varsigma^{(1)}t^{2}+2\varsigma^{(1)2}\int^{t}_{0}W^{\ast}_{s}\dd s+\varsigma^{(2)}\sum^{N_{2,t}}_{k=1}h(V^{-}_{k})V_{k},\quad t\geq0.
\end{equation}
It is familiar that the random variable $\int^{t}_{0}W^{\ast}_{s}\dd s$ for fixed $t$ has a normal distribution with zero mean and variance $t^{3}/3$, so the first two terms on the right side of (\ref{3.7}) combined are negative with $\PP^{\ast}$-probability
\begin{equation*}
  \int^{\kappa^{(1)}\varsigma^{(1)}t^{2}}_{-\infty}\sqrt{\frac{3}{8\pi\varsigma^{(1)4}t^{3}}}e^{-3x^{2}/(8\varsigma^{(1)4}t^{3})}\dd x=\frac{1}{2}\erfc\frac{\kappa^{(1)}\sqrt{6t}}{4\varsigma^{(1)}}.
\end{equation*}
Since $\erfc$ is strictly decreasing on $\mathbb{R}$ and $\kappa^{(1)}\varsigma^{(1)}<0$, the last line implies that diffusion-driven leverage takes time to manifest, at faster speed for larger values of $\kappa^{(1)}$. In contrast, jump-driven leverage always prevails, simply because $h(V^{-}_{1})V_{1}\leq0$, $\PP$-a.s. It is important that the intensity process $\lambda_{3}$ exhibits mean reversion with respect to its jumps because regulatory intervention is typically triggered by sustained or systemic financial stress -- linked to extreme or consecutive price declines -- not by transient, non-fundamental spikes; on the other hand, it is deemed optional for small, continuous price fluctuations, which generally reflect regular market activity or noise rather than fundamental solvency concerns.

Note that in general, if $\lambda^{(1)}_{3}\not\equiv0$, $\lambda_{3}$ is not a Markov process with respect to $\mathbb{F}$;\footnote{The Markov property clearly holds if $\kappa^{(1)}=0$, which case will be inadequate to establish the desired leverage effect however.} nonetheless, as Proposition \ref{pro:4} demonstrates, this does not pose technical challenges for dynamic default evaluation, because the required additional information can be incorporated into a single parameter replacing the contemporaneous value of $\lambda^{(1)}_{3}$.

\begin{proposition}\label{pro:4}
For generic $0\leq t_{0}<t$, the function
\begin{equation}\label{3.8}
  E^{\ast}_{t_{0}}(t,u):=\E^{\ast}\Big[e^{-u\int^{t}_{t_{0}}\lambda_{3,v}\dd v}\Big|\mathscr{F}_{t_{0}}\Big]=E^{(1)}_{t_{0}}(t,u)E^{(2)}_{t_{0}}(t,u),\quad u\geq0,
\end{equation}
is well-defined, where
\begin{align}\label{3.9}
  E^{(1)}_{t_{0}}(t,u)&=\exp\Bigg(\frac{\kappa^{(1)2}(t-t_{0})}{2\varsigma^{(1)2}} \bigg(\frac{\tanh\sqrt{2\varsigma^{(1)2}(t-t_{0})^{2}u}}{\sqrt{2\varsigma^{(1)2}(t-t_{0})^{2}u}}-1\bigg) +\frac{\kappa^{(1)}(\sech\sqrt{2\varsigma^{(1)2}(t-t_{0})^{2}u}-1)}{\varsigma^{(1)2}} \nonumber\\
  &\qquad\times(\kappa^{(1)}t_{0}+\varsigma^{(1)}W^{\ast}_{t_{0}}) -\frac{\lambda^{(1)}_{3,t_{0}}\sqrt{u}\tanh\sqrt{2\varsigma^{(1)2}(t-t_{0})^{2}u}}{\sqrt{2\varsigma^{(1)2}}}\Bigg) \sqrt{\sech\sqrt{2\varsigma^{(1)2}(t-t_{0})^{2}u}}
\end{align}
and
\begin{equation}\label{3.10}
  E^{(2)}_{t_{0}}(t,u)=\exp\bigg(-u\lambda^{(2)}_{3,t_{0}}\frac{1-e^{-\kappa^{(2)}(t-t_{0})}}{\kappa^{(2)}} +\int^{t}_{t_{0}}\log\phi_{\sum^{N_{2,1}}_{k=1}h(V^{-}_{k})}\bigg(\frac{\ii u\varsigma^{(2)}(1-e^{-\kappa^{(2)}s})}{\kappa^{(2)}}\bigg)\dd s\Bigg).
\end{equation}
\end{proposition}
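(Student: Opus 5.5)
The plan is to exploit the independence of the two components of $\lambda_{3}$ under $\PP^{\ast}$ and treat each factor separately. $\lambda^{(1)}_{3}$ is a functional of $W^{\ast}$ alone. $\lambda^{(2)}_{3}$ is a functional of the compound Poisson process $\sum^{N_{2}}_{k=1}h(V^{-}_{k})$, whose law is unchanged by the Girsanov change (\ref{2.3.6}) and which stays independent of $W^{\ast}$. Hence, conditional on $\mathscr{F}_{t_{0}}$, the integrals $\int^{t}_{t_{0}}\lambda^{(1)}_{3,v}\dd v$ and $\int^{t}_{t_{0}}\lambda^{(2)}_{3,v}\dd v$ are independent, and the conditional Laplace transform factorizes as $E^{(1)}_{t_{0}}E^{(2)}_{t_{0}}$ in (\ref{3.8}). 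Well-definedness for $u\geq0$ is immediate because $\lambda_{3}\geq0$, so the integrand lies in $(0,1]$.

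For $E^{(2)}_{t_{0}}$, I would integrate the explicit solution of (\ref{3.6}) over $[t_{0},t]$ and swap the order of integration (stochastic Fubini for a finite-variation integrator). This gives
\[
\int^{t}_{t_{0}}\lambda^{(2)}_{3,v}\dd v=\lambda^{(2)}_{3,t_{0}}\frac{1-e^{-\kappa^{(2)}(t-t_{0})}}{\kappa^{(2)}}+\varsigma^{(2)}\int^{t}_{t_{0}}\frac{1-e^{-\kappa^{(2)}(t-s)}}{\kappa^{(2)}}\dd\sum^{N_{2,s}}_{k=1}h(V^{-}_{k}).
\]
The first term is $\mathscr{F}_{t_{0}}$-measurable. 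The second is a deterministic-integrand integral against a Lévy process with increments independent of $\mathscr{F}_{t_{0}}$, so its Laplace transform equals $\exp\int\log\phi(\cdot)\dd s$ with the exponent evaluated at the integrand. This follows from the standard formula $\E[e^{\int f\dd L}]=\exp\int\Psi_{L}(f(s))\dd s$, which I would prove by Riemann-sum approximation. After substituting $s\mapsto t-s$ and relabelling, this yields (\ref{3.10}). I would also check the time variable in the integrand against the stated form.

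For $E^{(1)}_{t_{0}}$, I would write $X_{v}=\kappa^{(1)}v+\varsigma^{(1)}W^{\ast}_{v}$. By the Markov property of Brownian motion, $X_{t_{0}+r}=x_{0}+\kappa^{(1)}r+\varsigma^{(1)}\tilde W_{r}$, where $x_{0}=\kappa^{(1)}t_{0}+\varsigma^{(1)}W^{\ast}_{t_{0}}$ and $\tilde W$ is independent of $\mathscr{F}_{t_{0}}$. The non-Markovianity of $\lambda^{(1)}_{3}$ is then absorbed by conditioning on $x_{0}$, which is why both $x_{0}$ and $\lambda^{(1)}_{3,t_{0}}=x_{0}^{2}$ appear in (\ref{3.9}). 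What remains is the classical Laplace transform of $\int^{T}_{0}(x_{0}+\kappa r+\varsigma\tilde W_{r})^{2}\dd r$ with $T=t-t_{0}$. I would remove the drift by a Girsanov change with constant kernel $\kappa/\varsigma$, which produces the factor $e^{(\kappa/\varsigma)\tilde W_{T}-\kappa^{2}T/(2\varsigma^{2})}$. Then I would apply the Cameron--Martin/Lévy formula for $\E[\exp(b Y_{T}-c\int^{T}_{0}Y_{r}^{2}\dd r)]$ with $Y=x_{0}+\varsigma\tilde W$, a Brownian motion started at $x_{0}$. This Gaussian quadratic functional is computed either by solving the Riccati ODE obtained from Feynman--Kac, or via an exponential martingale $\exp(-\tfrac12\omega\tanh(\omega(T-r))Y_{r}^{2}+\dots)$. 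It yields the $\sqrt{\sech}$ prefactor together with $\tanh$ and $\sech$ terms that are quadratic and linear in $x_{0}$.

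The main obstacle is the algebra in this last step. One has to carry the Girsanov exponential linear term through the Riccati solution with $\omega\propto\sqrt{2\varsigma^{(1)2}u}$ and verify that the coefficients match the argument $\sqrt{2\varsigma^{(1)2}(t-t_{0})^{2}u}$ appearing in (\ref{3.9}). In particular, I expect to need to recheck whether the stated scaling by $(t-t_{0})^{2}$ inside the hyperbolic functions is exactly what the Riccati system produces. To keep the bookkeeping under control, I would do the drift-free case first and then complete the square.
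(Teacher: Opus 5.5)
Your factorization and your computation of $E^{(2)}_{t_0}$ follow the paper's proof essentially step for step. Where the paper integrates by parts, you use stochastic Fubini. Where it appeals to infinite divisibility through a product integral, you use the formula $\E[e^{\int f\,\dd L}]=\exp\int\Psi_L(f(s))\,\dd s$.

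The genuine difference is in $E^{(1)}_{t_0}$. Both arguments make the same Markov shift to $\int_0^{t-t_0}(\vartheta_{t_0}+\kappa^{(1)}s+\varsigma^{(1)}W'_s)^2\,\dd s$, with $\vartheta_{t_0}=\kappa^{(1)}t_0+\varsigma^{(1)}W^{\ast}_{t_0}$. The paper then computes nothing from scratch. It defers to the Karhunen--Lo\`eve argument of Xia (2020), which turns the functional into a series of independent noncentral $\chi^2$ terms. It only handles the new $\vartheta_{t_0}$-dependent coefficients, summing them with the partial-fraction expansions of $\tanh$ and $\sech$.

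Your Girsanov-plus-Riccati route is more elementary, is self-contained, and yields the closed form directly. The verification step is routine: the quadratic coefficient is nonpositive, so the candidate exponential martingale is a true martingale. The spectral route, in exchange, represents the whole law of the integral rather than only its Laplace transform.

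Your worry about the $(t-t_0)^2$ scaling is unfounded. Take $\omega=|\varsigma^{(1)}|\sqrt{2u}$, $\tau=t-t_0$, and terminal datum $e^{by}$ with $b=\kappa^{(1)}/\varsigma^{(1)2}$. The Riccati system for $\exp(Ay^2+By+C)$ then gives
$A=-\frac{\omega}{2\varsigma^{(1)2}}\tanh\omega\tau$, $B=b\,\sech\omega\tau$, and $C=\frac{\varsigma^{(1)2}b^2}{2\omega}\tanh\omega\tau-\frac12\log\cosh\omega\tau$.
Multiplying by the Girsanov factor $e^{-b\vartheta_{t_0}-\kappa^{(1)2}\tau/(2\varsigma^{(1)2})}$ reproduces every term of (\ref{3.9}), because $\omega\tau=\sqrt{2\varsigma^{(1)2}(t-t_0)^2u}$.

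The check you flagged for $E^{(2)}_{t_0}$ does fail, however, so do not claim that your argument yields (\ref{3.10}) as printed. What your argument gives in the exponent is
\[
\int_0^{t-t_0}\log\phi_{\sum^{N_{2,1}}_{k=1}h(V^-_k)}\Big(\frac{\ii u\varsigma^{(2)}(1-e^{-\kappa^{(2)}s})}{\kappa^{(2)}}\Big)\dd s ,
\]
or equivalently $\int_{t_0}^{t}$ with $e^{-\kappa^{(2)}(t-s)}$ in the integrand. This depends on $(t_0,t)$ only through $t-t_0$, as it must for the time-homogeneous Markov process $\lambda^{(2)}_3$.

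Formula (\ref{3.10}) instead integrates the $e^{-\kappa^{(2)}s}$ integrand over $[t_0,t]$. For $u>0$ that integrand is strictly decreasing in $s$, as soon as $h(V^-_1)$ is not a.s.\ zero (for instance $h\geq1$ under (\ref{3.11})). Hence the two expressions differ for every $t_0>0$.

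The paper's own proof makes exactly this slip. It replaces $e^{-\kappa^{(2)}(t-s)}$ by $e^{-\kappa^{(2)}s}$ without shifting the limits, and it also loses $\varsigma^{(2)}$ in the intermediate display. The error propagates to (\ref{3.12}), whose $\Ei$ arguments should be $iu\varsigma^{(2)}/\kappa^{(2)}$ and $iu\varsigma^{(2)}e^{-\kappa^{(2)}(t-t_0)}/\kappa^{(2)}$.

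Your proof is therefore sound, but it establishes the corrected formula, which agrees with (\ref{3.10}) only when $t_0=0$.
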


Again, for analytical tractability, it is helpful to design the function $h$ in such a way that the integral on the right side of (\ref{3.10}) has a closed-form expression. While there may be plenty of such choices, the ones that immediately come to mind, noted that $V_{1}$ is normally distributed, are simple functions, or piecewise constant functions. More specifically, without introducing additional parameters, we define
\begin{equation}\label{3.11}
  h(x):=4\1_{\{-x\leq-3\sigma_{V}\}}+\sum^{3}_{i=1}i\1_{\{-x\in(-i\sigma_{V},(1-i)\sigma_{V}]\}}.
\end{equation}
This structure classifies the negative jump $-V^{-}_{1}$ into four levels based on the standard deviation of $V$ and assigns linearly increasing weights in terms of increasing magnitudes of $V^{-}_{1}$. We then have the next corollary.

\begin{corollary}\label{cor:2}
Let the function $h$ be given by (\ref{3.11}). Then, (\ref{3.10}) specializes for generic $0\leq t_{0}<t$ into
\begin{align}\label{3.12}
  E^{(2)}_{t_{0}}(t,u)&=\exp\Bigg(-u\lambda^{(2)}_{3,t_{0}}\frac{1-e^{-\kappa^{(2)}(t-t_{0})}}{\kappa^{(2)}} +\frac{\lambda_{2}}{\kappa^{(2)}}\sum^{4}_{i=1}a_{i}e^{-i\varsigma^{(2)}u/\kappa^{(2)}} \bigg(\Ei\bigg(\frac{iu\varsigma^{(2)}e^{-\kappa^{(2)}t_{0}}}{\kappa^{(2)}}\bigg) -\Ei\bigg(\frac{iu\varsigma^{(2)}e^{-\kappa^{(2)}t}}{\kappa^{(2)}}\bigg)\bigg) \nonumber\\
  &\qquad-\lambda_{2}(t-t_{0})\Bigg),\quad u\geq0,
\end{align}
where
\begin{align}\label{3.13}
  &a_{1}=1-\frac{1}{2}\erfc\frac{\mu_{V}+\sigma_{V}}{\sqrt{2\sigma^{2}_{V}}},\quad a_{2}=\frac{1}{2}\bigg(\erfc\frac{\mu_{V}+\sigma_{V}}{\sqrt{2\sigma^{2}_{V}}}-\erfc\frac{\mu_{V}+2\sigma_{V}}{\sqrt{2\sigma^{2}_{V}}}\bigg), \nonumber\\
  &a_{3}=\frac{1}{2}\bigg(\erfc\frac{\mu_{V}+2\sigma_{V}}{\sqrt{2\sigma^{2}_{V}}}-\erfc\frac{\mu_{V}+3\sigma_{V}}{\sqrt{2\sigma^{2}_{V}}}\bigg),\quad
  a_{4}=\frac{1}{2}\erfc\frac{\mu_{V}+3\sigma_{V}}{\sqrt{2\sigma^{2}_{V}}},
\end{align}
and $\Ei\equiv\Ei(\cdot)$ denotes the exponential integral function (Abramowitz and Stegun (\citeyear{AS72}) \text{Sect.} 5.1).
\end{corollary}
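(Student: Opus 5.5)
The plan is to evaluate the integral in (\ref{3.10}) in closed form, using that $h$ in (\ref{3.11}) is piecewise constant, so $h(V^{-}_{1})$ is a discrete random variable. The first term of (\ref{3.10}) is left unchanged. Only the term $\int^{t}_{t_{0}}\log\phi_{\sum^{N_{2,1}}_{k=1}h(V^{-}_{k})}(\cdot)\dd s$ has to be worked out. The steps are: identify the law of $h(V^{-}_{1})$, write the Lévy exponent of the compound Poisson variable, and then integrate in $s$ by a change of variables that produces the exponential integral.

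\emph{Step 1: law of $h(V^{-}_{1})$.} From (\ref{3.11}), $h(V^{-}_{1})=i$ exactly when $V^{-}_{1}\in[(i-1)\sigma_{V},i\sigma_{V})$ for $i=1,2,3$, and $h(V^{-}_{1})=4$ when $V^{-}_{1}\geq3\sigma_{V}$. Note that the event $V_{1}\geq0$, where $V^{-}_{1}=0$, falls in the class $i=1$. In terms of $V_{1}$ itself, $h(V^{-}_{1})=1$ iff $V_{1}>-\sigma_{V}$, $h(V^{-}_{1})=i$ iff $V_{1}\in(-i\sigma_{V},-(i-1)\sigma_{V}]$ for $i=2,3$, and $h(V^{-}_{1})=4$ iff $V_{1}\leq-3\sigma_{V}$. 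Since $V_{1}\sim N(\mu_{V},\sigma_{V}^{2})$,
\begin{equation*}
\PP[V_{1}\leq-j\sigma_{V}]=\tfrac12\erfc\big((\mu_{V}+j\sigma_{V})/\sqrt{2\sigma^{2}_{V}}\big),\quad j=1,2,3.
\end{equation*}
Differencing these tail probabilities gives $\PP[h(V^{-}_{1})=i]=a_{i}$, with $a_{i}$ exactly as in (\ref{3.13}).

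\emph{Step 2: Lévy exponent.} $N_{2}$ has intensity $\lambda_{2}$ and is independent of the i.i.d. $V_{k}$. Hence
\begin{equation*}
\log\phi_{\sum^{N_{2,1}}_{k=1}h(V^{-}_{k})}(\ii w)=\lambda_{2}\big(\E[e^{-wh(V^{-}_{1})}]-1\big)=\lambda_{2}\Big(\sum^{4}_{i=1}a_{i}e^{-iw}-1\Big).
\end{equation*}
Set $w=u\varsigma^{(2)}(1-e^{-\kappa^{(2)}s})/\kappa^{(2)}$ and use $\sum_{i}a_{i}=1$. The integrand then becomes
\begin{equation*}
\lambda_{2}\sum_{i}a_{i}e^{-iu\varsigma^{(2)}/\kappa^{(2)}}\exp\big(iu\varsigma^{(2)}e^{-\kappa^{(2)}s}/\kappa^{(2)}\big)-\lambda_{2}.
\end{equation*}
The constant $-\lambda_{2}$ integrates to the term $-\lambda_{2}(t-t_{0})$ in (\ref{3.12}).

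\emph{Step 3: the $s$-integral.} For each $i$ and $u>0$, substitute $z=iu\varsigma^{(2)}e^{-\kappa^{(2)}s}/\kappa^{(2)}$. Then $\dd s=-\dd z/(\kappa^{(2)}z)$, and
\begin{equation*}
\int^{t}_{t_{0}}e^{z(s)}\dd s=\frac{1}{\kappa^{(2)}}\int^{z(t_{0})}_{z(t)}\frac{e^{z}}{z}\dd z=\frac{1}{\kappa^{(2)}}\Big(\Ei(z(t_{0}))-\Ei(z(t))\Big).
\end{equation*}
This uses $\Ei'(z)=e^{z}/z$ on $(0,\infty)$. Both endpoints are strictly positive and the path avoids $0$, so the principal-value singularity of $\Ei$ plays no role. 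Collecting the four terms, with prefactor $\lambda_{2}/\kappa^{(2)}$, yields (\ref{3.12}). The case $u=0$ is trivial: both sides equal $1$, since the $\Ei$ difference vanishes there. Interchanging the finite sum and the integral is harmless.

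The only point needing care is Step 1. There one must check that the half-open intervals in (\ref{3.11}) partition $\mathbb{R}_{+}$, including the atom at $V^{-}_{1}=0$, which is assigned weight $1$. One must also match the boundary conventions so that the probabilities come out exactly as the $\erfc$ differences in (\ref{3.13}). Everything else is a routine substitution.
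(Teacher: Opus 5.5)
Your derivation for $u>0$ is correct and follows the paper's proof essentially step for step. You identify $h(V^{-}_{1})$ as a four-point categorical variable with probabilities $a_{i}$, write the compound-Poisson exponent $\lambda_{2}(\sum_{i}a_{i}e^{-iw}-1)$, and integrate in $s$ with a substitution that produces $\Ei$. The paper substitutes the negative of your $z$ and integrates $e^{-s}/s$ over negative arguments. Your choice keeps both endpoints positive, which is slightly cleaner.

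The one incorrect claim is your treatment of $u=0$. At $u=0$ both arguments of $\Ei$ equal $0$, where $\Ei$ has a logarithmic singularity. So (\ref{3.12}) only makes sense there as the limit $u\downarrow0$, and in that limit the $\Ei$ difference does not vanish. Since $\Ei(x)-\log x$ has a finite limit as $x\downarrow0$, the difference tends to $\log e^{\kappa^{(2)}(t-t_{0})}=\kappa^{(2)}(t-t_{0})$. Had it vanished, the right side would equal $e^{-\lambda_{2}(t-t_{0})}\neq1$, which contradicts your own conclusion.

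The correct reason the limit is $1$ is that the $\Ei$ terms contribute $\frac{\lambda_{2}}{\kappa^{(2)}}\kappa^{(2)}(t-t_{0})\sum_{i}a_{i}=\lambda_{2}(t-t_{0})$, which cancels the final term. This is where $\sum_{i}a_{i}=1$ is genuinely needed; in your Step 2 it is superfluous, because the $-1$ in the exponent gives $-\lambda_{2}$ directly. This is a peripheral slip: the paper does not treat $u=0$ at all, and the main computation is unaffected.
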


\bigskip

\section{Valuation of CoCo bonds}\label{sec:4}

Now we discuss valuation methods for the two types of CoCos (write-down and equity-convertible) under the established model framework. For convenience, we assume that these CoCos have a fixed expiry date $T>0$.\footnote{For perpetual, callable CoCos, especially those written on AT1 capital, the date $T$ can well be thought of as the first expected date to redeem the bond, as considered in De Spiegeleer and Schoutens (\citeyear{DSS14}) \text{Sect.} 4.} As discussed in Section \ref{sec:3}, the default mechanism is completely determined by two independent triggers, with the corresponding default time given by\footnote{Note that $\tau>0$ $\PP$-\text{a.s.} as long as $B_{0}>\underline{B}$, and it is agreed that $\inf\emptyset=\infty$.}
\begin{equation*}
  \tau=\tau_{1}\wedge\tau_{3}=\inf\big\{t\in(0,T]:\;B_{t}<\underline{B}\text{ or }N_{3,\Lambda_{t}}=1\big\}\in(0,T]\cup\{\infty\},
\end{equation*}
which combines the time ($\tau_{1}$) when the CET1 ratio breaches a predetermined threshold (a lower barrier) $\underline{B}$ and that ($\tau_{3}$) of an independent shock due to regulatory intervention. It is clear that $\tau$ is a stopping time of the market filtration $\mathbb{F}$ and is by definition measurable with respect to $\mathscr{F}_{T}$. The randomness of $\tau$ is fully covered by the information set $\upsigma(J_{t},N_{3,\Lambda_{t}}:t\in[0,T])$, and we have the relation
\begin{equation}\label{4.1}
  \{\tau>t\}=\big\{\bar{J}_{t}\leq\overline{J}\big\}\cap\big\{N_{3,\Lambda_{t}}=0\big\}\in\mathscr{F}_{t},\quad t\in(0,T],
\end{equation}
where the equality holds up to null sets and $\overline{J}$ is the corresponding upper barrier governing the compensated process $(J_{t}-\lambda_{1}\alpha t/\beta)$ over $(0,T]$, as implied from $\underline{B}$, or more precisely, in light of the relation (\ref{2.2.2}),
\begin{equation}\label{4.2}
  \overline{J}=\cot(\pi\underline{B})+\cot(\pi(1-B_{0})).
\end{equation}
The structure in (\ref{4.1}) also allows to distinguish usual default controlled by the accounting trigger from regulation-forced default.

\medskip

\subsection{Write-down CoCos}\label{sec:4.1}

Suppose that the bank issues a CoCo with face value $K$. The CoCo is scheduled to pay a total of $M$ coupons $\{c_{i}\}^{M}_{i=1}$ at a sequence of dates $\{t_{i}\}^{M}_{i=1}\in\prod^{M}_{i=1}(0,T]$. In the event of default, the CoCo terminates immediately, paying its face value written down by some fraction while canceling all subsequent coupon payments. For specificity, we assume that if the default is due to the regulatory trigger, such a fraction always equals 1, in which case the CoCo is considered written off -- in line with the Credit Suisse collapse (Vossos and Keatinge (\citeyear{VK23})); otherwise it is fixed at some $w\in[0,1]$.\footnote{The CoCo is said to be of write-off type if $w=1$, in which case it simply evaporates upon default as the face value is written down in full.} As discussed in Section \ref{sec:2.3}, to analyze possible interactions between the two types of default triggers, we also introduce a scaling factor $\varpi\in(0,1]$, interpreted as the probability of a usual default on an accounting trigger, which is to say, with probability $1-\varpi$ an accounting trigger leads to regulatory intervention. In connection with this, we remark that in the present model framework, the process $N_{3,\Lambda}$ only captures regulatory intervention independent of the bank's solvency, and the independence gap is exactly bridged by this scaling factor as a proxy for insolvency-induced intervention. As a consequence, the recovery rate of the CoCo is understood to be the random variable
\begin{equation}\label{4.1.1}
  R=\varpi(1-w)\1_{\{\tau_{1}<\tau_{3}\}},
\end{equation}
with which the face value of the CoCo becomes $KR$ at the time of default.

Let us denote by $\Pi^{\mathrm{(WD)}}_{t_{0}}$ the value of this write-down CoCo at any date $t_{0}\in[0,T)$ prior to termination, which is presented in the theorem below.

\begin{theorem}\label{thm:1}
Given $\tau>t_{0}$, we have that
\begin{align}\label{4.1.2}
  \Pi^{\mathrm{(WD)}}_{t_{0}}&=Ke^{-\int^{T}_{t_{0}}r_{v}\dd v}(1-P_{t_{0}}(T,\overline{J}))E^{\ast}_{t_{0}}(T,1)+\sum^{M}_{i=1}c_{i}e^{-\int^{t_{i}}_{t_{0}}r_{v}\dd v} \big(1-P_{t_{0}}\big(t_{i},\overline{J}\big)\big)E^{\ast}_{t_{0}}(t_{i},1)\mathbf{1}_{\{t_{0}<t_{i}\}} \nonumber\\
  &\qquad+\varpi(1-w)K\int^{T}_{t_{0}}e^{-\int^{s}_{t_{0}}r_{v}\dd v}E^{\ast}_{t_{0}}(s,1)\dd_{s}P_{t_{0}}(s,\overline{J}),\quad t_{0}\in[0,T),
\end{align}
where $E^{\ast}_{t_{0}}(\cdot,\cdot)$ is given by (\ref{3.8}).
\end{theorem}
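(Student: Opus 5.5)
We follow the standard risk-neutral valuation route. By the fundamental theorem of asset pricing, on $\{\tau>t_{0}\}$ the CoCo value is the $\PP^{\ast}$-conditional expectation, given $\mathscr{F}_{t_{0}}$, of its discounted cash flows:
\begin{align*}
\Pi^{\mathrm{(WD)}}_{t_{0}}&=\E^{\ast}\Big[Ke^{-\int^{T}_{t_{0}}r_{v}\dd v}\1_{\{\tau>T\}}+\sum^{M}_{i=1}c_{i}e^{-\int^{t_{i}}_{t_{0}}r_{v}\dd v}\1_{\{\tau>t_{i}\}}\1_{\{t_{0}<t_{i}\}}\\
&\qquad+KRe^{-\int^{\tau}_{t_{0}}r_{v}\dd v}\1_{\{\tau\leq T\}}\Big|\mathscr{F}_{t_{0}}\Big].
\end{align*}
Since $r$ is deterministic, the discount factors come out of the expectation. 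We then treat the three terms (principal, coupons, recovery) separately.

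For the survival terms we use (\ref{4.1}): $\{\tau>s\}=\{\bar{J}_{s}\leq\overline{J}\}\cap\{N_{3,\Lambda_{s}}=0\}$. Under $\PP^{\ast}$, the process $J$ keeps its law and is independent of $W^{\ast}$ and of $(N_{2},\{V_{k}\})$. The intensity $\lambda_{3}$ is built only from $W^{\ast}$ and the $N_{2}$-jumps, and $N_{3}$ is an independent unit Poisson process. Hence $J$ is conditionally independent of $(N_{3,\Lambda})$ given $\mathscr{F}_{t_{0}}$, and the survival probability factorizes. The $J$-factor is $1-P_{t_{0}}(s,\overline{J})$ by Corollary \ref{cor:1}. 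For the intervention factor, we condition on the path of $\lambda_{3}$ (a doubly stochastic argument): $\PP^{\ast}[N_{3,\Lambda_{s}}-N_{3,\Lambda_{t_{0}}}=0\,|\,\mathscr{F}_{t_{0}}\vee\upsigma(\lambda_{3})]=e^{-\int^{s}_{t_{0}}\lambda_{3,v}\dd v}$, and taking expectations gives $E^{\ast}_{t_{0}}(s,1)$ by Proposition \ref{pro:4}. Combining these yields the first two terms of (\ref{4.1.2}).

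For the recovery term, $R=\varpi(1-w)\1_{\{\tau_{1}<\tau_{3}\}}$ restricts attention to accounting-triggered default occurring at $\tau_{1}=s\leq T$ before intervention. We write the expectation as $\varpi(1-w)K\int^{T}_{t_{0}}e^{-\int^{s}_{t_{0}}r_{v}\dd v}\,\PP^{\ast}[\tau_{1}\in\dd s,\ \tau_{3}>s\,|\,\mathscr{F}_{t_{0}}]$. By the same conditional independence, $\PP^{\ast}[\tau_{1}\in\dd s,\tau_{3}>s|\mathscr{F}_{t_{0}}]=\dd_{s}\PP^{\ast}[\tau_{1}\leq s|\mathscr{F}_{t_{0}}]\cdot E^{\ast}_{t_{0}}(s,1)$. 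Moreover, $\{\tau_{1}\leq s\}=\{\bar{J}_{s}>\overline{J}\}$, whose conditional probability is $P_{t_{0}}(s,\overline{J})$. This gives the Stieltjes integral in (\ref{4.1.2}).

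The main obstacle is justifying that $J$, and hence $\tau_{1}$, remains independent of $\lambda_{3}$ and $N_{3}$ under $\PP^{\ast}$. The Girsanov kernel $\theta$ in (\ref{2.3.7}) depends on $\lambda_{3}$, so the measure change alters $W$ in a path-dependent way. The argument must therefore check that the density (\ref{2.3.6}) is a functional of $W$ and $\lambda_{3}$ only, so that the laws of $J$ and $N_{3}$ and their independence from $(W^{\ast},N_{2},V)$ are preserved. A secondary point is that the law of $\bar{J}$ is taken under $\PP^{\ast}$, although Corollary \ref{cor:1} is stated under $\PP$; this is legitimate because the compound Poisson structure of $J$ is unchanged by the measure change. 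Finally, we need that $\tau_{1}$ and $\tau_{3}$ coincide with probability zero, which holds because $\tau_{3}$ has a conditionally continuous distribution given $\lambda_{3}$ and is independent of $\tau_{1}$. This makes the decomposition into $\{\tau_{1}<\tau_{3}\}$ exhaustive for the recovery term.
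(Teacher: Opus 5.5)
Your proposal is correct and follows essentially the same route as the paper. The principal and coupon legs come from factorizing the survival probability via (\ref{4.1}) into $1-P_{t_{0}}(\cdot,\overline{J})$ and $E^{\ast}_{t_{0}}(\cdot,1)$. The recovery leg comes from conditioning on $\tau_{1}$, replacing $\1_{\{\tau_{3}>\tau_{1}\}}$ by $e^{-\int^{\tau_{1}}_{t_{0}}\lambda_{3,v}\dd v}$ (hence by $E^{\ast}_{t_{0}}(\tau_{1},1)$), and integrating against $\dd_{s}P_{t_{0}}(s,\overline{J})$.

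The side conditions you flag are simply subsumed in the paper under ``the given independence structure.'' Your main obstacle is settled directly by the check you describe. By (\ref{2.3.7}), the density (\ref{2.3.6}) is measurable with respect to $\upsigma(W,\lambda_{3})$, which the model takes to be independent of $(J,N_{3})$ under $\PP$. Hence the laws of $J$ and $N_{3}$, and their independence from $(W^{\ast},\lambda_{3})$, carry over unchanged to $\PP^{\ast}$.
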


The value of the CoCo is decomposed into three parts. On the right side of (\ref{4.1.2}), the first term corresponds to redemption of the face value at expiry, while the second comes from the stream of coupon payments, and together they form the non-default leg of the CoCo. The third term gives the payoff of the CoCo at the time of default, forming the default leg. This decomposition fits into the standard structure as \text{e.g.} explained in Cheridito and Xu (\citeyear{CX15}) \text{Sect.} 3. It is substantially identical to the those of traditional defaultable bonds, except that the default mechanism of the CoCo explicitly depends on an accounting ratio. As the CoCo approaches its expiry date, the probability of default tends to 0, and we have that $\Pi^{\mathrm{(WD)}}_{T-}=K\mathbf{1}_{\{\bar{J}_{T}\leq\overline{J}\}\cap\{N_{3,\Lambda_{T}}=0\}}$.

\medskip

\subsection{Equity-convertible CoCos}\label{sec:4.2}

An equity-convertible CoCo has the same scheduled payments as the write-down type considered in Section \ref{sec:4.1}, except that in the event of a usual (accounting-triggered) default, the CoCo is converted into stock shares of the bank at a trigger-dependent conversion ratio $\chi_{\tau}$.\footnote{In practice, although conversion is a pre-agreement, the conversion ratio can well be random, possibly depending on the stock price at the time of default; see, e.g., De Spiegeleer and Schoutens (\citeyear{DSS12}) \text{Sect.} 2.4.} In this paper, we introduce an adaptive structure for the conversion ratio factoring in benefits for both bondholders and shareholders. This structure strikes a delicate balance, with implications for flexible risk exposure adjustment: Bondholders receive partial protection against substantial stock price decreases near default, while conversion also does not excessively dilute current shareholders' ownership stake. The key idea is to use a (conditionally) random conversion ratio which takes the following form of a weighted geometric average:\footnote{The geometric average weighted by a power coefficient $p$ draws motivation from the risk-adjusting functionalities of power-type options in related derivatives research; see Blenman and Clark (\citeyear{BC05}), Xia (\citeyear{X17}), Xia (\citeyear{X19}), Wang and Xia (\citeyear{WX22}), and Li and Xia (\citeyear{LX25}).}
\begin{equation}\label{4.2.1}
  \chi_{\tau}=
  \begin{cases}
    \displaystyle \frac{KR}{S^{p}_{0}S^{1-p}_{\tau}} &\quad\text{if }\tau\leq T\\
    \displaystyle 0 &\quad\text{else},
  \end{cases}
\end{equation}
where $R$ is the conditional recovery rate given by (\ref{4.1.1}), $S_{0}$ is the stock price at the inception of the CoCo, and $p\in[0,1]$ is a customizable weight factor. It is equivalent to regard the conversion ratio as being determined from a convex combination of the corresponding log-prices, namely $p\log S_{0}+(1-p)\log S_{\tau}$. Since $R=0$ on $\{\tau_{3}<\tau_{1}\}$, $\chi_{\tau}$ is actually measurable with respect to $\mathscr{F}_{\tau\wedge T}$. Importantly, $p$ serves as a protection against the relative reduction in the shareholders' ownership: When $p=0$, bondholders are able to acquire shares at prevailing market prices, exposing shareholders to high dilution risks; conversely, when $p=1$, conversion becomes fixed based on the initial share price, potentially resulting in significant losses for bondholders and hindering their conversion capacity.

Note that with the power scheme (\ref{4.2.1}), the conversion price at the time $\tau$ of default is simply
\begin{equation*}
  K\chi_{\tau}S_{\tau}=(1-w)K\bigg(\frac{S_{\tau}}{S_{0}}\bigg)^{p}
\end{equation*}
conditional on $\tau_{1}\leq T<\tau_{3}$, precluding regulatory intervention, and is seen to directly generalize De Spiegeleer and Schoutens (\citeyear{DSS12}) \text{Eq.} (1), when the conversion price depends either on $S_{0}$ or on $S_{\tau}$. The use of the weight factor $p$ also provides an effective approximation for various nonstandard, floored or averaged conversion prices that have emerged in recent years, such as those previously adopted by Credit Suisse (see, e.g., Wilkens and Bethke (\citeyear{WB14}) \text{Tab.} 1); some details are given in Appendix \ref{B}.

Under (\ref{4.2.1}), the time-$t_{0}$ value of the equity-convertible CoCo, prior to termination, is detailed in Theorem \ref{thm:2}.

\begin{theorem}\label{thm:2}
Given $\tau>t_{0}$, we have that
\begin{align}\label{4.2.2}
  \Pi^{\mathrm{(EC)}}_{t_{0}}&=Ke^{-\int^{T}_{t_{0}}r_{v}\dd v}(1-P_{t_{0}}(T,\overline{J}))E^{\ast}_{t_{0}}(T,1)+\sum^{M}_{i=1}c_{i}e^{-\int^{t_{i}}_{t_{0}}r_{v}\dd v} \big(1-P_{t_{0}}\big(t_{i},\overline{J}\big)\big)E^{\ast}_{t_{0}}(t_{i},1)\mathbf{1}_{\{t_{0}<t_{i}\}} \nonumber\\
  &\qquad+\varpi(1-w)K\bigg(\frac{S_{t_{0}}}{S_{0}}\bigg)^{p}\int^{T}_{t_{0}}e^{-\int^{s}_{t_{0}}\tilde{q}^{\circ}_{v}\dd v}\tilde{E}_{t_{0}}(s,1-p\gamma)\dd_{s}\tilde{P}_{t_{0}}(s,\overline{J}), \quad t_{0}\in[0,T),
\end{align}
where
\begin{equation}\label{4.2.3}
  \tilde{q}^{\circ}=pq+(1-p)r+\frac{1}{2}p(1-p)\sigma^{2}+\lambda_{1}(p\psi_{1}(1)-\psi_{1}(p))+\lambda_{2}(p\psi_{2}(1)-\psi_{2}(p)),
\end{equation}
and $\tilde{E}_{t_{0}}(\cdot,\cdot)$ and $\tilde{P}_{t_{0}}(\cdot,\cdot)$ are just $E^{\ast}_{t_{0}}(\cdot,\cdot)$ and $P_{t_{0}}(\cdot,\cdot)$, respectively, evaluated with the parameter changes $\tilde{\kappa}^{(1)}=\kappa^{(1)}+p\sigma\varsigma^{(1)}$, $\tilde{\lambda}_{2}=\lambda_{2}(\psi_{2}(p)+1)$, $\tilde{\mu}_{V}=\mu_{V}+p\sigma^{2}_{V}$, $\tilde{\lambda}_{1}=\lambda_{1}(\psi_{1}(p)+1)$, and $\tilde{\beta}=\beta+p\eta$.\footnote{All unmentioned parameters are unchanged, i.e., equal to their counterparts under $\PP^{\ast}$.}
\end{theorem}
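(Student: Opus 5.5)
The first two terms (redemption and coupons) coincide with those of Theorem \ref{thm:1}: on the no-default event the payoff does not involve the conversion scheme. We therefore reuse that argument. Write $\{\tau>s\}=\{\bar{J}_{s}\leq\overline{J}\}\cap\{N_{3,\Lambda_{s}}=0\}$ via (\ref{4.1}). Use independence of $J$ from $(W^{\ast},N_{2},\{V_{k}\})$, and hence from $\lambda_{3}$. Conditioning on the intensity path gives $\PP^{\ast}[N_{3,\Lambda_{s}}-N_{3,\Lambda_{t_{0}}}=0\,|\,\lambda_{3}]=e^{-\int_{t_{0}}^{s}\lambda_{3,v}\dd v}$. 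Corollary \ref{cor:1} and Proposition \ref{pro:4} then produce the factors $1-P_{t_{0}}$ and $E^{\ast}_{t_{0}}(\cdot,1)$. All the work is in the default leg.

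For that leg the discounted payoff is $e^{-\int_{t_{0}}^{\tau}r_{v}\dd v}K\chi_{\tau}S_{\tau}\1_{\{\tau_{1}\leq T,\tau_{1}<\tau_{3}\}}$, times the independent Bernoulli factor $\varpi$. By (\ref{4.2.1}) this equals $\varpi(1-w)K(S_{\tau_{1}}/S_{0})^{p}$. I factor it as $(S_{t_{0}}/S_{0})^{p}(S_{\tau_{1}}/S_{t_{0}})^{p}$ and insert (\ref{2.3.8}) for $S_{\tau_{1}}/S_{t_{0}}$ on $\{\tau_{3}>\tau_{1}\}$, where $N_{3,\Lambda}$ has not jumped. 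Integrating over $s$ against the law of $\tau_{1}$, the integrand at time $s$ is $e^{-\int_{t_{0}}^{s}r}(S_{s}/S_{t_{0}})^{p}e^{-\int_{t_{0}}^{s}\lambda_{3}}$. The factor $e^{-\int\lambda_{3}}$ comes from averaging out $N_{3}$ given the intensity path. The exponent contains the drift $p\int(r-q-\sigma^{2}/2-\lambda_{1}\psi_{1}(1)-\lambda_{2}\psi_{2}(1)+\gamma\lambda_{3})$. The $\gamma\lambda_{3}$ part combines with $-\int\lambda_{3}$ to give $-(1-p\gamma)\int\lambda_{3}$, which explains the argument $1-p\gamma$ in $\tilde{E}$.

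Next I perform an Esscher-type change of measure $\tilde{\PP}$ with density proportional to $\exp(p\sigma W^{\ast}_{s}+p\sum V_{k}-p\eta J_{s})$, normalized by its $\PP^{\ast}$-mean. The normalizer is $\exp(p^{2}\sigma^{2}s/2+\lambda_{2}\psi_{2}(p)s+\lambda_{1}\psi_{1}(p)s)$. Collecting deterministic terms against $-r$ should yield exactly $-\tilde{q}^{\circ}$ in (\ref{4.2.3}). This is a direct check: $-r+p(r-q)-p\sigma^{2}/2+p^{2}\sigma^{2}/2-p\lambda_{1}\psi_{1}(1)+\lambda_{1}\psi_{1}(p)-\ldots=-\tilde{q}^{\circ}$.

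Under $\tilde{\PP}$, standard exponential tilting facts give the parameter changes:
\begin{itemize}
\item $W^{\ast}$ acquires drift $p\sigma$, so $\kappa^{(1)}t+\varsigma^{(1)}W^{\ast}_{t}$ becomes $(\kappa^{(1)}+p\sigma\varsigma^{(1)})t+\varsigma^{(1)}\tilde{W}_{t}$;
\item the compound Poisson process with normal marks tilts to intensity $\lambda_{2}(\psi_{2}(p)+1)$ and marks of mean $\mu_{V}+p\sigma_{V}^{2}$;
\item the Erlang jumps tilt to intensity $\lambda_{1}(1+\eta p/\beta)^{-\alpha}$ and rate $\beta+p\eta$.
\end{itemize}
Because $W^{\ast}$, $N_{2}$-marks and $J$ are independent, the tilt factorizes. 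The Laplace functional of $\int\lambda_{3}$ and the law of $\bar{J}$ are then computed by Proposition \ref{pro:4} and Corollary \ref{cor:1}/Proposition \ref{pro:3} with tilded parameters. This gives $\tilde{E}_{t_{0}}(s,1-p\gamma)\,\dd_{s}\tilde{P}_{t_{0}}(s,\overline{J})$.

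The main obstacle is that the tilt is applied at the random time $\tau_{1}$, not at a fixed time. I handle it by writing the default leg as $\int_{t_{0}}^{T}\E^{\ast}[\,\cdot\,;\tau_{1}\in\dd s]$. At each fixed $s$ I use the density at horizon $s$ restricted to $\mathscr{F}_{s}$, noting $\{\tau_{1}\in\dd s\}\in\mathscr{F}_{s}$. Consistency of the density martingale then lets me write $\tilde{\PP}[\tau_{1}\in\dd s]=\dd_{s}\tilde{P}_{t_{0}}(s,\overline{J})$.

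A subtlety is that the compensator $\lambda_{1}\alpha t/\beta$ inside $\bar{J}$ and $\overline{J}$ is left unchanged, so under $\tilde{\PP}$ the drift is not the "natural" one for the tilted parameters. I would check that Proposition \ref{pro:3} is applied with the original compensator while jumps follow tilted parameters, as the statement's footnote (unchanged parameters) suggests. Finally, independence under $\tilde{\PP}$ of $\bar{J}$ from $\lambda_{3}$ justifies splitting the expectation into the product $\tilde{E}\cdot\dd\tilde{P}$.
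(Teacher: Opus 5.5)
Your proposal is correct, and it differs from the paper's proof mainly in how the intervention jump is treated.

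The paper keeps $N_{3,\Lambda}$ inside $S^{p}$. Its measure $\tilde{\PP}$ in (\ref{A.11}) tilts all four drivers, including $N_{3,\Lambda}$ via $e^{p\log(1-\gamma)N_{3,\Lambda_{t}}-((1-\gamma)^{p}-1)\Lambda_{t}}$. It regards $S^{p}$ as a fictitious num\'{e}raire stock with a \emph{random} yield $\tilde{q}$ containing $((1-\gamma)^{p}-1+p\gamma)\lambda_{3}$, so $N_{3}$ runs at rate $(1-\gamma)^{p}\lambda_{3}$ under $\tilde{\PP}$. Only at the end does it merge $e^{((1-\gamma)^{p}-1+p\gamma)\int\lambda_{3}}$ with the survival factor $e^{-(1-\gamma)^{p}\int\lambda_{3}}$ to produce $1-p\gamma$.

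You instead integrate $N_{3}$ out under $\PP^{\ast}$ first. This is legitimate: the $N_{3,\Lambda}$ term of $S_{\tau_{1}}$ vanishes on $\{\tau_{1}<\tau_{3}\}$, and $N_{3}$ is independent of $(W^{\ast},N_{2},\{V_{k}\},J)$, hence of $\Lambda$. As a result, $1-p\gamma$ appears at once, $\tilde{q}^{\circ}$ is deterministic from the start, and your density is a product of three independent exponential L\'{e}vy martingales.

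What each route buys:
\begin{itemize}
\item Yours makes the martingale property immediate, and likewise the $\tilde{\PP}$-independence of $\bar{J}$ and $\lambda_{3}$ that the final factorization into $\tilde{E}_{t_{0}}\,\dd_{s}\tilde{P}_{t_{0}}$ needs.
\item Yours also avoids $\log(1-\gamma)$, which degenerates at $\gamma=1$.
\item The paper's route buys a num\'{e}raire interpretation of $\tilde{\PP}$.
\end{itemize}

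You are also more careful than the paper about evaluating the density at the random time $\tau_{1}$. Your slicing in $s$ works. Optional sampling of the density martingale (uniformly integrable on $[t_{0},T]$) at $\tau_{1}\wedge T$ does it in one line, since its value includes the jump of $J$ at $\tau_{1}$ exactly as $S_{\tau_{1}}$ does.

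The compensator check you flag does go through, and it matters. Tak\'{a}cs' ballot identity behind (\ref{A.6}) holds for any constant drift $c>0$, not only $c=\E[J_{1}]$. So $\tilde{P}_{t_{0}}$ is given by (\ref{3.2})--(\ref{3.4}) with $\tilde{\lambda}_{1},\tilde{\beta}$ in $f^{(\mathrm{c})}_{J}$ and in the atoms $e^{-\tilde{\lambda}_{1}t}$. However, the original $\lambda_{1}\alpha/\beta$ stays in every compensator position:
\begin{itemize}
\item the integration limits;
\item the weight $1-\beta y/(\lambda_{1}\alpha t)$ in $\mathfrak{I}$;
\item the shift $x+\lambda_{1}\alpha s/\beta$;
\item the prefactor $\lambda_{1}\alpha/\beta$;
\item the shift in (\ref{3.4}).
\end{itemize}
Note that $\tilde{\lambda}_{1}\alpha/\tilde{\beta}=(\lambda_{1}\alpha/\beta)(1+p\eta/\beta)^{-\alpha-1}<\lambda_{1}\alpha/\beta$ for $p\eta>0$. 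Substituting the tilded parameters everywhere would therefore compute the supremum of a different process, so this is the reading of ``parameter changes'' under which (\ref{4.2.2}) holds.

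Likewise, in $\tilde{E}_{t_{0}}$ the quantity $\kappa^{(1)}t_{0}+\varsigma^{(1)}W^{\ast}_{t_{0}}$ in (\ref{3.9}) keeps its observed value, with $\tilde{\kappa}^{(1)}$ entering only as a coefficient.
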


A comparison between Theorem \ref{thm:1} with Theorem \ref{thm:2} shows that while the non-default legs are trivially identical, the default leg of the equity-convertible CoCo boils down to the same recovered cash payment $KR$ if $p=0$; if $p=1$, then it is perfectly dependent on the stock price at the conversion time, as in the case of standard equity-convertible CoCos. We state the next corollary.

\begin{corollary}\label{cor:3}
Given $\tau>t_{0}$, then for a standard convertible CoCo,
\begin{align}\label{4.2.4}
  \Pi^{\mathrm{(EC)}}_{t_{0}}&=Ke^{-\int^{T}_{t_{0}}r_{v}\dd v}(1-P_{t_{0}}(T,\overline{J}))E^{\ast}_{t_{0}}(T,1)+\sum^{M}_{i=1}c_{i}e^{-\int^{t_{i}}_{t_{0}}r_{v}\dd v} \big(1-P_{t_{0}}\big(t_{i},\overline{J}\big)\big)E^{\ast}_{t_{0}}(t_{i},1)\mathbf{1}_{\{t_{0}<t_{i}\}} \nonumber\\
  &\qquad+\frac{\varpi(1-w)KS_{t_{0}}}{S_{0}}\int^{T}_{t_{0}}e^{-q(s-t_{0})}\tilde{E}_{t_{0}}(s,1-\gamma)\dd_{s}\tilde{P}_{t_{0}}(s,\overline{J}), \quad t_{0}\in[0,T),
\end{align}
where the evaluation of $\tilde{E}_{t_{0}}(\cdot,\cdot)$ and $\tilde{P}_{t_{0}}(\cdot,\cdot)$ is as explained in Theorem \ref{thm:2} with $p=1$.
\end{corollary}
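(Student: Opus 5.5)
Corollary \ref{cor:3} follows by specializing Theorem \ref{thm:2} to $p=1$; nothing new is needed beyond checking that each ingredient of (\ref{4.2.2}) collapses to its counterpart in (\ref{4.2.4}). The non-default legs, i.e., the face-value redemption term and the coupon sum, do not depend on $p$. They are copied verbatim, since they involve only $P_{t_{0}}$ and $E^{\ast}_{t_{0}}$ under the original parameters. In the default leg, the prefactor $(S_{t_{0}}/S_{0})^{p}$ becomes $S_{t_{0}}/S_{0}$, and the second argument $1-p\gamma$ of $\tilde{E}_{t_{0}}$ becomes $1-\gamma$. The tilde-evaluations are understood with the parameter changes of Theorem \ref{thm:2} at $p=1$, namely $\tilde{\kappa}^{(1)}=\kappa^{(1)}+\sigma\varsigma^{(1)}$, $\tilde{\lambda}_{2}=\lambda_{2}(\psi_{2}(1)+1)$, $\tilde{\mu}_{V}=\mu_{V}+\sigma_{V}^{2}$, $\tilde{\lambda}_{1}=\lambda_{1}(\psi_{1}(1)+1)$ and $\tilde{\beta}=\beta+\eta$. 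This is exactly what the statement prescribes.

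The only computation is the effective discount rate (\ref{4.2.3}) at $p=1$. Substituting gives
\[
\tilde{q}^{\circ}=q+0\cdot r+0+\lambda_{1}(\psi_{1}(1)-\psi_{1}(1))+\lambda_{2}(\psi_{2}(1)-\psi_{2}(1))=q .
\]
The $r$-term vanishes because $1-p=0$, and the variance term $\tfrac12p(1-p)\sigma^{2}$ vanishes for the same reason. The two jump-correction differences cancel identically. Since $q$ is constant, $e^{-\int_{t_{0}}^{s}\tilde{q}^{\circ}_{v}\dd v}=e^{-q(s-t_{0})}$. In particular, the deterministic short rate $r$ drops out of the default leg entirely. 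This is consistent with the economics: at $p=1$ the bondholder receives a share-denominated payoff, which is discounted at the dividend yield under the share measure.

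The main point to check is that $p=1$ is an admissible value in Theorem \ref{thm:2}, since $p\in[0,1]$ is allowed. Two conditions must be verified there. First, $\psi_{1}(p)$ requires $p>-\beta/\eta$, which holds trivially at $p=1$. Second, the tilted parameters must remain valid, which they do: $\tilde{\lambda}_{1}=\lambda_{1}(1+\eta/\beta)^{-\alpha}>0$, $\tilde{\lambda}_{2}=\lambda_{2}e^{\mu_{V}+\sigma_{V}^{2}/2}>0$, and $\tilde{\beta}>0$. With these checked, the dominated-convergence and measure-change arguments underlying Theorem \ref{thm:2} apply unchanged, and (\ref{4.2.4}) follows by direct substitution.
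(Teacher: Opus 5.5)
Your proposal is correct and takes the same route as the paper, which proves the corollary simply by setting $p=1$ in Theorem~\ref{thm:2} and noting that $\tilde{q}\equiv\tilde{q}^{\circ}=q$, so the discount factor becomes $e^{-q(s-t_{0})}$ and $1-p\gamma$ becomes $1-\gamma$. Your extra admissibility checks at $p=1$ are harmless but not needed, since Theorem~\ref{thm:2} already covers all $p\in[0,1]$.
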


In an ideal situation, dynamically hedging convertible CoCos requires positions linked to the underlying risk factors, which specifically include fluctuations in the CET1 ratio, stock price, and the propensity for regulatory intervention under market stress. In practice, while the stock itself serves as a natural hedging instrument, additional risks can be managed using credit default swaps (CDSs) or other credit derivatives; see the detailed discussions in De Spiegeleer and Schoutens (\citeyear{DSS12}) and Wilkens and Bethke (\citeyear{WB14}). In the presence of jump risks, constructing perfect hedges is not feasible. Therefore, we concentrate on delta-gamma hedging for implementation, using only the stock. The hedge is constructed at a frequency at least matching that of the observed market data (typically daily) and is based on computed model prices. Most importantly, even with this partial hedge, our empirical analysis (Section \ref{sec:6}) will demonstrate that sole reliance on delta-gamma hedging already significantly reduces hedging errors compared to various results in the literature under the same settings.



\bigskip

\section{Model implementation}\label{sec:5}

As mentioned earlier, our valuation approach represents a hybrid methodology by integrating financial data on CET1 ratios and stock returns while accounting for forced, regulation-based default risk. Unlike a purely calibrated model that takes no advantage of these observed fluctuations, implementation can involve both statistical estimation and model calibration, which is a key benefit when it comes to data-driven CoCo price predictions. This section aims to elucidate the procedures for implementing the proposed model, to showcase its validity, efficiency, and to set the groundwork for subsequent case studies.

We structure the implementation into three phases: Preparation (Phase O), Estimation (Phase I), and Calibration (Phase II). Phase O is a preliminary stage focused on collecting fundamental data pertaining to the CoCo under examination. Much of this information is accessible from the CoCo's term sheets. Also, during this phase, efforts are made to secure data on the risk-free interest rate term structure and to derive reasonable estimates for dividend yields spanning the CoCo's lifespan. In particular, regarding international banks, we employ (currency-specific) swap rates to outline the interest rate term structure (with short rate process $r$). As a consequence, for any fixed coupon date $t_{i}$, we have the transformation $\int^{t_{i}}_{0}r_{v}\dd v=r(t_{i})t_{i}$ within the valuation formulas in Section \ref{sec:4}, where $r(t_{i})$ represents the $t_{i}$-year spot rate. The constant dividend yield $q$ is estimated using current dividend data and adjusted for continuous compounding.

According to Section \ref{sec:2}, the model parameters can be broadly divided into three groups: those associated with the CET1 ratio, the stock price, and the propensity for regulatory intervention, respectively. Phase I then focuses on estimating the CET1 ratio and stock price parameters using time series data, directly available from the bank's (quarterly) financial reports and market data, respectively, while Phase II involves determining, by leveraging credit spread or CoCo price data, optimal parameter values for the intensity process $\lambda_{3}$ that gauges the propensity for intervention, including the auxiliary parameter $\varpi$ for the probability of no accounting-triggered intervention. Notably, in Phase I, it is crucial to limit data inclusion up to time $\tau_{3}$ marking the onset of intervention; this can be reasonably characterized by public announcements of bankruptcy or merger plans accompanied by significant stock price drops (as in the case of Credit Suisse). In addition, depending on data availability, the CET1 ratio parameters can be alternatively estimated using stock return data, along with the stock price parameters.\footnote{Yet another option is to calibrate the CET1 ratio parameters directly on contemporaneous CoCo price data during Phase II, along with the parameters of $\lambda_{3}$.}

Table \ref{tab:1} provides a summary of the parameters in terms of their roles and sources; only those directly used in Theorem \ref{thm:1} and Theorem \ref{thm:2} are included.

\clearpage

\begin{table}[H]\small
  \centering
  \caption{Parameter categorization for CoCo valuation}
  \label{tab:1}
  \begin{tabular}{c|ccc}
    \toprule
    \textbf{Phase} & \textbf{parameter} & \textbf{category} & \textbf{data source} \\ \midrule
    \multirow{8}{*}{O} & CoCo notional ($K$) & \multirow{5}{*}{static input} & \multirow{5}{*}{term sheet} \\
    & CoCo maturity ($T$) & & \\
    & coupon numbers ($M$) & & \\
    & coupon dates ($t_{i}$) & & \\
    & coupon payments ($c_{i}$) & & \\ \cline{2-4}
    & risk-free interest rate ($r$) & dynamic input & swap rates \\ \cline{2-4}
    & dividend yield ($q$) & dynamic input & \makecell{future dividend \\ yield estimates} \\ \hline
    \multirow{3}{*}{I or II} & solvency shock intensity ($\lambda_{1}$)  & \multirow{3}{*}{\makecell{dynamic input \\ or calibration}} & \multirow{3}{*}{\makecell{financial reports, return data \\ or CoCo price data}} \\
    & solvency shock shape ($\alpha$) & & \\
    & solvency shock rate ($\beta$) & & \\ \hline
    \multirow{5}{*}{I} & stock volatility ($\sigma$) & \multirow{5}{*}{dynamic input} & \multirow{5}{*}{return data} \\
    & stock jump intensity ($\lambda_{2}$) & & \\
    & stock jump mean ($\mu_{V}$) & & \\
    & stock jump scale ($\sigma_{V}$) & & \\
    & stock-CET1 leverage ($\eta$) & & \\ \hline
    \multirow{7}{*}{II} & \makecell{regulatory intervention \\ propensity parameters \\ ($\kappa^{(1,2)},\varsigma^{(1,2)},\lambda^{(2)}_{3,0}$)} & \makecell{dynamic input \\ or calibration} & \makecell{credit spread data \\ or CoCo price data} \\ \cline{2-4}
    & transformed trigger barrier ($\overline{J}$) & \makecell{dynamic input \\ or calibration} & \makecell{term sheet and financial reports \\ or CoCo price data} \\ \cline{2-4}
    & independence probability ($\varpi$) & calibration & CoCo price data \\
    & intervention shock scale ($\gamma$) & calibration & CoCo price data \\ \cline{2-4}
    & write-down fraction ($w$) & \multirow{2}{*}{\makecell{static input \\ or calibration}} & \multirow{2}{*}{\makecell{term sheet \\ or CoCo price data}} \\
    & conversion power ($p$) & & \\
    \bottomrule
  \end{tabular}
\end{table}

\medskip

\subsection{Estimation procedures}\label{sec:5.1}

We describe the key procedures for conducting statistical estimation in Phase I. For a target bank, we collect $I_{B}$ quarterly CET1 ratio data $\check{B}=\{\check{B}_{i}\}^{I_{B}}_{i=1}$ and $I_{Y}$ daily stock return (excluding dividends) data $\check{Y}=\{\check{Y}_{i}\}^{I_{Y}}_{i=1}$.\footnote{These are simple returns which can be alternatively computed from daily closing stock price data, say $\check{S}$.} The CET1 ratio data and stock return data should cover the same ending period (in quarters), while the starting periods are generally allowed to be different. Then, to recover the source of solvency shocks from the CET1 ratio data based on (\ref{2.2.1}), we compute the the following inverse-transformed series:
\begin{equation}\label{5.1.1}
  \breve{J}_{i}=\tan\frac{\pi(1-2\check{B}_{i})}{2}+\cot(\pi(1-\check{B}_{0})),\quad i\in\mathbb{Z}\cap[1,I_{B}].
\end{equation}
The inversion is always valid because $|\check{B}_{i}|<1$ by definition, and the second addend in (\ref{5.1.1}) is merely a correcting term. Thus, we can focus on the difference series
\begin{equation}\label{5.1.2}
  \breve{H}_{i}=\breve{J}_{i}-\breve{J}_{i-1},\quad i\geq2.
\end{equation}
It is also beneficial to conduct the augmented Dickey--Fuller test to ensure that $\breve{H}$ is a stationary time series. In the meantime, we compute the series of daily log-returns via the standard relation
\begin{equation*}
  \check{X}_{i}=\log(1+\check{Y}_{i}),\quad i\geq2,
\end{equation*}
which assumes the (real-valued) distribution specified by (\ref{2.3.5}). The augmented Dickey--Fuller test can be used similarly to guarantee the stationarity of $\check{X}$.

Note that the difference random variables $\breve{H}_{i}$'s from (\ref{5.1.2}) have a (negatively) shifted compound Poisson-Erlang distribution under the model assumptions, which is challenging to estimate because it is a mixed discrete--continuous distribution whose only point mass is parameterized at a time multiple of $-\lambda_{1}\alpha/\beta$.\footnote{This case is inherently different from a non-shifted distribution whose point mass is fixed at 0 (not parameterized). In this latter case, MLE has been studied before -- see Withers and Nadarajah (\citeyear{WN11}) \text{Sect.} 4. We remark that the procedures proposed herein represent a new approach to estimating compensated compound Poisson processes, and the theoretical underpinnings of this method will be explored in a separate paper to avoid digression.} To overcome this difficulty, we propose a verification-based method. Specifically, due to the positive support of the (non-shifted) compound Poisson-Erlang distribution, the negative values observed in $\breve{H}_{i}$'s must undergo the compensation effect (or negative drifting). This justifies the following sign correction:
\begin{equation*}
  \check{H}_{i}=\breve{H}_{i}-\min\breve{H}+\varepsilon,
\end{equation*}
for a customizable correction factor $\varepsilon>0$. As a result, $\check{H}_{i}$'s are guaranteed to have positive values.

With the closed-form formulas in Proposition \ref{pro:1} and Proposition \ref{pro:2}, implementation of Maximum Likelihood Estimation (MLE) is generally very efficient. In particular, by setting a small value (say 0.01) for $\varepsilon$, first we estimate the three CET1 ratio parameters according to
\begin{equation}\label{5.1.3}
  (\hat{\lambda}_{1},\hat{\alpha},\hat{\beta})\in \underset{(\lambda_{1},\alpha,\beta)\in\mathbb{R}_{++}\times\mathbb{Z}_{++}\times\mathbb{R}_{++}}{\mathrm{argmax}}\sum^{I_{B}}_{i=2}\log f^{(\mathrm{c})}_{J_{1/4}}(\check{H}_{i};\lambda_{1},\alpha,\beta),
\end{equation}
where $1/4$ corresponds to the quarterly frequency and $f^{(\mathrm{c})}_{J_{1/4}}$ is recalled to be the continuous part of the (local) density function $J_{1/4}$ (see Proposition \ref{pro:3}). After this, we check if $\hat{\lambda}_{1}\geq4\log100\approx18.4207$, which condition is equivalent to $e^{-\hat{\lambda}_{1}/4}\leq0.01$. It is deemed that successful verification of this condition implies that the point mass of the distribution of $J_{1/4}$ is negligible for practical purposes, leading to the adoption of $(\hat{\lambda}_{1},\hat{\alpha},\hat{\beta})$ as the final parameter estimates; in case the verification fails, we further compute
\begin{align}\label{5.1.4}
  \hat{\lambda}'_{1}&=-4\log\bigg(1-\int^{\infty}_{0+}f^{(\mathrm{c})}_{J_{1/4}}(x;\hat{\alpha},\hat{\beta})\dd x\bigg), \nonumber\\
  (\hat{\alpha}',\hat{\beta}')&\in\underset{(\alpha,\beta)\in\mathbb{Z}_{++}\times\mathbb{R}_{++}}{\mathrm{argmax}}\sum^{I_{B}}_{i=2}\log f^{(\mathrm{c})}_{J_{1/4}}(\check{H}_{i};\hat{\lambda}_{1},\alpha,\beta),
\end{align}
and adopt $(\hat{\lambda}'_{1},\hat{\alpha}',\hat{\beta}')$ as the refined parameter estimates. Then, given the CET1 ratio parameter estimates, the six stock-specific parameters are directly estimated via
\begin{equation}\label{5.1.5}
  (\hat{\mu},\hat{\sigma},\hat{\lambda}_{2},\hat{\mu}_{V},\hat{\sigma}_{V},\hat{\eta})\in \underset{(\mu,\sigma,\mu_{V},\sigma_{V},\eta)\in\mathbb{R}\times\mathbb{R}_{++}\times\mathbb{R}\times\mathbb{R}^{2}_{++}} {\mathrm{argmax}}\sum^{I_{S}}_{i=2}\log f_{X_{1/252}}(\check{X}_{i};\hat{\lambda}_{1},\hat{\alpha},\hat{\beta},\mu,\sigma,\lambda_{2},\mu_{V},\sigma_{V},\eta),
\end{equation}
where $1/252$ corresponds to the daily frequency and $(\hat{\lambda}_{1},\hat{\alpha},\hat{\beta})$ should be replaced by $(\hat{\lambda}'_{1},\hat{\alpha}',\hat{\beta}')$ if $\hat{\lambda}_{1}<4\log100$. While (\ref{5.1.5}) yields six parameter estimates, only five are needed for the pricing--hedging of CoCos, as shown in Table \ref{tab:1}, as the drift coefficient $\mu$ is removed by the measure change.

If it is infeasible to obtain CET1 ratio data ($\check{B}$) for the period under consideration, the CET1 ratio parameters and the stock price parameters can be estimated in a single step, i.e., combining (\ref{5.1.3}) and (\ref{5.1.5}), via
\begin{align}\label{5.1.6}
  (\hat{\lambda}_{1},\hat{\alpha},\hat{\beta},\hat{\mu},\hat{\sigma},\hat{\lambda}_{2},\hat{\mu}_{V},\hat{\sigma}_{V},\hat{\eta}) &\in\underset{(\lambda_{1},\alpha,\beta,\mu,\sigma,\lambda_{2},\mu_{V},\sigma_{V},\eta) \in\mathbb{R}_{++}\times\mathbb{Z}_{++}\times\mathbb{R}_{++}\times\mathbb{R}\times\mathbb{R}^{2}_{++}\times\mathbb{R}\times\mathbb{R}^{2}_{++}} {\mathrm{argmax}} \nonumber\\
  &\qquad \sum^{I_{S}}_{i=2}\log f_{X_{1/252}}(\check{X}_{i};\lambda_{1},\alpha,\beta,\mu,\sigma,\lambda_{2},\mu_{V},\sigma_{V},\eta),
\end{align}
provided that $\hat{\lambda}_{1}\geq4\log 100$.

\medskip

\subsection{Calibration procedures}\label{sec:5.2}

In Phase II, the parameters pertaining to regulatory intervention are optimized through calibration using available CoCo price data. We start by outlining the main steps to efficiently implement the CoCo valuation formulas (\ref{4.1.2}) and (\ref{4.2.2}).

The distribution formulas in Proposition \ref{pro:3} only involve proper integrals, with integrands possessing no singularities or oscillations over the domains of integration, and so they are very amenable to numerical computations; in particular, the standard Gauss--Kronrod quadrature rule (see, e.g., Ma, Rokhlin, and Wandzura (\citeyear{MRW96})) is applicable to approximate the integral in (\ref{3.3}) as
\begin{equation*}
  \hat{\mathfrak{I}}(t)=\frac{\lambda_{1}\alpha t}{\beta M_{1}}\sum^{M_{1}}_{k=0}\bigg(1-\frac{k}{M_{1}}\bigg)f^{(\mathrm{c})}_{J_{t}}\bigg(\frac{\lambda_{1}\alpha kt}{\beta M_{1}}\bigg)
\end{equation*}
for $t>0$ and $x>0$, where $M_{1},M_{2}\gg1$ are the numbers of quadratures. A similar approximation for (\ref{3.2}) is then
\begin{align*}
  \hat{P}(t,x)&=1-e^{-\lambda_{1}t}-\frac{x+\lambda_{1}\alpha t/\beta}{M_{3}}\sum^{M_{3}}_{k=0}f^{(\mathrm{c})}_{J_{t}}\bigg(\frac{k(x+\lambda_{1}\alpha t/\beta)}{M_{3}}\bigg)+\frac{\lambda_{1}\alpha t}{\beta M_{4}}\sum^{M_{4}}_{k=0}\hat{\mathfrak{I}}\bigg(\frac{(M_{4}-k)t}{M_{4}}\bigg)f^{(\mathrm{c})}_{J_{t}}\bigg(x+\frac{\lambda_{1}\alpha kt}{\beta M_{4}}\bigg)
\end{align*}
with $M_{3},M_{4}\gg1$. In applications, it is reasonable to require that the relative numerical error of each approximation be less than $10^{-4}$. For example, $|\hat{\mathfrak{I}}(t)-\mathfrak{I}(t)|<10^{-4}$; we shall stick to this precision goal during implementation.

For the time integrals contained in the valuation formula (\ref{4.1.2}), we may directly discretize the time differential $\dd_{s}P(s,\overline{J})$, by using the approximation
\begin{align*}
  \int^{T}_{t_{0}}e^{-\int^{s}_{t_{0}}r_{v}\dd v}E^{\ast}_{t_{0}}(s,1)\dd_{s}P_{t_{0}}(s,\overline{J})&\approx\sum^{M_{5}}_{k=\lfloor M_{5}t_{0}/T\rfloor+1}\exp\bigg(-\int^{kT/M_{5}}_{t_{0}}r_{v}\dd v\bigg) E^{\ast}_{t_{0}}\bigg(\frac{kT}{M_{5}},1\bigg) \\
  &\qquad\times\bigg(P_{t_{0}}\bigg(\frac{kT}{M_{5}},\overline{J}\bigg)-P_{t_{0}}\bigg(\frac{(k-1)T}{M_{5}},\overline{J}\bigg)\bigg)
\end{align*}
with $M_{5}\gg1$. This construction implies that the default event can only occur at the time points $\{kT/M_{5}\}^{M_{5}}_{k=1}\subset[0,T]$ rather than in a continuous manner. A similar discretization scheme applies for (\ref{4.2.2}) as well, or precisely,
\begin{align*}
  \int^{T}_{t_{0}}e^{-\int^{s}_{t_{0}}\tilde{q}^{\circ}_{v}\dd v}\tilde{E}_{t_{0}}(s,1-p\gamma)\dd_{s}\tilde{P}_{t_{0}}(s,\overline{J})&\approx\sum^{M_{5}}_{k=\lfloor M_{5}t_{0}/T\rfloor+1}\exp\bigg(-\int^{kT/M_{5}}_{t_{0}}\tilde{q}^{\circ}_{v}\dd v\bigg)\tilde{E}_{t_{0}}\bigg(\frac{kT}{M_{5}},1-p\gamma\bigg) \\
  &\qquad\times\bigg(\tilde{P}_{t_{0}}\bigg(\frac{kT}{M_{5}},\overline{J}\bigg) -\tilde{P}_{t_{0}}\bigg(\frac{(k-1)T}{M_{5}},\overline{J}\bigg)\bigg).
\end{align*}
To balance between accuracy and computational effort, we set the criterion for choosing suitable values of $M_{5}$ as to accommodate default with at least daily frequency, i.e., $M_{5}\geq252(T-t_{0})$ given any $t_{0}\in[0,T]$.

With the above implementation steps in mind, for subsequent calibration, we collect $I_{\rm CoCo}$ price data, $\{\check{\Pi}_{i}\}^{I_{\rm CoCo}}_{i=1}$, for a certain CoCo issued by the target bank, which can be either of the write-down type or the equity convertible type. Suppose that this CoCo has a fixed expiry date $T$ (recalling the discussion in Section \ref{sec:4}). Observations are made for the same CoCo on consecutive dates $\big\{t^{(i-1)}_{0}\big\}^{I_{\rm CoCo}}_{i=1}$, with $t^{(0)}_{0}=0$.

The parameters attached to regulatory intervention are directly optimized through calibration using the CoCo price data. For this purpose, we also need to collect $I_{\rm cs}$ data $\check{\rm cs}=\{\check{\rm cs}_{i,j}\}^{I_{\rm cs}}_{i=1}$ for the bank's credit spread term structure, where each data point represents a different observation date $t^{(i-1)}_{0}$ and maturity $T$, with $I_{\rm cs}\geq I_{\rm CoCo}$ and $T$ matching the CoCo expiry date. To ensure that the calibration reflects contemporaneous information along with the above estimation, it is important for both the CoCo price data and the credit spread data to align with the latest period for observing the CET1 ratio data. Towards this end, the dates $\big\{t^{(i-1)}_{0}\big\}^{I_{\rm cs}}_{i=1}$ are chosen to match the exact same month when $\check{B}_{I_{B}}$ is available and regulatory intervention takes place. Based on Proposition \ref{pro:4} and Theorem \ref{thm:2}, the associated intervention probabilities can then be dynamically estimated in terms of
\begin{align}\label{5.2.1}
  E^{\ast}_{t^{(i-1)}_{0}}(T,1)&\approx\exp\big(-\check{\rm cs}_{t^{(i-1)}_{0}}\big)E^{\ast}_{0}(T-t^{(i-1)}_{0},1), \nonumber\\
  \tilde{E}_{t^{(i-1)}_{0}}(T,1)&\approx\exp\big(-\check{\rm cs}_{t^{(i-1)}_{0}}\big)\tilde{E}_{0}(T-t^{(i-1)}_{0},1),\quad i\in\mathbb{Z}\cap[1,I_{\rm cs}],
\end{align}
where on the right sides, both terms $E^{\ast}_{0}(T-t^{(i-1)}_{0},1)$ and $\tilde{E}_{0}(T-t^{(i-1)}_{0},1)$ are evaluated with $\kappa^{(1)}{t^{(i-1)}_{0}}+\varsigma^{(1)}W_{t^{(i-1)}_{0}}=\lambda^{(2)}_{3,t^{(i-1)}_{0}}=0$ (see (\ref{3.9}) and (\ref{3.10})). The approximation in (\ref{5.2.1}) makes use of real-time credit spreads as dynamic inputs while reducing the number of intervention-based parameters to just four: $(\kappa^{(1)},\varsigma^{(1)},\kappa^{(2)},\varsigma^{(2)})\in\mathbb{R}_{++}\times\mathbb{R}_{--}\times\mathbb{R}^{2}_{++}$.

Meanwhile, the independence probability parameter $\varpi$ and the shock scale parameter $\gamma$ should be optimized via the CoCo price data as well, with which the CET1 ratio and intervention parameters can be optionally tuned.

As the accounting trigger barrier $\underline{B}$ is typically set by regulatory authorities (in accordance with Basel III guidelines), a reasonable choice of $\overline{J}$, given data $\check{B}$ on the CET1 ratio, can be made according to (\ref{4.2}), with $B_{0}=\check{B}_{0}$ being the initial CET1 ratio. However, this calculation method is overly susceptible to the single observation $\check{B}_{0}$, inherent to the collected CET1 ratio data from Section \ref{sec:5.1}, and so a more robust method, which we recommend and will adopt later, is to optimize $\overline{J}$ using the CoCo price data.

Also can be optimized from the CoCo price data are the write-down fraction $w\in[0,1]$ and the conversion power $p>0$, especially where they are not explicitly stated in the term sheet. Therefore, in theory, calibration on CoCo price data can involve from $4$ to $4+2+3+3=12$ parameters, noting that the parameters tied to the CET1 ratio may be alternatively optimized in this way.\footnote{However, it is generally recommended to limit a single calibration to no more than 10 parameters for the sake of robustness, which is the case of the subsequent empirical analysis.} Then, with $\vartheta$ being a placeholder for to-be-calibrated parameters other than $\varpi$ and $\gamma$, we proceed with the following minimization problem for the relative pricing error:
\begin{equation*}
  \underset{(\varpi,\gamma,\vartheta)\in(0,1]\times[0,1]\times\Theta}{\min}\frac{1}{I_{\rm CoCo}}\sum^{I_{\rm CoCo}}_{i=1}\frac{\big|\Pi_{t^{(i-1)}_{0}}-\check{\Pi}_{i}\big|}{\check{\Pi}_{i}},
\end{equation*}
where the model prices $\Pi_{t^{(i-1)}_{0}}\equiv\Pi_{t^{(i-1)}_{0}}(\varpi,\gamma,\vartheta)$ are as shown in Theorem \ref{thm:1} and Theorem \ref{thm:2}, with all other parameters fixed, and $\Theta$ denotes the (floating) value range of $\vartheta$. We refer again to Table \ref{tab:1} for a summary of parameter categories. To facilitate comparison with existing literature, in our empirical analysis we shall use the root-mean-square error (RMSE), with
\begin{equation}\label{5.2.2}
  \big(\hat{\varpi},\hat{\gamma},\hat{\vartheta}\big)\in\underset{(\varpi,\gamma,\vartheta)\in(0,1]\times[0,1]\times\Theta}{\mathrm{argmin}} \sqrt{\frac{1}{I_{\rm CoCo}}\sum^{I_{\rm CoCo}}_{i=1}\big(\Pi_{t^{(i-1)}_{0}}-\check{\Pi}_{i}\big)^{2}},
\end{equation}
which has more emphasis on the absolute error than the relative error.

\bigskip

\section{Empirical analysis}\label{sec:6}

To validate the performance of our proposed valuation framework, for our empirical analysis we conducted five detailed case studies spanning three major global banking institutions -- Lloyds Banking Group (NYSE: LYG), Credit Suisse (NYSE: CS), and China Construction Bank (CCB | HKG: 0939) -- across different time periods. These banks were selected not only for their significance in the CoCo bond market but also to provide a diverse set of market conditions and regulatory environments for robust model setting and testing. For both Lloyds and Credit Suisse, we analyzed two distinct time horizons: a recent period reflecting significant market events in 2023, and an earlier period (2009--2011) allowing for direct comparison with existing literature, particularly the results in Wilkens and Bethke (\citeyear{WB14}).


\medskip

\subsection{Data and preparation}\label{sec:6.1}

For each period under consideration, daily stock return data were obtained from WRDS CRSP, while quarterly accounting ratios were hand-collected from the banks' annual or quarterly financial reports. All other data, including clean prices of CoCo bonds (all equity-convertible, based on a notional of \$100, in USD or HKD, with quarterly or semiannual coupon payments), credit spreads, interest rates, and contemporaneous dividend yield estimates, were sourced from Bloomberg financial terminals. Special care was taken to ensure temporal alignment across data series with differing frequencies, such as daily stock prices and quarterly accounting ratios. For the Credit Suisse 2021--2023 case study, the data are uniformly truncated as of March 17, 2023, which marks the onset of a regulatory-triggered default ($\tau_{3}$); this was done to reflect the significant market dislocation that followed, including a sharp decline of approximately 50\% in the stock price on March 20 (see again Damyanova (\citeyear{D23}) and Vossos and Keatinge (\citeyear{VK23})).\footnote{On March 19, UBS announced its agreement to acquire Credit Suisse for approximately \$3.2 billion -- a substantial discount to its prior market valuation -- in a deal orchestrated by the Swiss government and the Swiss National Bank to safeguard financial stability.}



\smallskip

As we have seen from Proposition \ref{pro:3}, computation of the default probability $P(t,x)$, for $t\in(0,T]$ and $x>0$, requires evaluating three numerical integrals, two of which are iterated, with the inner one contained in the function $\mathfrak{I}(t)$, in (\ref{3.3}). Moreover, based on our discussion in Section \ref{sec:5.2}, approximations for the default legs\footnote{Notably, the default legs are generally not expressible in closed form under the intensity-based approach (see, e.g., Cheridito and Xu (\citeyear{CX15})), which is adopted at the regulatory intervention level in the present paper.} in (\ref{4.1.2}) and (\ref{4.2.2}) are based on a large number $M_{5}$ of time steps in order to guarantee an adequate default schedule in practice. Therefore, the calibration task (\ref{5.2.2}) can be computationally costly to carry out with direct numerical integration. For this reason, as a means to significantly boost computational efficiency, we employ radial basis function (RBF) interpolation for the default probability with respect to its five input variables, including three CET1 ratio parameters: $(t,x;\lambda_{1},\alpha,\beta)$.

\begin{figure}[H]
  \centering
  \includegraphics[scale=0.3]{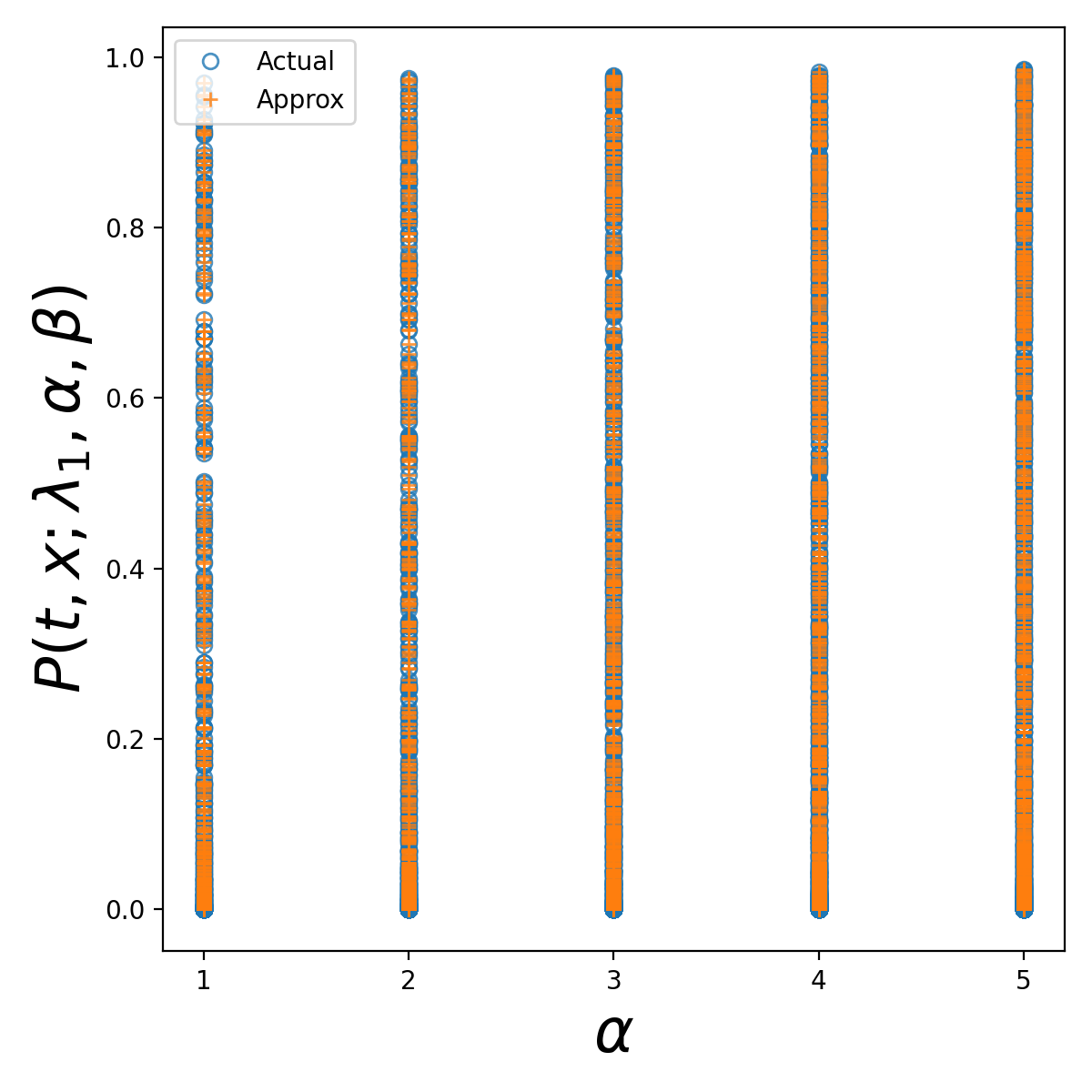}
  \includegraphics[scale=0.3]{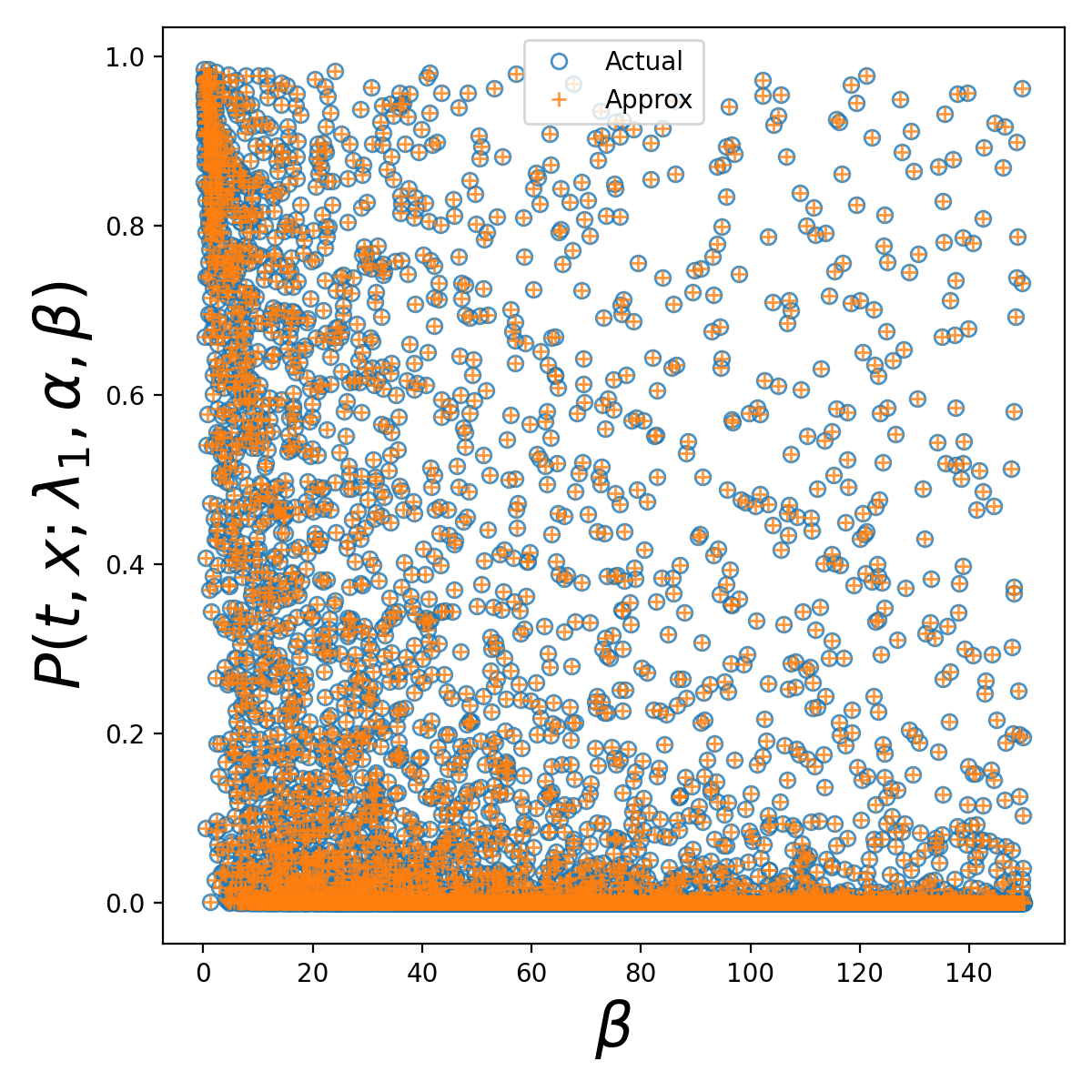}
  \includegraphics[scale=0.3]{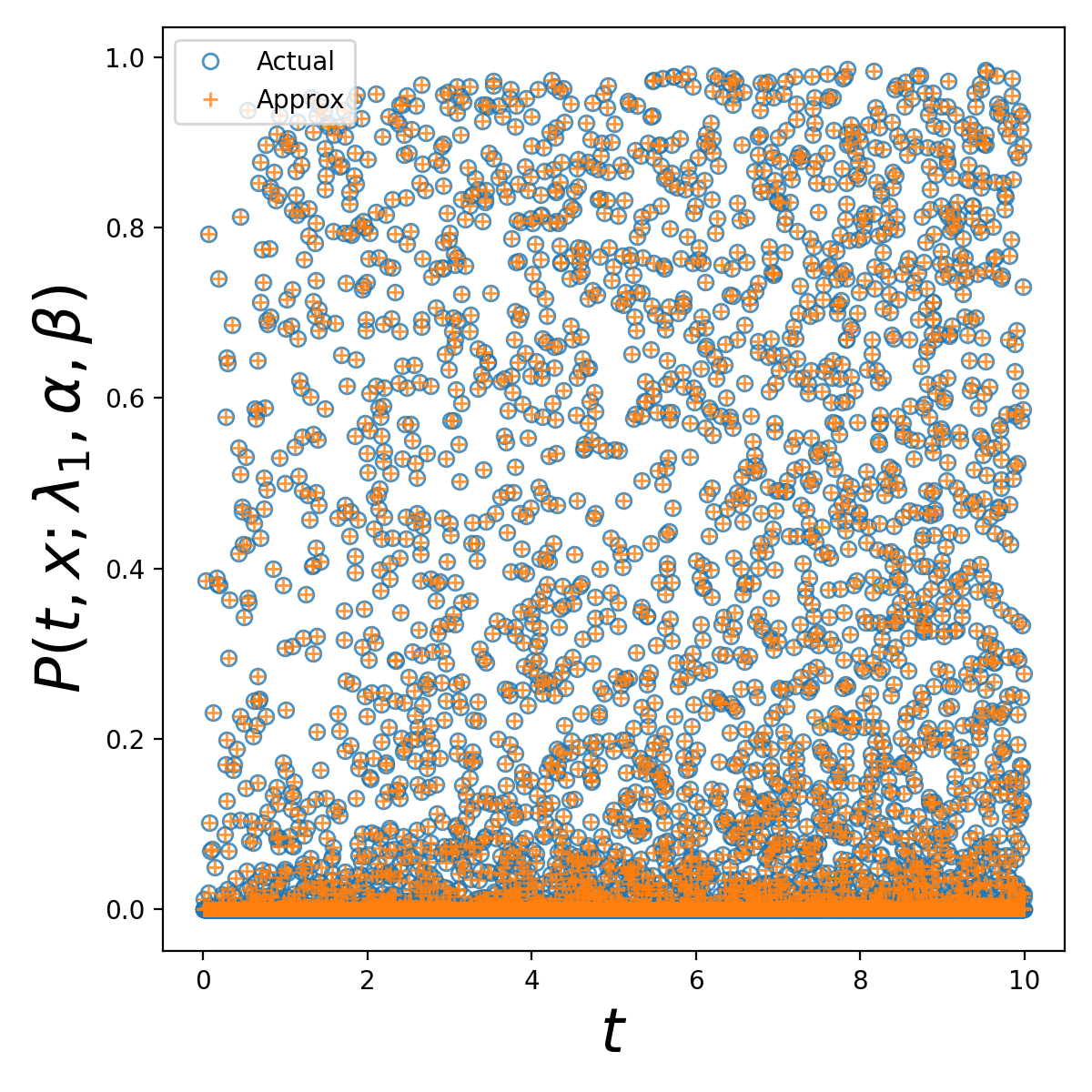}
  \bigskip \\
  \includegraphics[scale=0.3]{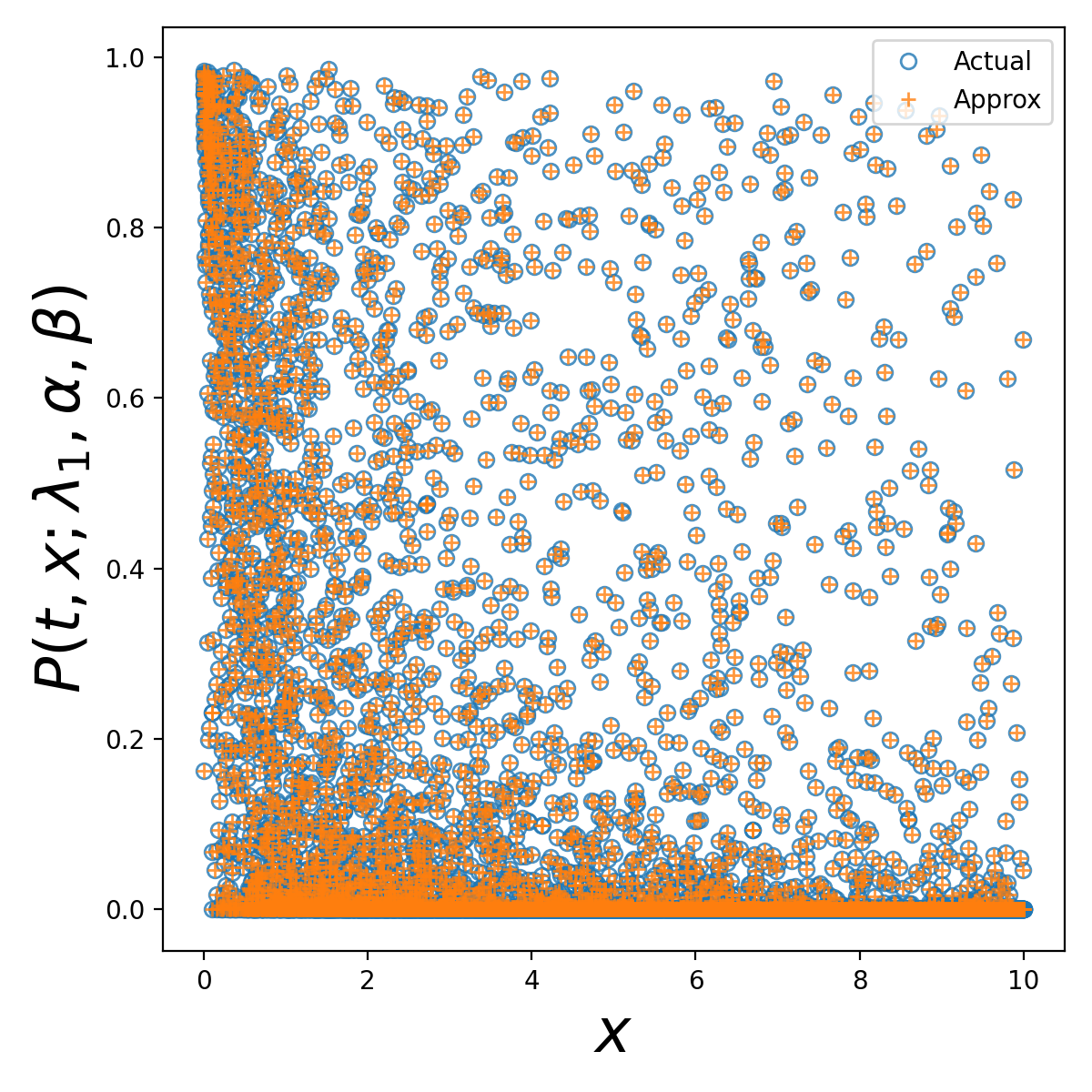}
  \includegraphics[scale=0.3]{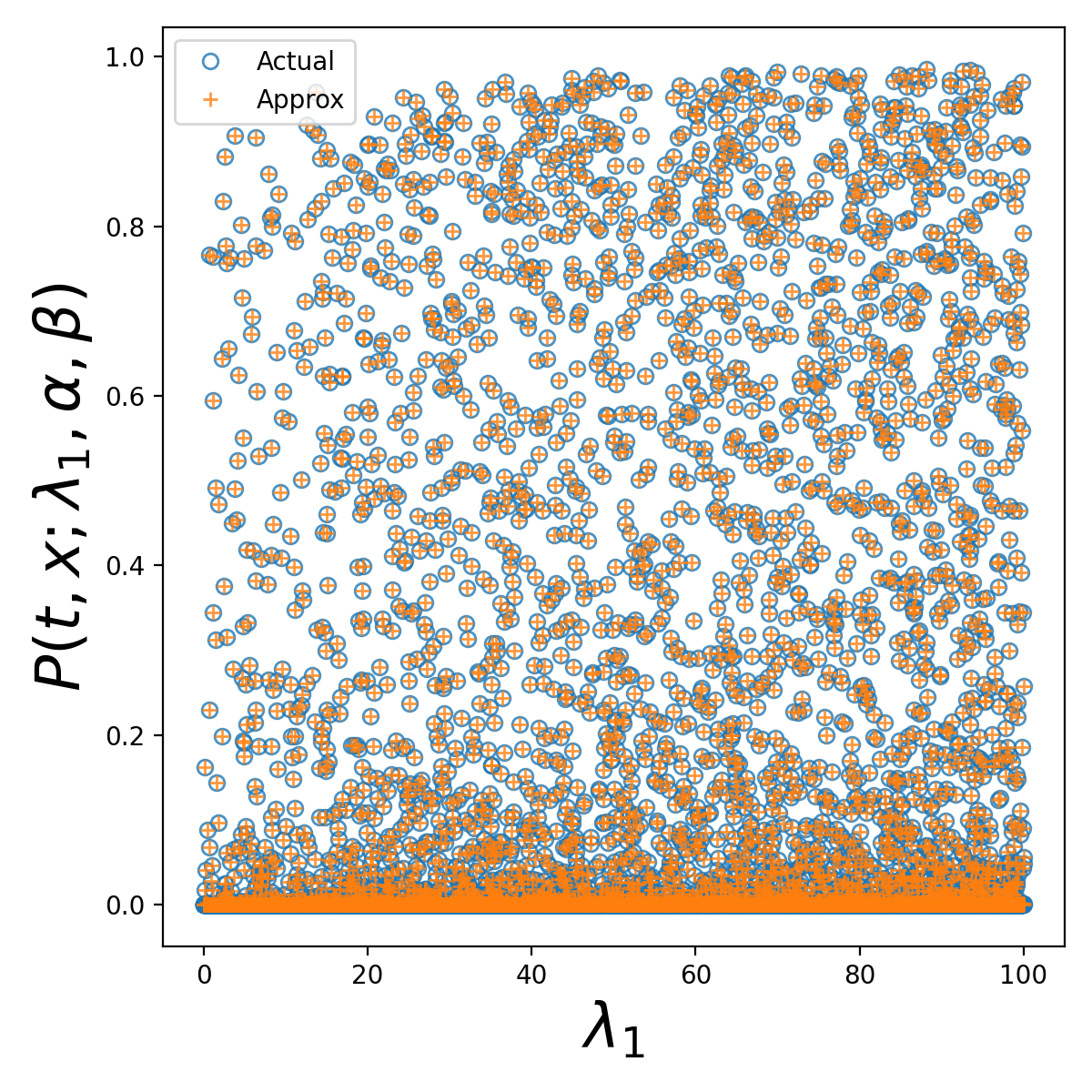}
  \caption{Spline approximation results for default probability (\ref{3.2}) with (\ref{3.3})}
  \label{fig:1}
\end{figure}

More specifically, to generate a reasonably large training set, we uniformly sample $\mathfrak{n}=10,000$ points based on the value ranges $\lambda_{1}\in(0,100)$, $\alpha\in\mathbb{Z}\cap[1,5]$, $\beta\in(0,150)$, $t\in[0,10]$, and $x\in(0,10]$ (for $\overline{J}$). Evaluating $P(t,x)\equiv P(t,x;\lambda_{1},\alpha,\beta)$ on each sample point gives rise to 10,000 function values accordingly. In particular, these ranges closely align with the economic interpretations of the parameters and are broad enough to accommodate extreme market conditions, such as when frequent negative shocks to the CET1 ratio occur during market downturns. Let us denote by $\vartheta_{i}=(t,x;\lambda_{1},\alpha,\beta)_{i}$ the $i$th sample point ($i\leq\mathfrak{n}$), understood as a 5-vector row, and $p_{i}$ the corresponding $i$th function value. Then, the RBF interpolation aims to find an approximation for the default probability $P$ of the following form:
\begin{equation}\label{6.1.1}
  \breve{P}(t,x;\lambda_{1},\alpha,\beta)=\sum_{i=1}^{\mathfrak{n}}\ell_{i}\varphi_{\epsilon}(\|(t,x;\lambda_{1},\alpha,\beta)-\vartheta_{i}\|),
\end{equation}
where $\ell_{i}$'s are to-be-determined weights and $\varphi_{\epsilon}$ is a radial basis function with smoothing parameter $\epsilon$, which depends on the Euclidean distance between the generic point $(t,x;\lambda_{1},\alpha,\beta)$ and the sample point $\vartheta_{i}$. Here, we choose specifically the multi-quadratic kernel, having the form $\varphi_{\epsilon}(x)=\sqrt{1+(x/\epsilon)^{2}}$, $x\geq0$, in which the smoothing parameter $\epsilon$ is determined as the average distance among all the sample points. The weights $\ell_{i}$'s are determined by the linear system
\begin{equation*}
  p_{j}\equiv\breve{P}(\vartheta_{j})=\sum^{\mathfrak{n}}_{i=1}\ell_{i}\varphi_{\epsilon}(\|\vartheta_{j}-\vartheta_{i}\|),\quad j\in\mathbb{Z}\cap[1,\mathfrak{n}],
\end{equation*}
ensuring nearly perfect match to the function values $p_{j}$'s. The approximation results are visualized in Figure \ref{fig:1}.

On average, each evaluation of $P(t,x;\lambda_{1},\alpha,\beta)$ takes about 0.6 second (wall time),\footnote{All implementation scripts were written in Python and executed on a personal MacBook Pro (2024) equipped with an Apple M3 Pro chip and 36 GB of RAM. The fully processed data set and the corresponding Python code are available from the authors upon reasonable request.} while the spline approximation with (\ref{6.1.1}) takes less than 0.1 millisecond, yielding a 6,000-fold improvement in computational speed relative to the original function.

\medskip

\subsection{Case studies}\label{sec:6.2}

We implemented our estimation and calibration procedures following the details in Section \ref{sec:5}, employing maximum likelihood estimation for the CET1 ratio and stock price parameters, followed by calibration of the remaining model parameters against observed (equity-convertible) CoCo prices and, where applicable, CDS spreads. For the 2009--2011 period, due to the unavailability of CET1 ratio data (largely because Basel III had not yet mandated the standardized reporting of such metrics), estimation of the associated parameters is carried out using Equation (\ref{5.1.6}) based directly on stock return data.\footnote{In this case, we set $E^{\ast}_{t_{0}}\equiv1$, $\tilde{E}_{t_{0}}\equiv1$, and the independent parameter $\varpi=1$ during calibration, signifying the absence of regulatory intervention.}

Table \ref{tab:2} reports the estimated or calibrated parameter values in each case study, with the RMSE computed based on (\ref{5.2.2}), measuring absolute pricing errors. Similar to Wilkens and Bethke (\citeyear{WB14}) \text{Tab.} 3, given the model-predicted CoCo prices, we computed hedging errors using only delta and gamma hedging -- as mentioned in Section \ref{sec:4.2} -- with the underlying stock being the sole hedging instrument in each case. The average and total hedging errors are reported in Table \ref{tab:2} as well. Figure \ref{fig:2a}, Figure \ref{fig:2b}, Figure \ref{fig:2c}, Figure \ref{fig:2d}, and Figure \ref{fig:2e} illustrate the model performance across all five case studies, comparing the model-implied density functions for the CET1 ratio and log-returns against the corresponding kernel density estimates, and contrasting the time series of model-predicted and observed CoCo prices.


\begin{table}[H]\small
  \centering
  \caption{Cross-comparison of estimation and calibration results}
  \label{tab:2}
  \begin{tabular}{c|ccccc}
    \toprule
    \textbf{Parameter estimation} & \makecell{LYG \\ (2021--2023)} & \makecell{LYG \\ (2009--2011)} & \makecell{CS \\ (2020--2023)} & \makecell{CS \\ (2011)} & \makecell{CCB \\ (2019--2023)} \\
    \midrule
    Solvency shock intensity ($\hat{\lambda}_1$) & 21.6405 & 8.1326 & 32.5280 & 16.7256 & 22.2757 \\
    Solvency shock shape ($\hat{\alpha}$) & 1 & 4 & 3 & 3 & 1 \\
    Solvency shock rate ($\hat{\beta}$) & 22.4895 & 46.0099 & 77.9160 & 39.4584 & 47.4683 \\
    Stock volatility ($\hat{\sigma}$) & 0.2062 & 0.3651 & 0.3089 & 0.2050 & 0.1347 \\
    Stock jump intensity ($\hat{\lambda}_2$) & 34.7319 & 79.1058 & 31.9521 & 99.3929 & 32.8870 \\
    Stock jump mean ($\hat{\mu}_{V}$) & $-0.0017$ & $-0.0019$ & $-0.0003$ & $-0.0011$ & $-0.0024$ \\
    Stock jump scale ($\hat{\sigma}_{V}$) & 0.0299 & 0.0365 & 0.0643 & 0.0380 & 0.0309 \\
    Stock-CET1 leverage ($\hat{\eta}$) & 0.2542 & 0.2196 & 0.7412 & 0.0839 & 0.4900 \\
    Transformed trigger barrier ($\hat{\overline{J}}$) & 0.4780 & 0.0101 & 1.8732 & 0.6276 & 0.8358 \\
    Write-down fraction ($\hat{w}$) & 0.0000 & 0.4582 & 0.0001 & 0.0000 & 0.0000 \\
    Conversion power ($\hat{p}$) & 0.9797 & 0.9999 & 0.6235 & 1.0000 & 1.0000 \\ \hline
    \makecell{Regulatory intervention \\ modeling} & no & no & yes & no & no \\ \hline
    \textbf{RMSE (\%)} & 5.04 & 7.95 & 6.47 & 3.28 & 1.51 \\
    \textbf{Average hedging error (\%)} & 0.00 & 0.02 & 0.00 & 0.01 & 0.03 \\
    \textbf{Total hedging error (\%)} & 0.30 & 0.78 & 0.53 & 0.17 & 0.22 \\
    \bottomrule
    \end{tabular} \\
    \begin{tablenotes}
      \small
      \item Hedge: delta and gamma (shares)
    \end{tablenotes}
\end{table}



According to Table \ref{tab:2}, examining the estimation and calibration results for both Lloyds and Credit Suisse across the two periods respectively reveals notable insights into the bank's risk profile and the adaptability of our model. First, in the LYG case, for the 2021--2023 period, the estimated CET1 ratio parameters indicate a moderate frequency of solvency shocks ($\hat{\lambda}_1=21.6405$), with single-stage (namely, exponential) shock magnitudes ($\hat{\alpha}=1$), while the relatively high value $\hat{\beta}=22.4895$ particularly suggests that these shocks tend to be moderate in size. The estimated stock price parameters speak to significant volatility ($\hat{\sigma}=0.2062$) with a substantial idiosyncratic jump component ($\hat{\lambda}_2=34.7319$), albeit with relatively small jump scales ($\hat{\sigma}_V=0.0299$). The leverage parameter estimate ($\hat{\eta}=0.2542$) implies strong association between Lloyds' stock price and CET1 ratio fluctuations over the two-year period, also providing empirical validation for the model's foundational premise -- that stock prices and regulatory capital metrics exhibit measurable interdependence. The calibrated power coefficient ($\hat{p}=0.9797$) in the conversion mechanism is remarkably near unity, which means that the LYG CoCo bond conversion terms are predominantly based on the initial stock price rather than the price at the time of default, which, as discussed before, effectively shields bondholders from excessive dilution during periods of financial stress; the zero write-down fraction ($\hat{w}=0.0000$) suggests full conversion into equity with no principal loss upon trigger.

During the second, earlier period (2009--2011), which encompasses the aftermath of the 2008 financial crisis, we observe markedly different parameter values. The CET1 ratio exhibits lower shock frequency ($\hat{\lambda}_1=8.1326$) but higher shock complexity ($\hat{\alpha}=4$), hinting at the presence of multi-stage solvency issues. The stock price parameters show higher volatility ($\hat{\sigma}=0.3651$) and more than double the jump intensity ($\hat{\lambda}_2=79.1058$) compared to the 2021--2023 period covering the COVID-19 pandemic. This particularly reflects that the 2008 financial crisis triggered severer turbulence in the CoCo bond market (especially in its early stages) than the global pandemic, mainly due to its banking-centered nature affecting the foundation of the bond issuers. The significant write-down fraction estimate ($\hat{w}=0.4582$) contrasts with the previous estimate as well, suggesting that earlier LYG CoCo designs incorporated substantial principal loss upon trigger -- a feature largely phased out in later issuances as the market matured and investor preferences evolved. The particularly low (transformed) trigger (upper) barrier ($\hat{\overline{J}}=0.0101$) implies these early CoCos were designed with substantial buffer against conversion, associated with both regulatory uncertainty and issuer caution during the formative stages of the market.

The two Credit Suisse (CS) cases offer particularly valuable perspectives, especially in light of the 2023 collapse. For the 2020--2023 period, the estimated-calibrated CS parameters reveal a different risk profile compared to the LYG parameters. The CET1 ratio exhibits both higher solvency shock frequency ($\hat{\lambda}_1 = 32.5280$) and substantially greater complexity ($\hat{\alpha} = 3$), indicating multi-stage solvency deterioration patterns. The higher value of the scale estimate $\hat{\beta} = 77.9160$ suggests that while shocks were frequent, they were also severe in magnitude, a pattern that our model successfully captures in the deteriorating trend leading to the bank's terminal decline. The most revealing in this case study are the intervention parameters, which were calibrated specifically for this case due to the 2023 collapse. The diffusion component of intervention intensity is near zero ($\hat{\kappa}^{(1)} \approx 0$), coupled with a strong negative correlation with stock price movements ($\hat{\varsigma}^{(1)} = -0.0821$), indicating that regulatory intervention was predominantly driven by jump events rather than continuous deterioration. The high jump-induced intervention sensitivity ($\hat{\varsigma}^{(2)}\approx 1$) and minimal mean reversion ($\hat{\kappa}^{(2)} \approx 0$) suggest that once intervention risk materialized, it remained persistently elevated -- a finding consistent with the rapid regulatory action that ultimately led to the Credit Suisse resolution, as documented in the chronology by Damyanova (\citeyear{D23}) and analyzed by Bolton, Jiang, and Kartasheva (\citeyear{BJK23}).

The CoCo-specific parameters reveal a substantially more complex default price structure ($\hat{p} = 0.6235$), which contrasts with LYG's simpler design, despite a comparably low write-down fraction ($\hat{w} = 0.0001$). This configuration exposed bondholders to maximum losses during stress scenarios, precisely matching what occurred when CS's AT1 bonds were completely written off during the UBS acquisition. Our model's ability to capture this deteriorating trajectory through the combination of frequent solvency shocks and elevated intervention probabilities demonstrates its potential to identify systemic risk patterns that traditional approaches miss. The calibrated independence parameter $\hat{\varpi} = 1.0000$ implies that throughout the period, markets perceived the causal link between accounting triggers and regulatory intervention to be relatively weak, consistent with the surprise element of the final regulatory decision documented by the European Banking Authority (\citeyear{EBA24}). The low estimate for the intervention shock scale ($\hat{\gamma}= 0.0212$) suggests minimal immediate stock price impact following regulatory intervention, highlighting the significant time delay between sharp bond price collapses and subsequent equity market reactions (observe the right plot in Figure \ref{fig:2c}). Such a pattern has been noted in the broader literature on bank resolution mechanisms (Financial Stability Board, \citeyear{FSB21}).

The earlier Credit Suisse data from 2011 show moderate solvency shock patterns ($\hat{\lambda}_1= 16.7256$ and $\hat{\alpha} = 3$) but with much higher stock price jump intensity ($\hat{\lambda}_2= 99.3929$) compared to the LYG case during the same period. Notably, the weak correlation between stock price and CET1 ratio ($\hat{\eta} = 0.0839$) -- nearly one-fourth of LYG's value -- points to a fundamental weakness in CS's risk structure that persisted over time: Stock prices and regulatory capital moved largely independently, thereby reducing the early warning capacity of market signals. This disconnect between market indicators and regulatory metrics, which our model quantifies through the leverage parameter, is consistent with findings by Martynova and Perotti (\citeyear{MP18}) on the limited effectiveness of market discipline in European banking. The conversion structure with $\hat{w} = 0.0000$ and $\hat{p} = 1.0000$ for these early CoCo bonds represents a design philosophy focused on capital preservation rather than loss absorption, which may explain why market participants failed to adequately price the risks that eventually materialized, as suggested by the broader literature on CoCo bond pricing challenges (see, e.g., Glasserman and Nouri, \citeyear{GN16}; Khah, Vermaelen, and Wolff, \citeyear{KVW19}).

On the other hand, analysis of the CCB case offers an interesting counterpoint to the Western banks. Based on Table \ref{tab:2}, estimates of the bank's CET1 ratio parameters ($\hat{\lambda}_1=22.2757$, $\hat{\alpha}=1$, and $\hat{\beta}=47.4683$) indicate frequent yet simple (single-stage) solvency shocks with moderate magnitude. The stock price exhibits lower volatility ($\hat{\sigma}=0.1347$) but relatively high stock-CET1 leverage ($\hat{\eta}=0.4900$), which suggests that despite overall stability, market prices remained sensitive to capital fluctuations. Also, the implied purely initial price-based conversion mechanism ($\hat{w}=0.0000$ and $\hat{p}=1.0000$) and a higher trigger barrier ($\hat{\overline{J}}=0.8358$) points to a more conservative CoCo structure, with potential reflection of different regulatory priorities in the Chinese banking system.

Overall speaking, the estimated and calibrated parameters across different time periods highlight a clear evolution in CoCo design, with earlier issues typically featuring higher write-down fractions and varying conversion mechanisms, which likely reflects the market's improvement process and regulatory refinements following the initial post-2008 implementations. In Table \ref{tab:2}, the stark differences in parameter estimates among banks, particularly in the CET1 ratio shock patterns ($\hat{\alpha}$) and stock-capital correlations ($\hat{\eta}$), highlight how the comprehensive model framework captures institution-specific risk characteristics that uniform approaches would miss. The CS case particularly demonstrates the model's capacity to identify intervention risk patterns, with the calibrated parameters suggesting predominantly jump-driven intervention mechanisms, consistent with how regulatory actions typically unfold within crisis scenarios. In addition, variations in the estimated power coefficient ($\hat{p}$) and write-down fraction ($\hat{w}$) across banks can be ascribed to their different approaches to balancing bondholder and shareholder interests, with some designs favoring capital preservation ($w$ low but $p$ high) and others emphasizing loss absorption capacity ($w$ high but $p$ low). This observation also agrees with the original motivation for introducing the power conversion mechanism.

\clearpage

\vspace*{0.2in}

\begin{figure}[H]
  \ContinuedFloat*
  \begin{minipage}[r]{0.33\linewidth}
  \centering
  \includegraphics[scale=0.3]{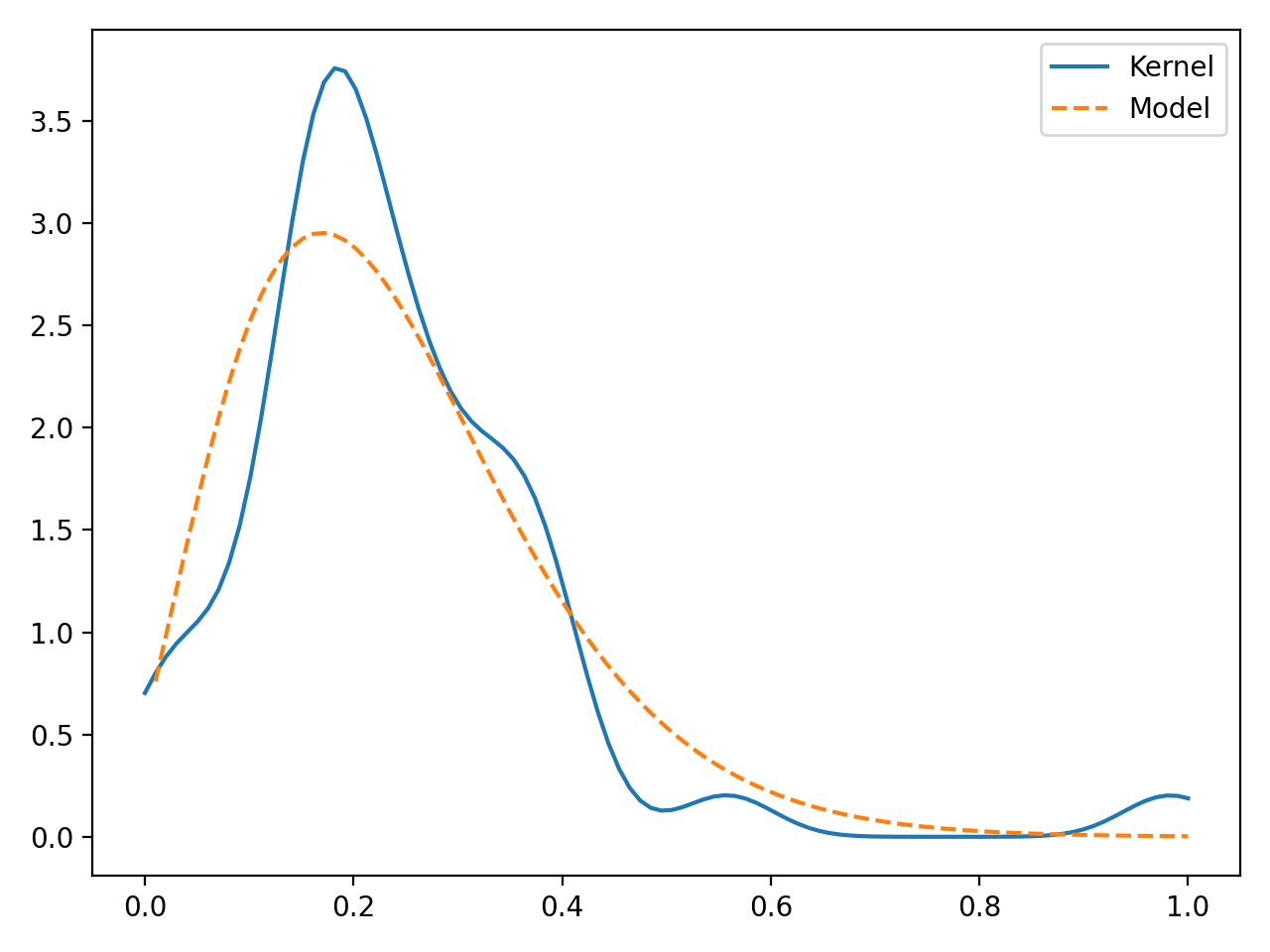}
  \caption*{\small Transformed CET1 ratio density}
  \end{minipage}
  \begin{minipage}[c]{0.33\linewidth}
  \centering
  \includegraphics[scale=0.3]{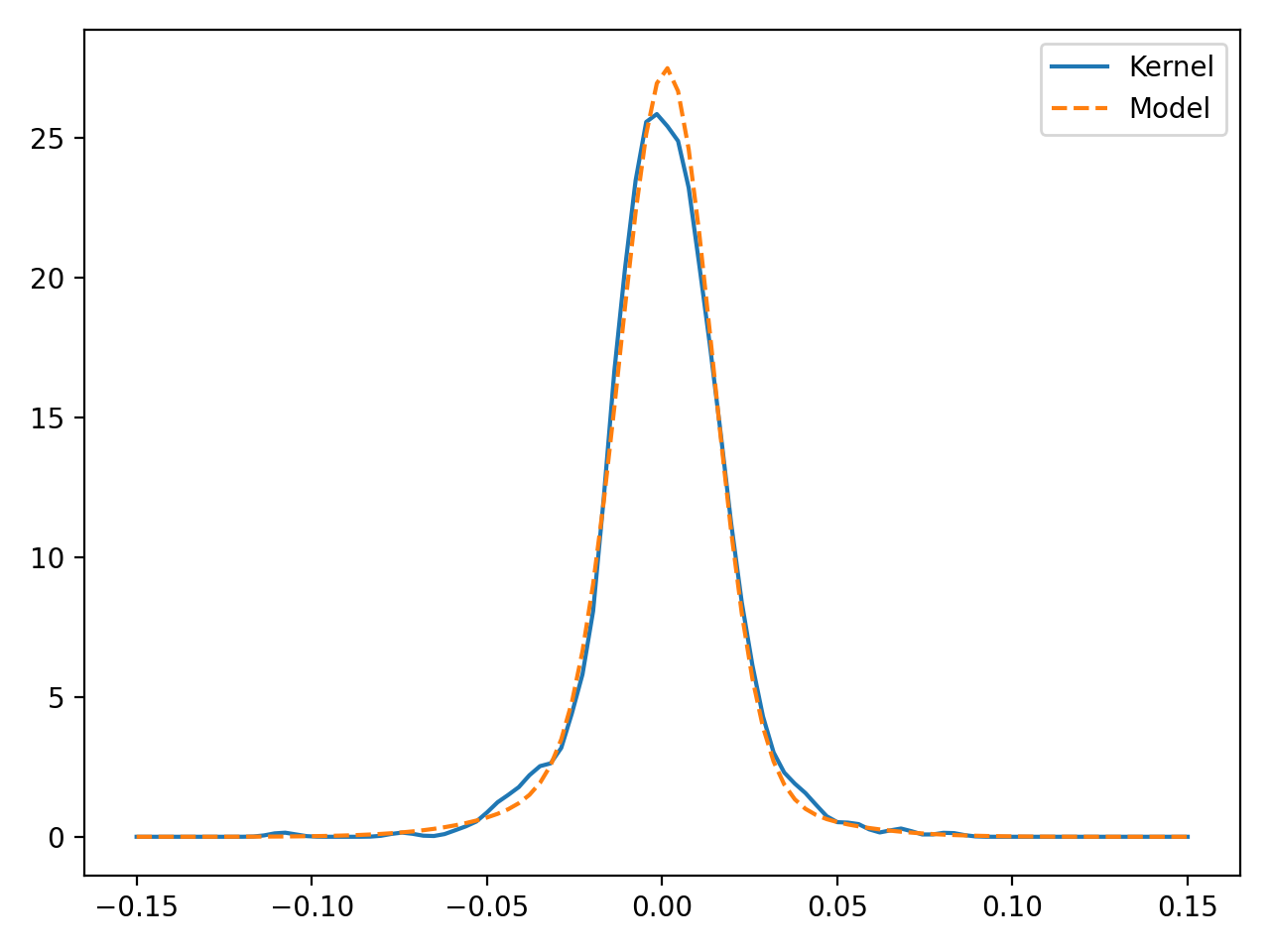}
  \caption*{\small log-return density}
  \end{minipage}
  \begin{minipage}[l]{0.33\linewidth}
  \centering
  \includegraphics[scale=0.3]{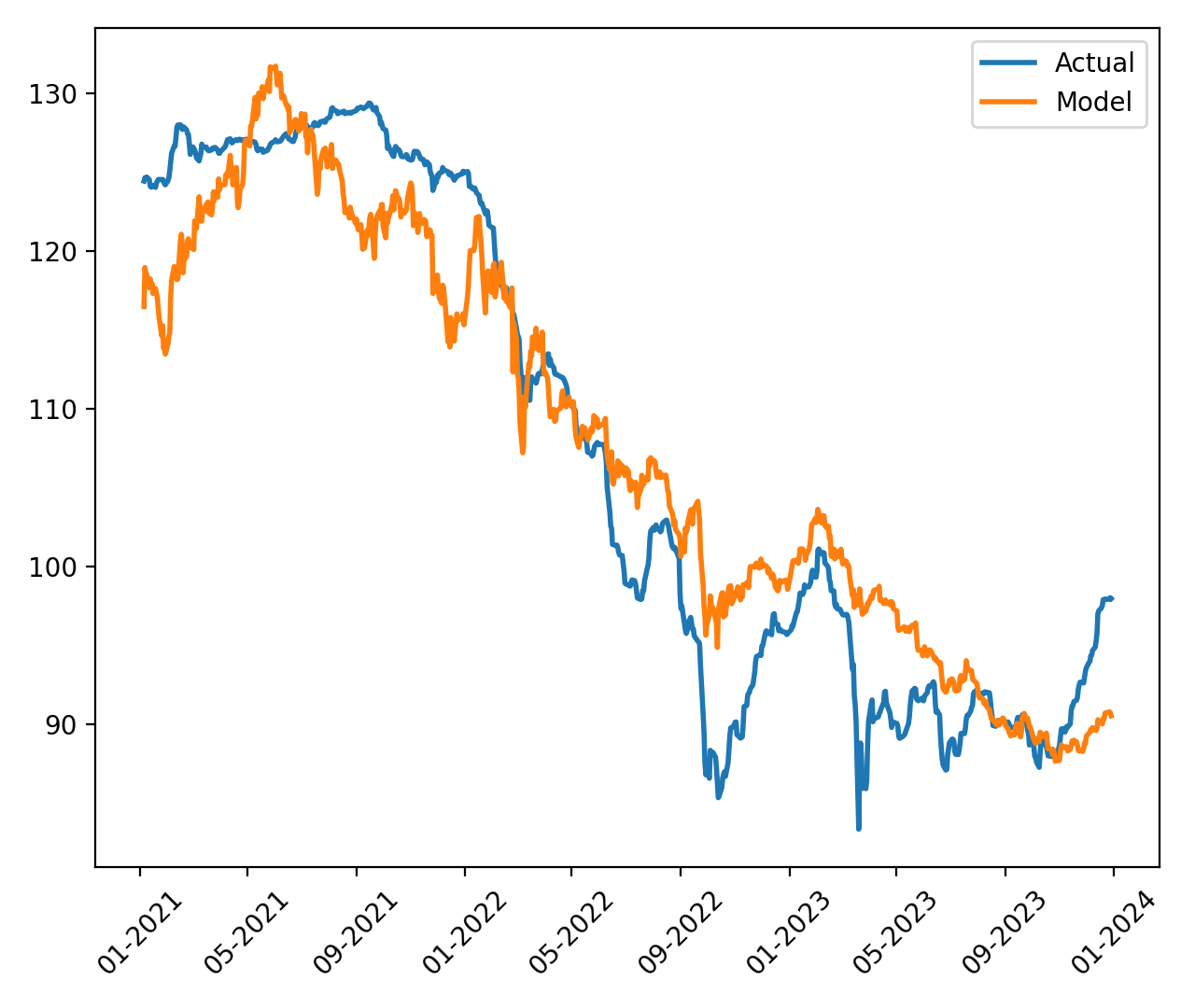}
  \caption*{\small CoCo price (clean, \%)}
  \end{minipage}
  \caption{Estimation and calibration results (LYG), CoCo price range: 01/04/2021--12/29/2023}
  \label{fig:2a}
\end{figure}

\begin{figure}[H]
  \ContinuedFloat
  \begin{minipage}[r]{0.49\linewidth}
  \centering
  \includegraphics[scale=0.3]{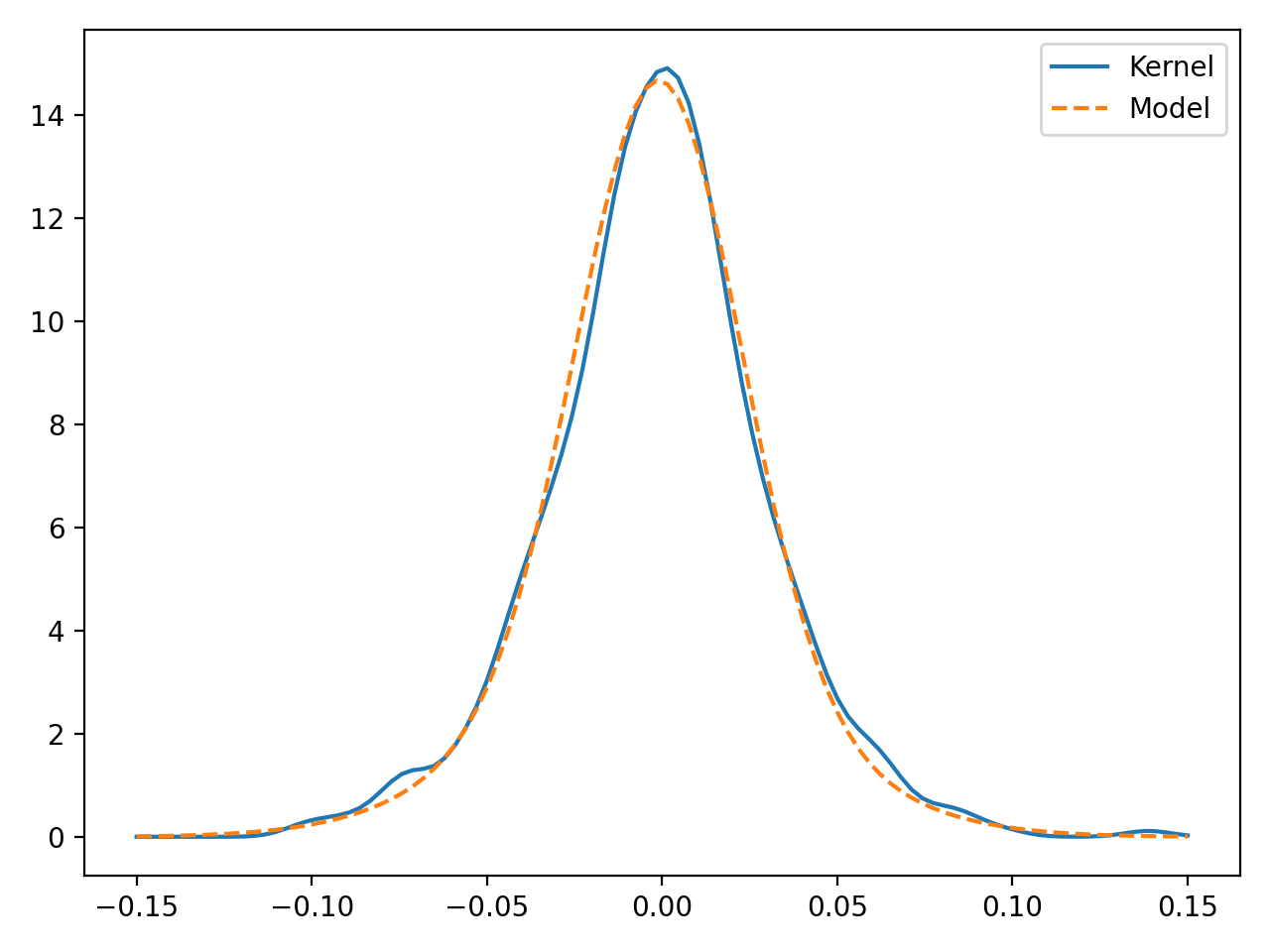}
  \caption*{\small Log-return density}
  \end{minipage}
  \begin{minipage}[l]{0.49\linewidth}
  \centering
  \includegraphics[scale=0.3]{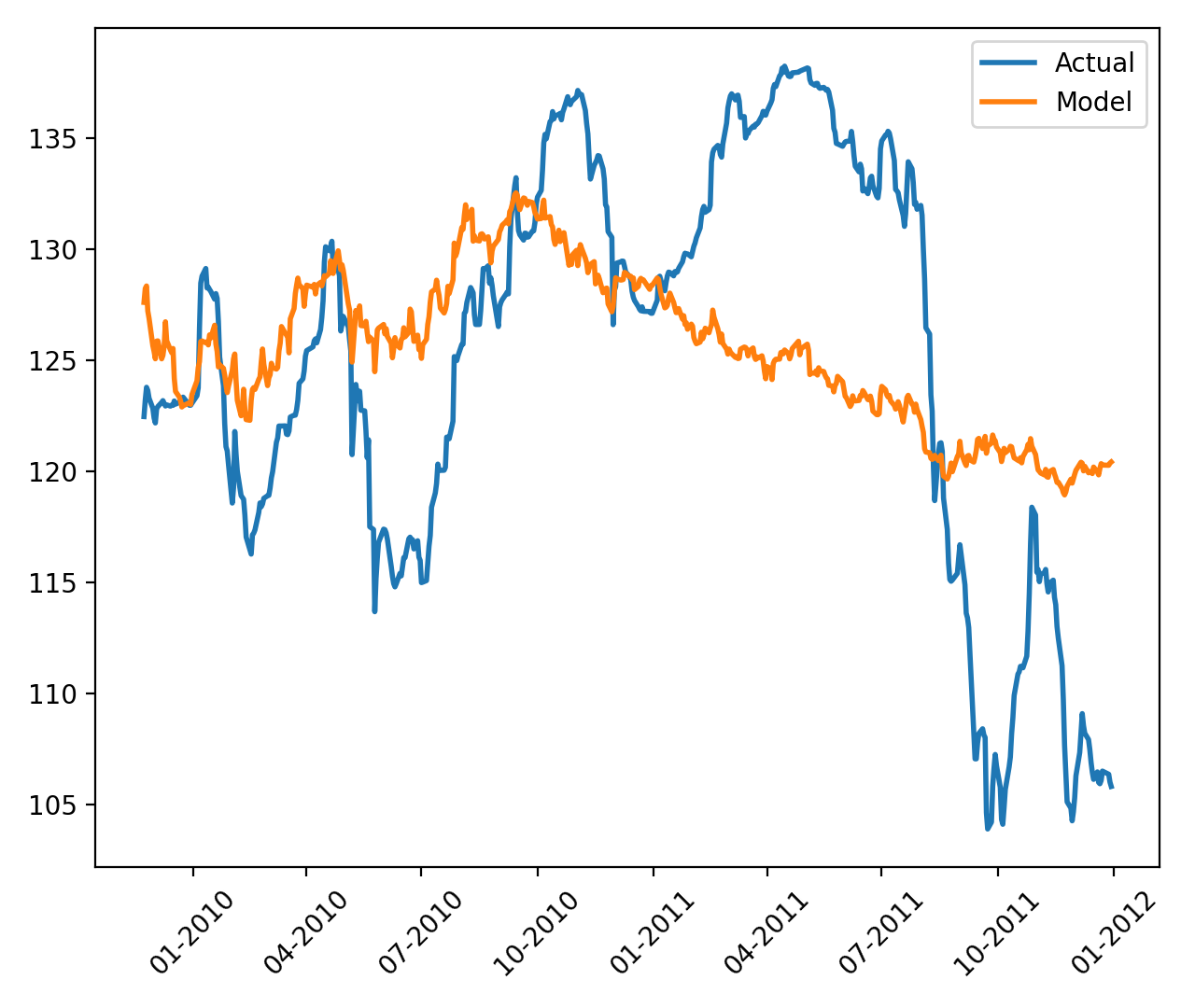}
  \caption*{\small CoCo price (clean, \%)}
  \end{minipage}
  \caption{Estimation and calibration results (LYG), CoCo price range: 11/23/2009--12/30/2011}
  \label{fig:2b}
\end{figure}

\begin{figure}[H]
  \ContinuedFloat
  \begin{minipage}[r]{0.33\linewidth}
  \centering
  \includegraphics[scale=0.3]{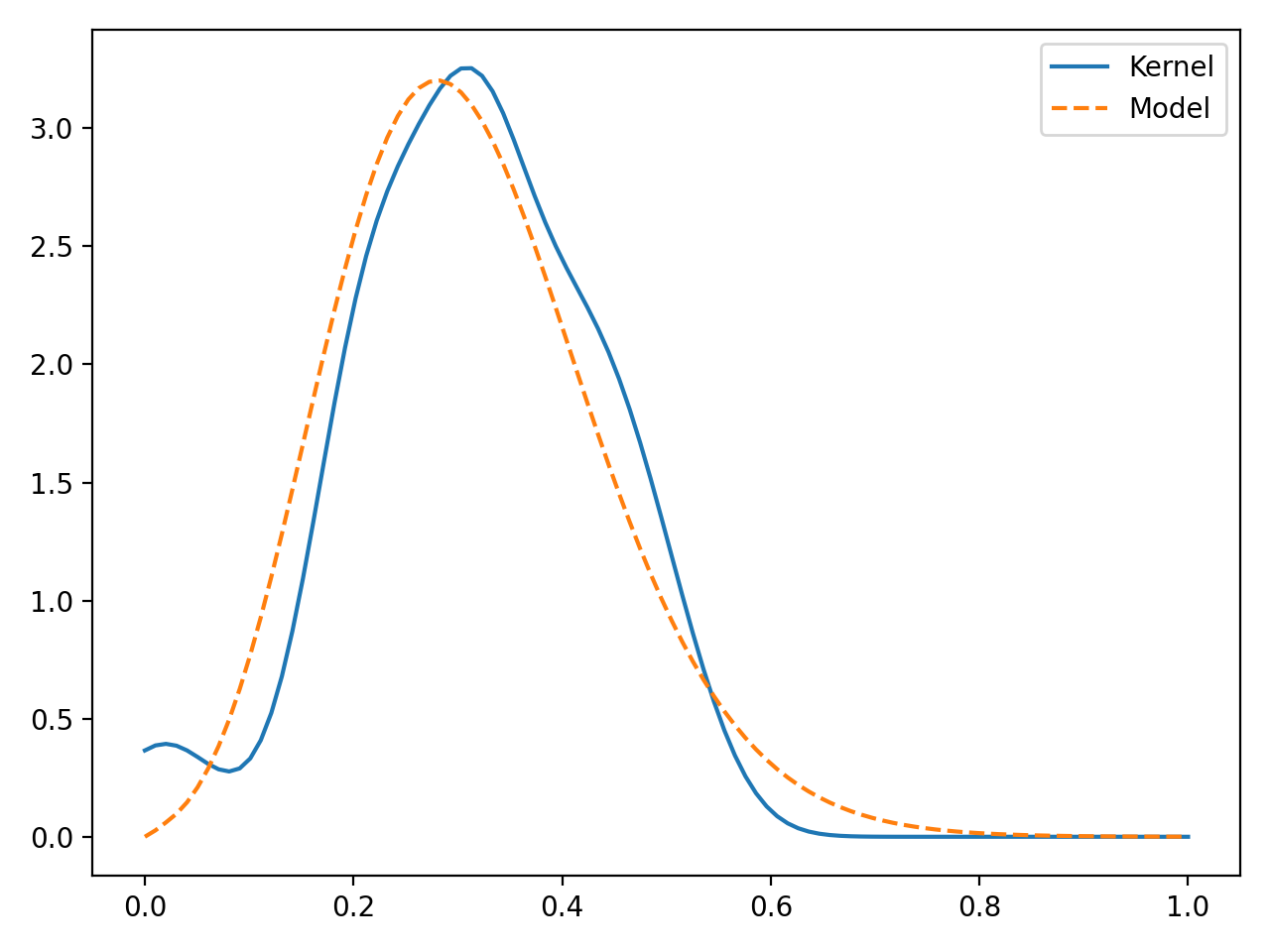}
  \caption*{\small Transformed CET1 ratio density}
  \end{minipage}
  \begin{minipage}[c]{0.33\linewidth}
  \centering
  \includegraphics[scale=0.3]{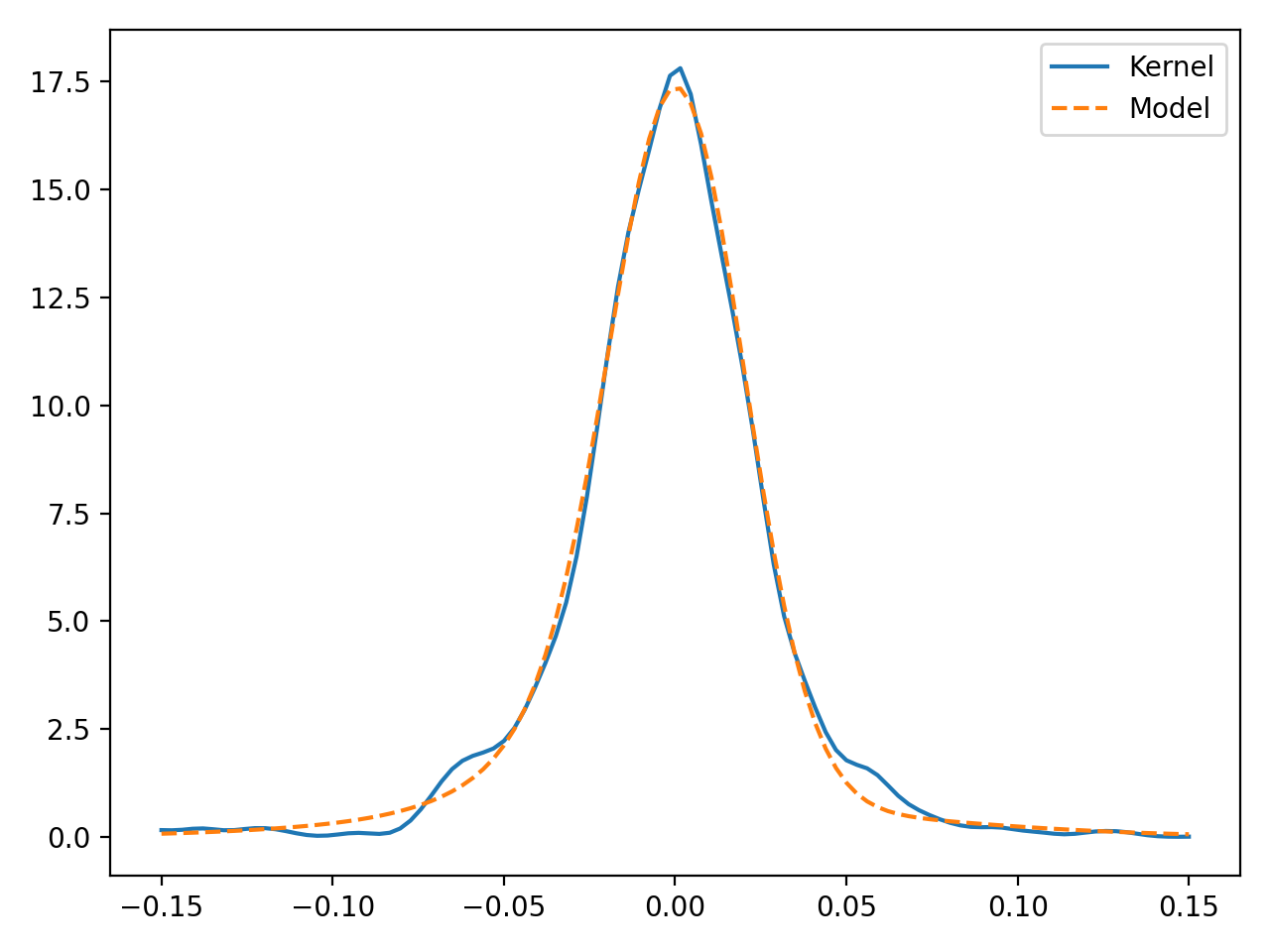}
  \caption*{\small log-return density}
  \end{minipage}
  \begin{minipage}[l]{0.33\linewidth}
  \centering
  \includegraphics[scale=0.3]{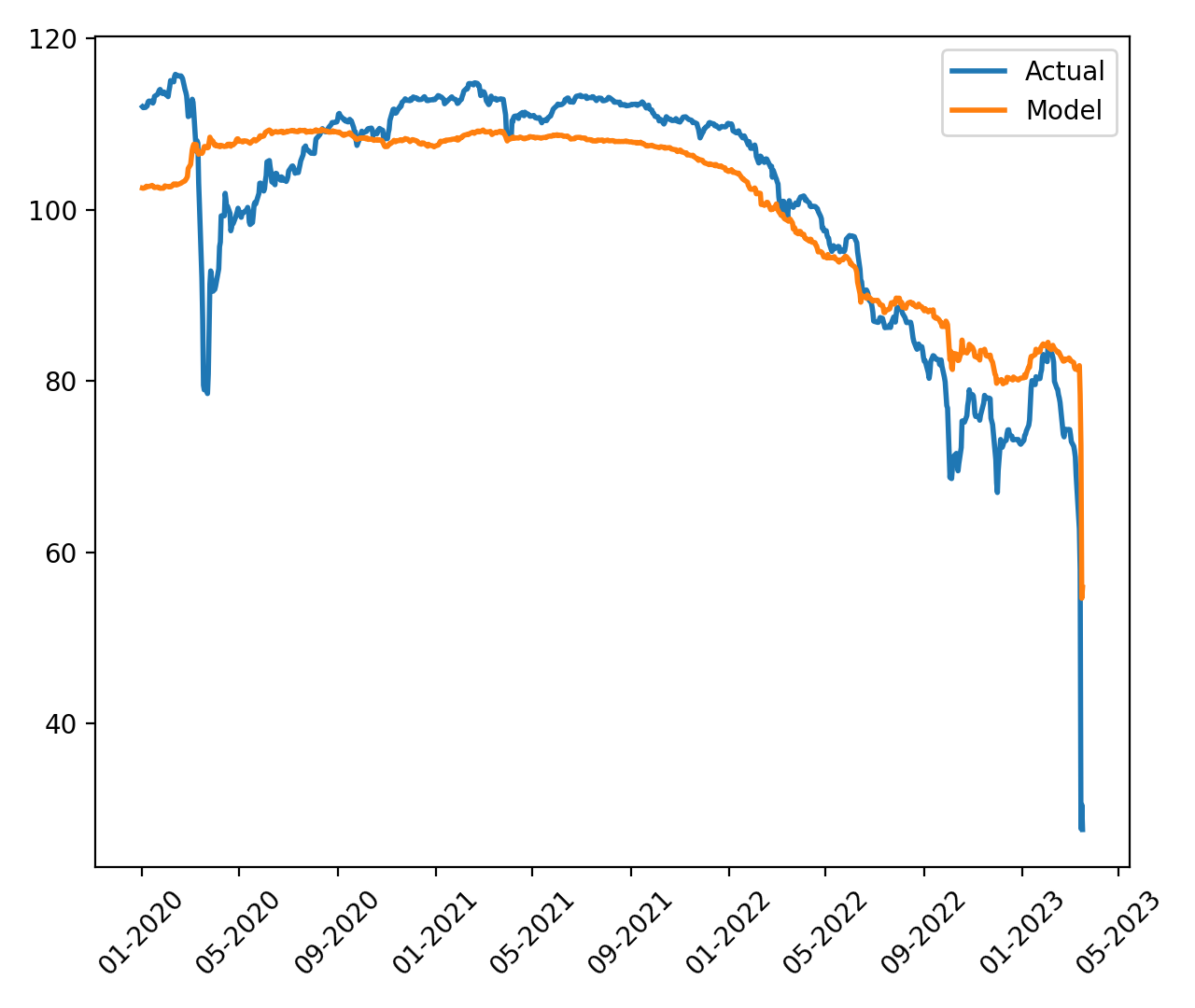}
  \caption*{\small CoCo price (clean, \%)}
  \end{minipage}
  \caption{Estimation and calibration results (CS), CoCo price range: 01/01/2020--03/17/2023}
  \label{fig:2c}
\end{figure}

\clearpage

\begin{figure}[H]
  \ContinuedFloat
  \begin{minipage}[r]{0.49\linewidth}
  \centering
  \includegraphics[scale=0.3]{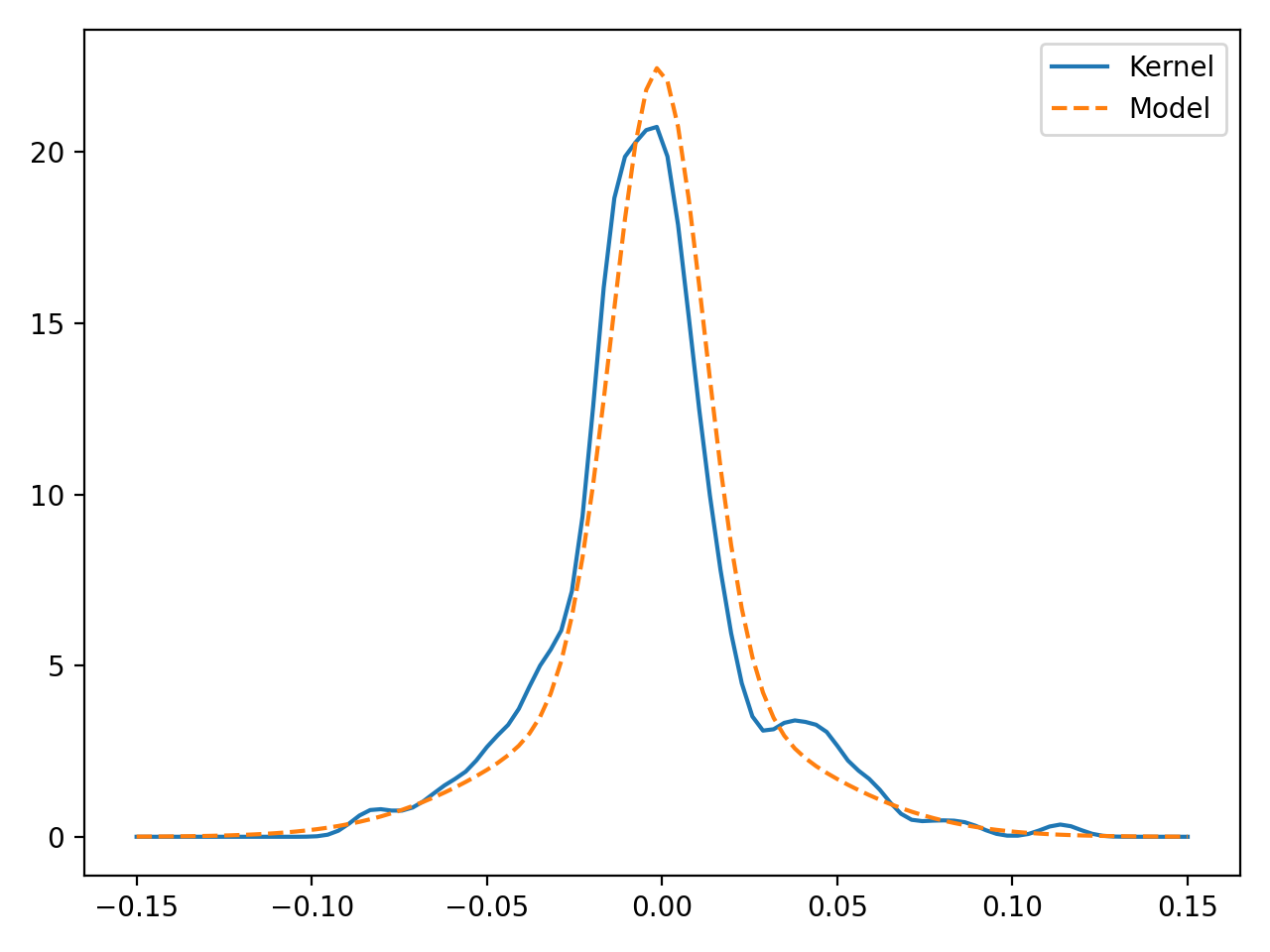}
  \caption*{\small Log-return density}
  \end{minipage}
  \begin{minipage}[l]{0.49\linewidth}
  \centering
  \includegraphics[scale=0.3]{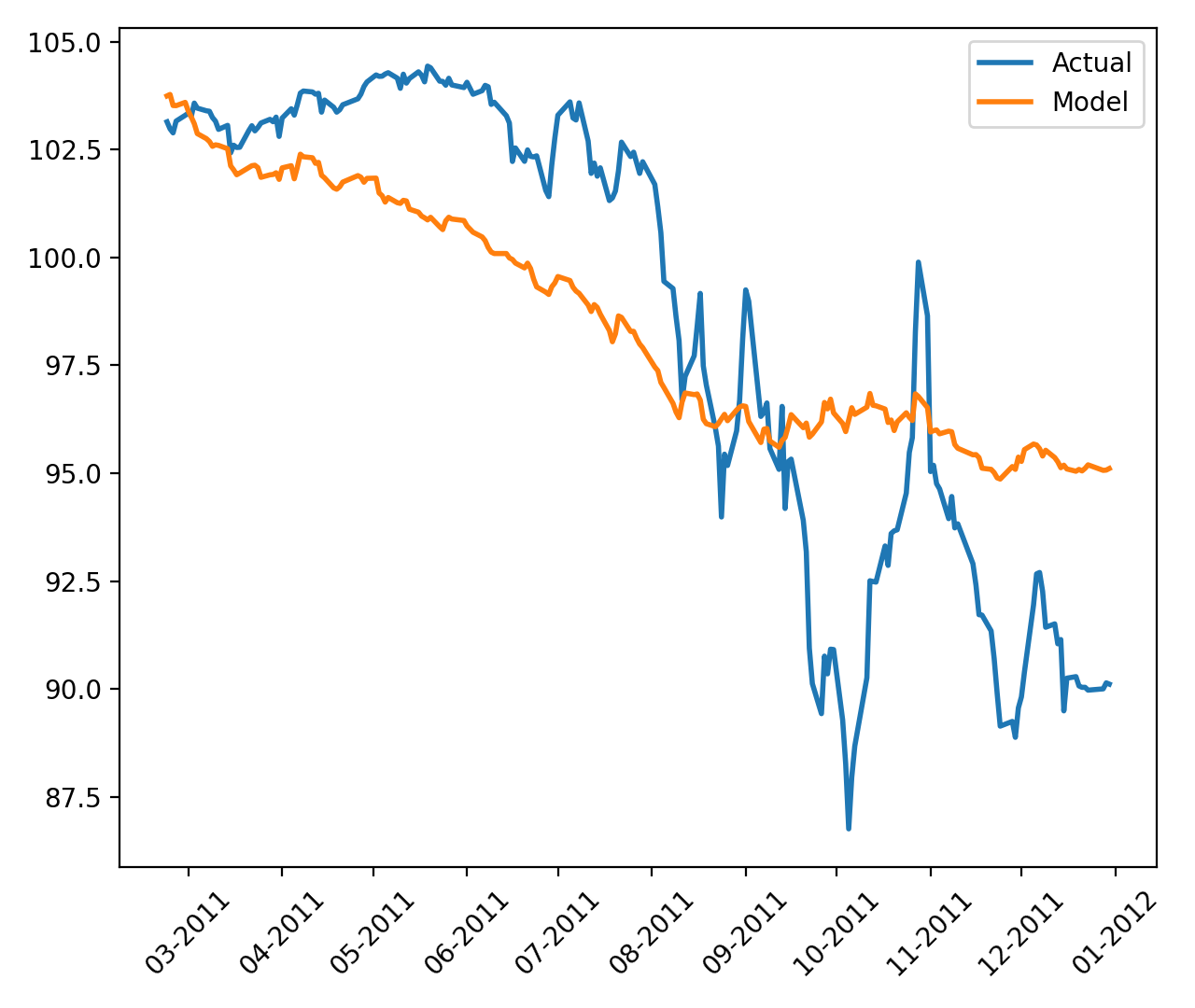}
  \caption*{\small CoCo price (clean, \%)}
  \end{minipage}
  \caption{Estimation and calibration results (CS), CoCo price range: 02/22/2011--12/30/2011}
  \label{fig:2d}
\end{figure}

\begin{figure}[H]
  \ContinuedFloat
  \begin{minipage}[r]{0.33\linewidth}
  \centering
  \includegraphics[scale=0.3]{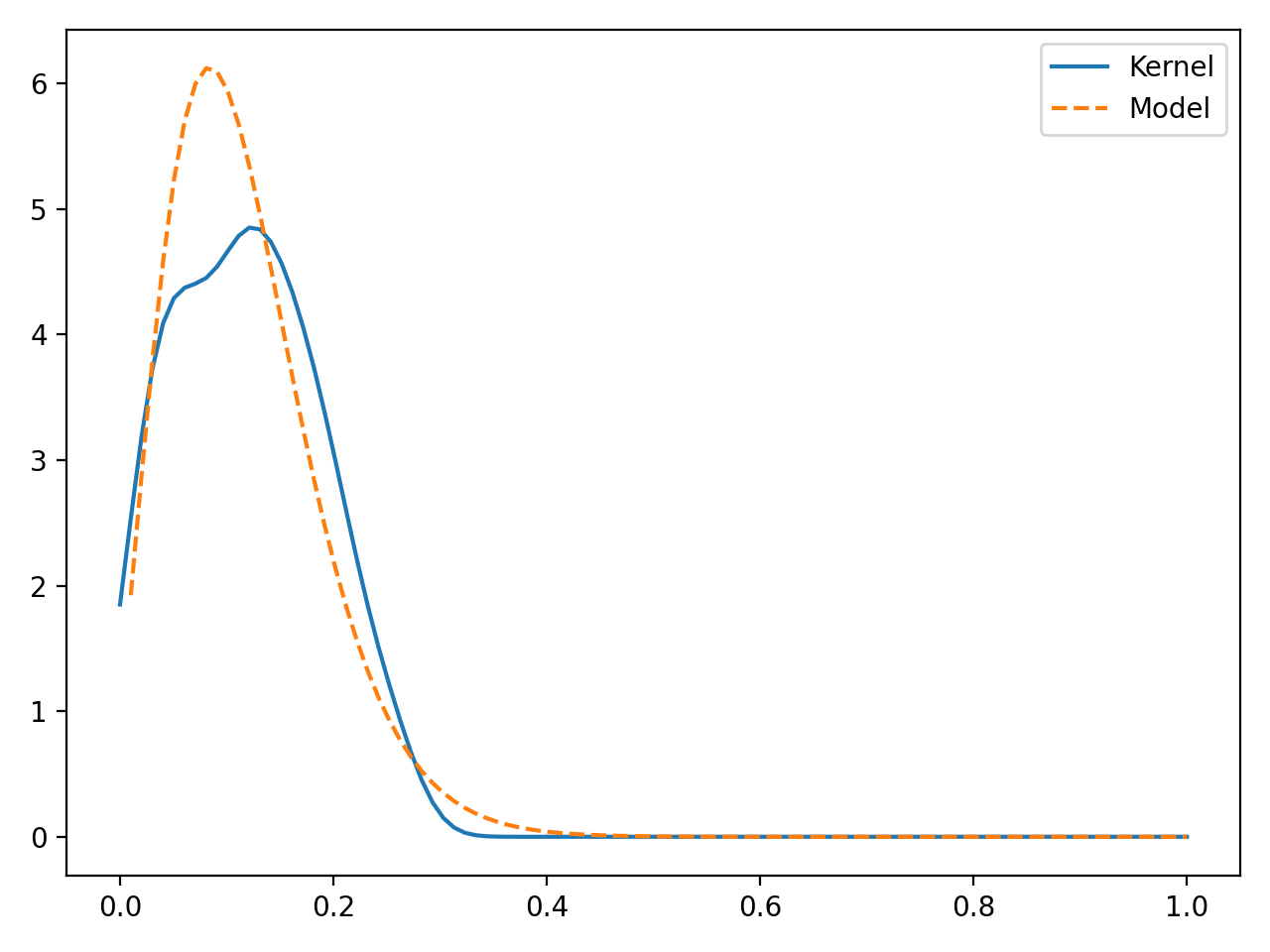}
  \caption*{\small Transformed CET1 ratio density}
  \end{minipage}
  \begin{minipage}[c]{0.33\linewidth}
  \centering
  \includegraphics[scale=0.3]{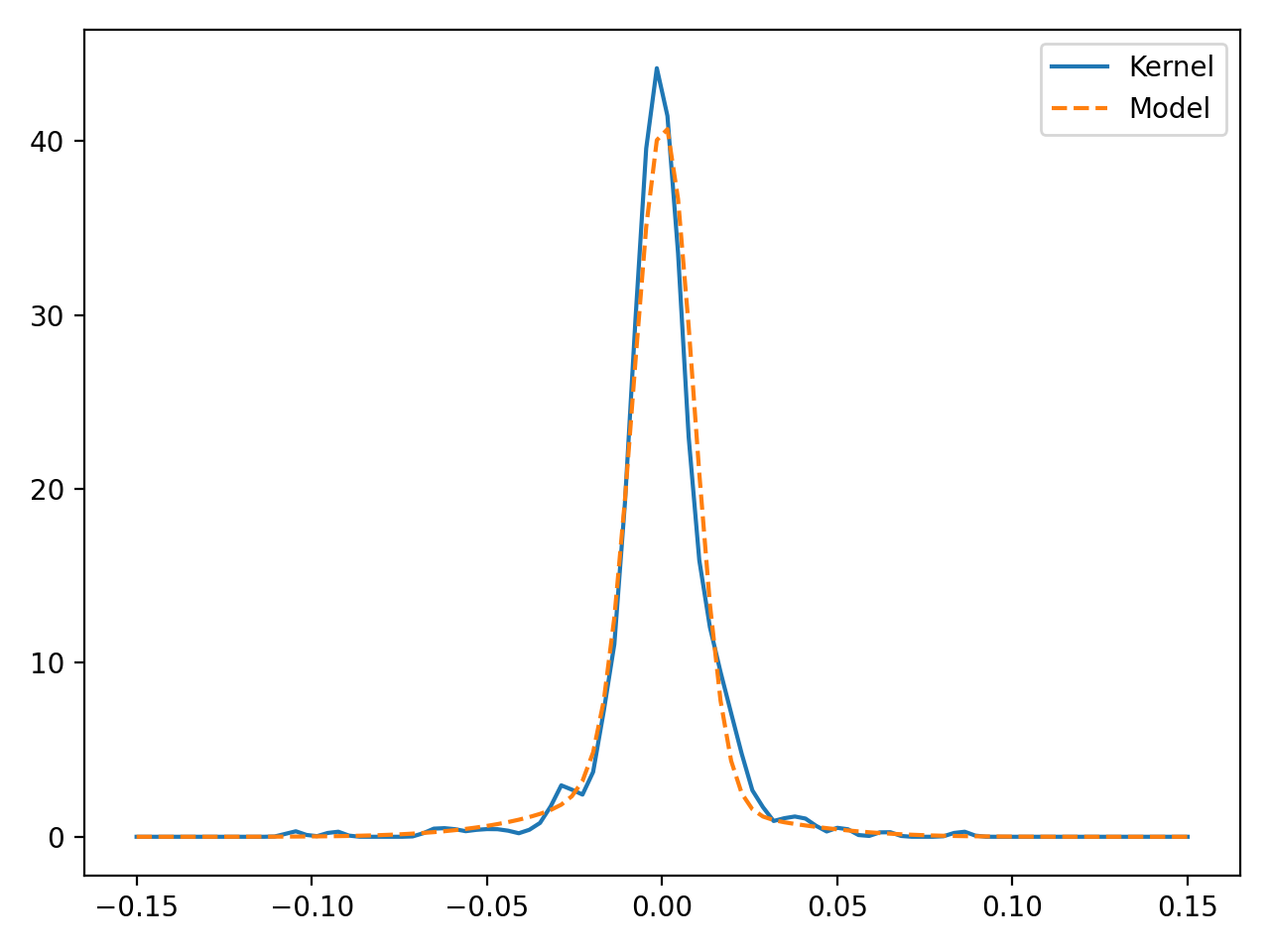}
  \caption*{\small log-return density}
  \end{minipage}
  \begin{minipage}[l]{0.33\linewidth}
  \centering
  \includegraphics[scale=0.3]{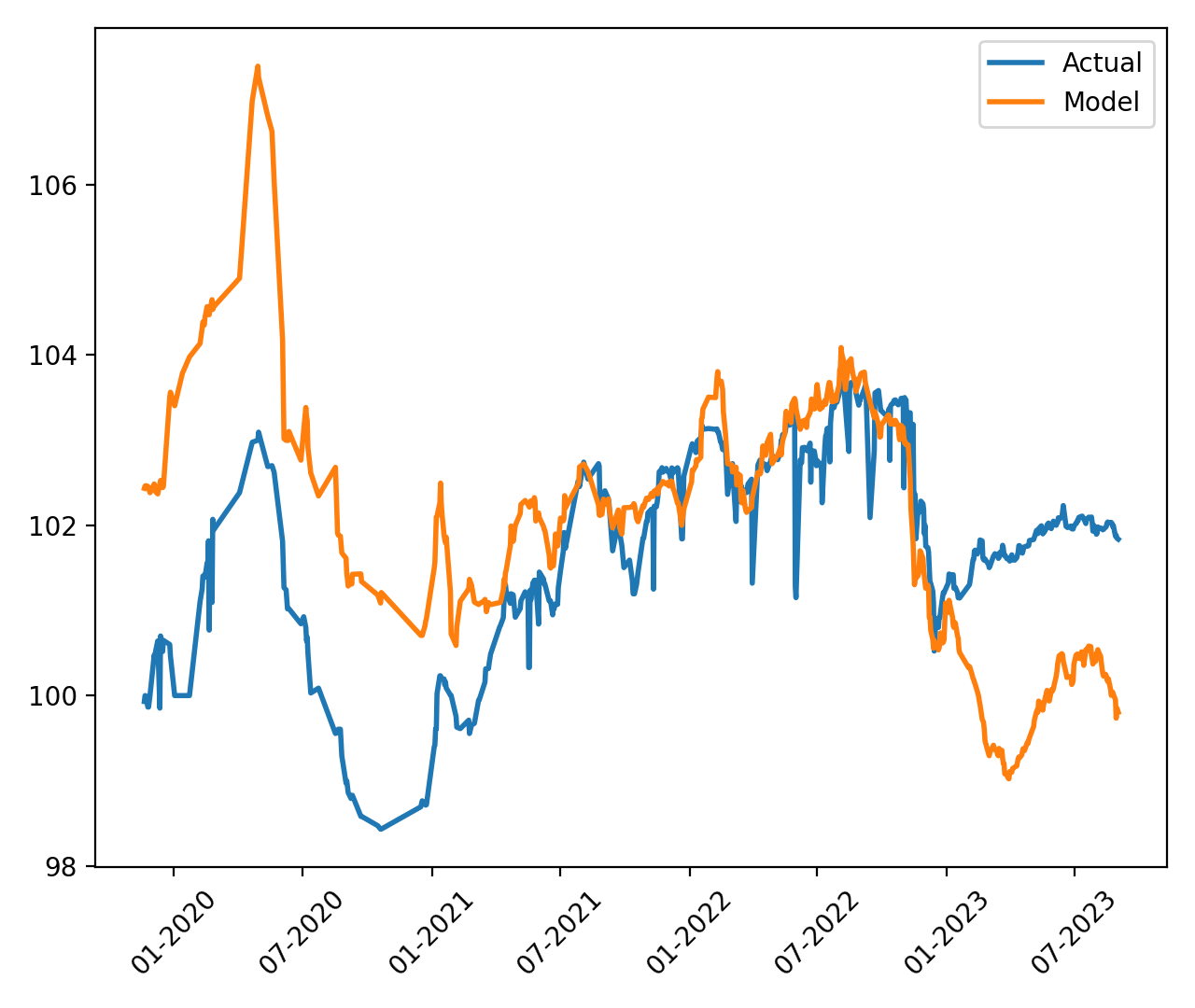}
  \caption*{\small CoCo price (clean, \%)}
  \end{minipage}
  \caption{Estimation and calibration results (CCB), CoCo price range: 11/20/2019--09/01/2023}
  \label{fig:2e}
\end{figure}

As shown in the last plots in Figures \ref{fig:2a} through \ref{fig:2e}, respectively, the empirical results confirm the strong overall fit of our model, which demonstrate significant improvements over existing approaches. While the RMSEs are generally low relative to bond price levels, in the 2011 Credit Suisse case, our model achieved an RMSE of 3.28\%, outperforming the credit derivatives modeling approach in the previous literature with its RMSE of 4.98\%; see Wilkens and Bethke (\citeyear{WB14}) \text{Tab.} 3. Similarly, for the 2009--2011 Lloyds case, our model delivers an RMSE of 7.95\% compared to the best RMSE of 11.32\% \text{ibid.}. These comparisons show a reduction of over 30\% in average pricing error.

Perhaps more importantly, beyond pricing accuracy, our framework exhibits significantly enhanced hedging capabilities across different metrics in that both average and total hedging errors are reduced by around 50\% (on average) relative to the modeling approaches examined in Wilkens and Bethke (\citeyear{WB14}). This substantial improvement in hedging performance directly translates to significantly lower rebalancing costs and more robust risk management under market stress, and given that we only used delta and gamma hedges, it enables institutional investors to maintain robust hedging portfolios across diverse market conditions, including during periods of heightened volatility when equity-capital correlation structures become increasingly unstable.

We remark that the above improvements stem from the following three main innovations: (1) The explicit modeling of the joint dynamics between CET1 ratios and stock prices, capturing correlations that drive conversion risk, (2) the inclusion of jump components in both processes, allowing for sudden changes in solvency status, and (3) flexible power-type conversion schemes, which accommodate diverse CoCo designs while maintaining analytical tractability.

The density plots in Figures \ref{fig:2a} through \ref{fig:2e} further show that the model managed to capture the key characteristics of solvency shock and stock return distributions, including a changeful but clear asymmetric leptokurtic feature, across different market conditions, which support the temporal evolution of CoCo prices. Particularly noteworthy is the model performance during periods of market stress, where traditional approaches exhibit the greatest pricing errors. Therefore, our findings suggest that accounting for the dynamic relationship between regulatory capital metrics and market prices is essential for accurate CoCo valuation and hedging, especially when financial instability intensifies. The significant improvement in pricing accuracy beyond doubt marks the importance of incorporating both accounting-based triggers and market-driven factors in developing CoCo valuation methods.

\bigskip

\section{Concluding remarks}\label{sec:7}

This paper has introduced a novel valuation framework for CoCo bonds that directly addresses several critical limitations in existing approaches. By building a bivariate jump-diffusion model that explicitly captures the dynamic relationship between banks' CET1 ratios, share prices, and CoCo bond prices, we put forward a comprehensive approach to understanding the complex conversion mechanisms that drive the behavior of these hybrid instruments. The proposed framework represents an advance over previous modeling approaches through three principal contributions.

First, our model successfully bridges the temporal discrepancy between high-frequency market data and low-frequency regulatory reporting by establishing an explicit mathematical connection (Equation (\ref{2.3.1})) between the CET1 ratio and contemporaneous stock price dynamics. This connection encompasses both continuous diffusion processes and correlated jump risks, allowing for more accurate representation of market dynamics during both normal and stressed conditions. The analytical tractability of our approach, based on the valuation formulas derived for both write-down and equity-convertible CoCos in Theorem \ref{thm:1} and Theorem \ref{thm:2} -- thanks to the semi-closed form distribution presented in Proposition \ref{pro:3} and Proposition \ref{pro:4} -- lays the foundation for computational efficiency without sacrificing model flexibility.

Second, the introduction of a power-type conversion scheme in (\ref{4.2.1}) represents a significant generalization of existing terms, offering greater adaptability to diverse CoCo designs while maintaining mathematical tractability. This innovation allows practitioners to model a spectrum of conversion mechanisms covering conversion ratios based on the geometric average of the price at inception and the price at default, which accommodate the evolving design preferences observed across different issuers and regulatory regimes. Specifically, the power coefficient provides a parsimonious way to reproduce the functionalities of complex conversion terms that require much more convoluted calculations (see again Appendix \ref{B}).

Third, the proposed framework for modeling regulatory discretion in trigger decisions has proven particularly valuable in light of the 2023 Credit Suisse collapse, where traditional accounting-based triggers failed to anticipate regulatory intervention. By employing a stochastic intensity structure for the intervention propensity that incorporates both diffusive components and jump-induced reactions, we have created an intervention mechanism to capture the complex, often nonlinear relationship between market indicators and regulatory actions. The empirical results in Section \ref{sec:6.2} demonstrate the significant enhancement of both pricing accuracy and hedging performance compared to various existing models.

The case studies spanning three major banking institutions across multiple time periods validate the robustness of our model framework under dissimilar market conditions, with substantial improvement in pricing accuracy -- up to 35\% reduction in RMSE compared to benchmark models, and especially encouraging is the superior performance during periods of market stress, precisely when timely and accurate valuation is most critical for risk management. These aspects together underline the practical value of this hybrid valuation approach for market participants. Besides, the empirical analysis reveals several important patterns in CoCo bond development. We observe a clear evolution in design features, with earlier issuances typically featuring higher write-down fractions and varying conversion mechanisms compared to more standardized recent structures. The major differences in the calibrated parameters among banks in turn justify the importance of bank-specific risk characteristics in determining CoCo bond valuations. The Credit Suisse case also demonstrates the possibility of identifying intervention risk patterns consistent with actual regulatory actions amid crisis.

Still, several limitations and avenues for future research merit consideration. While our model significantly improves on existing approaches, the task of forecasting regulatory actions during unprecedented systemic events remains inherently difficult. Future work could for instance explore the incorporation of additional macro-prudential indicators or network effects across financial institutions to further enhance the predictive capacity for systemic intervention scenarios. Also, as regulatory frameworks continue to evolve in response to the Credit Suisse episode, ongoing refinement of intervention propensity models will be understandably necessary.

From a practical standpoint, the proposed model framework provides guidance for the optimal design of CoCo bond terms, hence helping issuers balance loss-absorption capacity with market attractiveness. For investors, the data-adaptive nature and enhanced pricing accuracy support more informed risk assessment and effective hedging strategies. Meanwhile, for regulators, the model's ability to capture the relationship between market indicators and capital metrics provides a potential early warning mechanism for detecting emerging financial stress within institutions.

The events of March 2023 have demonstrated that CoCo bonds remain a vital yet complex component of the global banking capital structure. As these instruments keep evolving, there is an increasing need for valuation approaches that effectively account for both accounting-based triggers and market-driven factors. In this regard, the present work has provided a mathematically grounded as well as practically implementable framework for valuing and hedging CoCo bonds under both mechanical triggers and possible regulatory intervention.

\bigskip



\bigskip

\begin{appendices}

\section{Mathematical proofs}\label{A}

\renewcommand{\theequation}{A.\arabic{equation}}

\textbf{Proposition \ref{pro:1}.}

\begin{proof}
By the law of total probability, for generic $t\geq0$,
\begin{equation*}
  f_{J_{t}}(x)=\sum^{\infty}_{n=0}f_{J_{t}|N_{t}}(x|n)\PP[N_{t}=n],
\end{equation*}
in which $\PP[N_{t}=n]=e^{-\lambda_{1}t}(\lambda_{1}t)^{n}/n!$, $n\in\mathbb{Z}_{+}$, is the Poisson probability mass function, while the conditional (local) density function is
\begin{equation*}
  f_{J_{t}|N_{t}}(x|n)=
  \begin{cases}
    \displaystyle \delta_{\{0\}}(x) &\quad\text{if }n=0,\\
    \displaystyle \frac{\beta^{\alpha n}x^{\alpha n-1}e^{-\beta x}}{(\alpha n-1)!} &\quad\text{if }n\geq1.
  \end{cases}
\end{equation*}
For the summation starting from $n=1$, observe that
\begin{equation*}
  \sum^{\infty}_{n=1}\frac{(\lambda_{1}\beta^{\alpha}t)^{n}x^{\alpha n-1}e^{-\beta x}}{n!(\alpha n-1)!}=\frac{\lambda_{1}\beta^{\alpha} tx^{\alpha-1}e^{-\beta x}}{(\alpha-1)!} \sum^{\infty}_{n=0}\frac{(\lambda_{1}(\beta x)^{\alpha}t)^{n}}{(n+1)!(\alpha)_{\alpha n}},
\end{equation*}
where $(\cdot)_{\cdot}$ denotes the Pochhammer symbol (or rising factorial). Then, using the identity that $(n+1)(\alpha)_{\alpha n}=\alpha^{\alpha n}\prod^{\alpha}_{k=1}(1+k/\alpha)_{n}$,
\begin{equation}\label{A.1}
  \sum^{\infty}_{n=0}\frac{(\lambda_{1}(\beta x/\alpha)^{\alpha}t)^{n}}{n!\prod^{\alpha}_{k=1}(1+k/\alpha)_{n}}=:\U_{\alpha}\Bigg(1+\frac{1}{\alpha},1+\frac{2}{\alpha},\dots,2;\lambda_{1}t \bigg(\frac{\beta x}{\alpha}\bigg)^{\alpha}\Bigg)\Bigg),
\end{equation}
completing the proof.\footnote{The same formula can be derived by exploring the expansion in Withers and Nadarajah (\citeyear{WN11}) \text{Thm.} 3.1, for which one needs to evaluate the derivative of the confluent hypergeometric function with respect to its argument.}
\end{proof}

\medskip

\noindent \textbf{Proposition \ref{pro:2}.}

\begin{proof}
By the law of total probability, it is clear that the random variable $V_{1}\zeta_{2}$ admits the local density function
\begin{equation*}
  f_{V_{1}\zeta_{2}}(x)=\frac{\lambda_{2}\Delta}{\sqrt{2\pi\sigma^{2}_{V}}}e^{-(x-\mu_{V})^{2}/(2\sigma^{2}_{V})} +(1-\lambda_{2}\Delta)\delta_{\{0\}}(x), \quad x\in\mathbb{R},
\end{equation*}
and then, with the independence structure, $\mu\Delta+\sigma\xi\sqrt{\Delta}+V_{1}\zeta_{2}$ has the density function
\begin{align*}
  (f_{\mu\Delta+\sigma\xi\sqrt{\Delta}}\circ f_{V_{1}\zeta_{2}})(x)&=\frac{\lambda_{2}\Delta}{\sqrt{2\pi\upsilon^{2}\Delta}}e^{-(x-\nu\Delta)^{2}/(2\upsilon^{2}\Delta)} +\frac{1-\lambda_{2}\Delta}{\sqrt{2\pi\sigma^{2}\Delta}}e^{-(x-\mu\Delta)^{2}/(2\sigma^{2}\Delta)}\\
  &=:\lambda_{2}\Delta\ell_{1}(x)+(1-\lambda_{2}\Delta)\ell_{2}(x),\quad x\in\mathbb{R},
\end{align*}
where $\circ$ stands for convolution, and $\nu$ and $\upsilon$ are exactly as specified in the proposition.

By the chain rule, we write the local density function of the random variable $-\eta A_{1}\zeta_{1}$ as
\begin{equation}\label{A.2}
  f_{-\eta A_{1}\zeta_{1}}(x)=\frac{\lambda_{1}\Delta}{\eta}f_{A_{1}}\bigg(-\frac{x}{\eta}\bigg)+(1-\lambda_{1}\Delta)\delta_{\{0\}}(x),\quad x\leq0,
\end{equation}
where $f_{A_{1}}$ is given by (\ref{2.1.2}). Convoluting $\ell_{1}$ and $\ell_{2}$ with the point mass component in (\ref{A.2}) is straightforward; the nontrivial part is the convolution with the flipped scaled Erlang density function.

Since $\ell_{1}$ and $\ell_{2}$ are structurally identical, i.e., they are both normal densities, we will only consider $\ell_{1}$. We have
\begin{align}\label{A.3}
  \ell(x)&:=(\ell_{1}\circ f_{-\eta A_{1}})(x) \nonumber\\
  &=\int^{0}_{-\infty}\frac{1}{\sqrt{2\pi\upsilon^{2}\Delta}}\exp\bigg(-\frac{(x-y-\nu\Delta)^{2}}{2\upsilon^{2}\Delta}\bigg) \frac{\beta^{\alpha}(-y/\eta)^{\alpha-1}\exp(\beta y/\eta)}{\eta(\alpha-1)!}\dd y,\quad x\in\mathbb{R}.
\end{align}
First, we reformat (\ref{A.3}) into
\begin{equation*}
  \ell(x)=\varkappa_{1}\exp\bigg(-\frac{(x-\nu\Delta)^{2}}{2\upsilon^{2}\Delta}\bigg)\int^{\infty}_{0}y^{\alpha-1} \exp\bigg(-\frac{2(x-\nu\Delta)y+y^{2}}{2\upsilon^{2}\Delta}-\frac{\beta y}{\eta}\bigg)\dd y,
\end{equation*}
where
\begin{equation*}
  \varkappa_{1}=\frac{\beta^{\alpha}\eta^{-\alpha}}{(\alpha-1)!\sqrt{2\pi\upsilon^{2}\Delta}},
\end{equation*}
by isolating the terms free of $y$ and flipping the domain of integration. Next, we apply the substitution $y^{2}/(2\upsilon^{2}\Delta)\mapsto y$ to rewrite the last integral as
\begin{align}\label{A.4}
  &\quad\int^{\infty}_{0}\frac{(2\upsilon^{2}\Delta y)^{\alpha/2}}{2y}\exp\bigg(-y-\bigg(\frac{2(x-\nu\Delta)\sqrt{y}}{\sqrt{2\upsilon^{2}\Delta}}+\frac{\beta\upsilon\sqrt{2\Delta y}} {\eta}\bigg)\bigg)\dd y \nonumber\\
  &=\varkappa_{2}\int^{\infty}_{0}y^{\alpha/2-1}\exp(-y-2\varkappa_{3}\sqrt{y})\dd y,
\end{align}
where
\begin{equation*}
  \varkappa_{2}=\frac{(2\upsilon^{2}\Delta)^{\alpha/2}}{2}\quad\text{and}\quad\varkappa_{3}=\frac{\eta(x-\nu\Delta)+\beta\upsilon^{2}\Delta} {\eta\upsilon\sqrt{2\Delta}}.
\end{equation*}
By construction $\varkappa_{1}$ and $\varkappa_{2}$ are strictly positive, whereas the polarity of $\varkappa_{3}$ depends on $x$. Observably, the last integral in (\ref{A.4}) is precisely the Laplace transform of $h(y):=y^{\alpha/2-1}e^{-2\varkappa_{3}\sqrt{y}}$ evaluated at 1. By referring to Bateman (\citeyear{B54}) \text{Eq.} 4.5.35, it is easy to obtain (with minor modifications)
\begin{equation}\label{A.5}
  \int^{\infty}_{0}e^{-y}h(y)\dd y=2^{1-\alpha/2}(\alpha-1)!e^{\varkappa^{2}_{3}/2}\D_{-\alpha}(\sqrt{2}\varkappa_{3}).
\end{equation}
Note that for $\varkappa_{3}<0$ (\ref{A.5}) cannot be expressed in terms of a confluent hypergeometric function, but is actually an analytic continuation into the left half-plane.

We therefore obtain $\ell$ by plugging (\ref{A.5}) into (\ref{A.4}) and simplifying the coefficient $2^{1-\alpha/2}(\alpha-1)!\varkappa_{1}\varkappa_{2}>0$, eventually yielding (\ref{2.3.5}).
\end{proof}

\medskip

\noindent \textbf{Proposition \ref{pro:3}.}

\begin{proof}
In Tak\'{a}cs (\citeyear{T65}), it was shown that the distribution of the running supremum of a stochastic process with nonnegative, interchangeable increments over a generic time interval $[0,t]$ can be written as that of the process at time $t$ with a penalizing inter-temporal probability; see also Michna, Palmowski, and Pistorius (\citeyear{MPP15}). Since $J$ by construction has nonnegative increments with $\E[J_{t}]=\lambda_{1}\alpha t/\beta>0$, applying a simple scaling factor we have that for any $(t,x)\in\mathbb{R}_{++}\times\mathbb{R}_{++}$,
\begin{equation}\label{A.6}
  P(t,x)=\PP\bigg[J_{t}>x+\frac{\lambda_{1}\alpha t}{\beta}\bigg]+\int^{t}_{0}\frac{\E\big[(J_{t-s}-\lambda_{1}\alpha(t-s)/\beta)^{-}\big]} {\lambda_{1}\alpha(t-s)/\beta}\PP\bigg[J_{s}\in x+\frac{\lambda_{1}\alpha s}{\beta}+\bigg[0,\frac{\lambda_{1}\alpha\dd s}{\beta}\bigg]\bigg].
\end{equation}
Clearly, based on (\ref{2.1.3}), the first term on the right-hand side of (\ref{A.6}) is computed from
\begin{equation*}
  \PP\bigg[J_{t}\leq x+\frac{\lambda_{1}\alpha t}{\beta}\bigg]=e^{-\lambda_{1}t}+\int^{x+\lambda_{1}\alpha t/\beta}_{0}f^{(\mathrm{c})}_{J_{t}}(y)\dd y.
\end{equation*}
For the second term, note that
\begin{align*}
  \E\Bigg[\bigg(J_{t-s}-\frac{\lambda_{1}\alpha(t-s)}{\beta}\bigg)^{-}\Bigg]&=-\int^{\lambda_{1}\alpha(t-s)/\beta}_{0}\bigg(y-\frac{\lambda_{1}\alpha (t-s)}{\beta}\bigg)f_{J_{t-s}}(y)\dd y \\
  &=\frac{\lambda_{1}\alpha (t-s)e^{-\lambda_{1}(t-s)}}{\beta}+\int^{\lambda_{1}\alpha(t-s)/\beta}_{0}\bigg(\frac{\lambda_{1}\alpha (t-s)}{\beta}-y\bigg)f^{(\mathrm{c})}_{J_{t-s}}(y)\dd y,
\end{align*}
and for $x>0$,
\begin{equation*}
  \PP\bigg[J_{s}\in x+\frac{\lambda_{1}\alpha s}{\beta}+\bigg[0,\frac{\lambda_{1}\alpha\dd s}{\beta}\bigg]\bigg]=\frac{\lambda_{1}\alpha}{\beta}f^{(\mathrm{c})}_{J_{s}}\bigg(x+\frac{\lambda_{1}\alpha s}{\beta}\bigg)\dd s,
\end{equation*}
since the distribution of $J_{s}$ is absolute continuous on $\mathbb{R}_{++}$. Therefore, the integral in (\ref{A.6}) becomes
\begin{align*}
  &\quad\int^{t}_{0}\frac{\lambda_{1}\alpha (t-s)e^{-\lambda_{1}(t-s)}/\beta+\int^{\lambda_{1}\alpha(t-s)/\beta}_{0}(\lambda_{1}\alpha (t-s)/\beta-y)f^{(\mathrm{c})}_{J_{t-s}}(y)\dd y} {\lambda_{1}\alpha(t-s)/\beta}\bigg(\frac{\lambda_{1}\alpha}{\beta}f^{(\mathrm{c})}_{J_{s}}\bigg(x+\frac{\lambda_{1}\alpha s}{\beta}\bigg)\bigg)\dd s \\
  &=\frac{\lambda_{1}\alpha}{\beta}\int^{t}_{0}\mathfrak{I}_{1}(t-s)f^{(\mathrm{c})}_{J_{s}}\bigg(x+\frac{\lambda_{1}\alpha s}{\beta}\bigg)\bigg)\dd s,
\end{align*}
leading to (\ref{3.2}) and (\ref{3.3}).
\end{proof}

\medskip

\noindent \textbf{Corollary \ref{cor:1}.}

\begin{proof}
By adapted-ness, the random variable $\bar{J}_{t_{0}}$ is $\mathscr{F}_{t_{0}}$-measurable. If $x<\bar{J}_{t_{0}}$, then $\bar{J}_{t}>x$ with probability 1. Otherwise, the independent and stationary increments property of the process $J-\lambda_{1}\alpha\mathrm{id}/\beta$ renders the required conditional probability the same as
\begin{equation*}
  \PP\bigg(\bar{J}'_{t-t_{0}}>x-J_{t_{0}}+\frac{\lambda_{1}\alpha t_{0}}{\beta}\bigg|J_{t_{0}}\bigg),\quad x\geq J_{t_{0}}-\frac{\lambda_{1}\alpha t_{0}}{\beta},
\end{equation*}
where $\bar{J}'$ is an independent copy of $\bar{J}$.
\end{proof}

\medskip

\noindent \textbf{Proposition \ref{pro:4}.}

\begin{proof}
Let $0\leq t_{0}<t$ be fixed throughout. By the independence between $W^{\ast}$ and $\sum^{N_{2}}_{k=1}V_{k}$ under $\PP^{\ast}$, (\ref{3.8}) is an immediate result from setting
\begin{equation*}
  E^{(1)}_{t_{0}}(t,u):=\E^{\ast}\Big[e^{-u\int^{t}_{t_{0}}\lambda^{(1)}_{3,v}\dd v}\Big|\mathscr{F}_{t_{0}}\Big],\quad E^{(2)}_{t_{0}}(t,u):=\E\Big[e^{-u\int^{t}_{t_{0}}\lambda^{(2)}_{3,v}\dd v}\Big|\mathscr{F}_{t_{0}}\Big],\quad u\geq0.
\end{equation*}

First, the distribution of the square integral of a Brownian motion with drift is well-known; see, e.g., Xia (\citeyear{X20}) for a detailed analysis. To obtain the $\mathscr{F}_{t_{0}}$-conditional expectation, observe from the independent stationary increments of $W^{\ast}$ that
\begin{equation*}
  \int^{t}_{t_{0}}(\kappa^{(1)}s+\varsigma^{(1)}W^{\ast}_{s})^{2}\dd s=\int^{t}_{t_{0}}(\kappa^{(1)}(s-t_{0})+\varsigma^{(1)}(W^{\ast}_{s}-W^{\ast}_{t_{0}})+\vartheta_{t_{0}})^{2}\dd s=\int^{t-t_{0}}_{0}(\kappa^{(1)}s+\varsigma^{(1)}W'_{s}+\vartheta_{t_{0}})^{2}\dd s,
\end{equation*}
where $\vartheta_{t_{0}}:=\kappa^{(1)}t_{0}+\varsigma^{(1)}W^{\ast}_{t_{0}}$ is $\mathscr{F}_{t_{0}}$-measurable, while $W'$ is an independent copy of $W^{\ast}$, thus independent of $\mathscr{F}_{t_{0}}$. Note that $\lambda^{(1)}_{3,t_{0}}=\vartheta^{2}_{t_{0}}$. Hence, $E^{(1)}_{t_{0}}(t,u)$ can be treated as an unconditional expectation of the square integral of $\kappa^{(1)}\mathrm{id}+\varsigma^{(1)}W'$ starting from $\vartheta_{t_{0}}$, and the formula (\ref{3.9}) can be obtained by following the proof of Xia (\citeyear{X20}) \text{Thm.} 1, employing the Karhunen--Lo\`{e}ve transform. The only difference lies in the $\vartheta_{t_{0}}$-terms: With the fractional representations of the hyperbolic tangent and secant functions (Gradshteyn and Ryzhik (\citeyear{GR07}) \text{Eqs.} 1.421 \& 1.422), one has
\begin{equation*}
  \sum^{\infty}_{k=1}\frac{(2\sqrt{2(t-t_{0})}/(\pi(2k-1)))^{2}u}{2\varsigma^{(1)2}(2(t-t_{0})/(\pi(2k-1)))^{2}u+1} =\frac{\sqrt{u}\tanh\sqrt{2\varsigma^{(1)2}(t-t_{0})^{2}u}} {\sqrt{2\varsigma^{(1)2}}}
\end{equation*}
and
\begin{equation*}
  \sum^{\infty}_{k=1}\frac{(32(-1)^{k+1}\kappa^{(1)}(t-t_{0})^{2}/(\pi(2k-1))^{3})u}{2\varsigma^{(1)2}(2(t-t_{0})/(\pi(2k-1)))^{2}u+1} =\frac{\kappa^{(1)}(1-\sech\sqrt{2\varsigma^{(1)2}(t-t_{0})^{2}u})}{\varsigma^{(1)2}}.
\end{equation*}

Second, the SDE (\ref{3.6}) has the solution conditional on $\mathscr{F}_{t_{0}}$
\begin{equation*}
  \lambda^{(2)}_{3,t}=\lambda^{(2)}_{3,t_{0}}e^{-\kappa^{(2)}(t-t_{0})} +\varsigma^{(2)}\int^{t}_{t_{0}}e^{-\kappa^{(2)}(t-s)}\dd\sum^{N_{2,s}}_{k=1}h(V^{-}_{k}),
\end{equation*}
which implies that
\begin{align}\label{A.7}
  \int^{t}_{t_{0}}\lambda^{(2)}_{3,s}\dd s&=\lambda^{(2)}_{3,t_{0}}\frac{1-e^{-\kappa^{(2)}(t-t_{0})}}{\kappa^{(2)}} +\varsigma^{(2)}\int^{t}_{t_{0}}\int^{s}_{t_{0}}e^{-\kappa^{(2)}(s-v)}\dd\sum^{N_{2,v}}_{k=1}h(V^{-}_{k})\dd s \nonumber\\
  &=\lambda^{(2)}_{3,t_{0}}\frac{1-e^{-\kappa^{(2)}(t-t_{0})}}{\kappa^{(2)}}+\varsigma^{(2)}\int^{t}_{t_{0}} \frac{1-e^{-\kappa^{(2)}(t-s)}}{\kappa^{(2)}}\dd\sum^{N_{2,s}}_{k=1}h(V^{-}_{k}),
\end{align}
where the second equality follows from integration-by-parts. Since $\sum^{N_{2}}_{k=1}h(V^{-}_{k})$ is a L\'{e}vy process, the infinite divisibility of its distribution leads to
\begin{align*}
  \E^{\ast}\Bigg[\exp\Bigg(-u\varsigma^{(2)}\int^{t}_{t_{0}}\frac{1-e^{-\kappa^{(2)}(t-s)}}{\kappa^{(2)}} \dd\sum^{N_{2,s}}_{k=1}h(V^{-}_{k})\Bigg)\Bigg|\mathscr{F}_{t_{0}}\Bigg]&=\prod^{t}_{t_{0}} \E^{\ast}\Big[e^{-u((1-e^{-\kappa^{(2)}(t-s)})/\kappa^{(2)})\sum^{N_{2,1}}_{k=1}h(V^{-}_{k})}\Big|\mathscr{F}_{t_{0}}\Big]^{\dd s}\\
  &=\exp\int^{t}_{t_{0}}\log\phi_{\sum^{N_{2,1}}_{k=1}h(V^{-}_{k})}\bigg(\frac{\ii u(1-e^{-\kappa^{(2)}s})}{\kappa^{(2)}}\bigg)\dd s,
\end{align*}
where $\prod^{\cdot}_{\cdot}$ denotes the geometric integral (Slav\'{i}k (\citeyear{S07})) and which combined with (\ref{A.7}) yields the expression (\ref{3.10}) for $E^{(2)}_{t_{0}}(t,u)$.
\end{proof}

\medskip

\noindent \textbf{Corollary \ref{cor:2}.}

\begin{proof}
If $h$ is given by (\ref{3.11}), then $h(V^{-}_{1})$ follows a categorical distribution (\text{a.k.a.} generalized Bernoulli distribution) with exactly four categories, separated by multiples of $-\sigma_{V}$. Specifically, the compound random variable $\sum^{N_{2,1}}_{k=1}h(V^{-}_{k})$ is easily seen to have the log-characteristic function
\begin{align}\label{A.8}
  \log\phi_{\sum^{N_{2,1}}_{k=1}h(V^{-}_{k})}(u)&=\log\E\Big[e^{\ii u\sum^{N_{2,1}}_{k=1}h(V^{-}_{k})}\Big] \nonumber\\
  &=\lambda_{2}\Bigg(e^{4\ii u}\PP[V^{-}_{1}\geq3\sigma_{V}]+\sum^{3}_{i=1}e^{\ii iu}\PP[i\sigma_{V}>V^{-}_{1}\geq(i-1)\sigma_{V}]-1\Bigg),\quad u\in\mathbb{R},
\end{align}
where the coefficients
\begin{align*}
  &a_{1}:=\PP[\sigma_{V}>V^{-}_{1}\geq0]=\PP[V_{1}>-\sigma_{V}],\; a_{2}:=\PP[2\sigma_{V}>V^{-}_{1}\geq\sigma_{V}]=\PP[-2\sigma_{V}<V_{1}\leq-\sigma_{V}],\\
  &a_{3}:=\PP[3\sigma_{V}>V^{-}_{1}\geq2\sigma_{V}]=\PP[-3\sigma_{V}<V_{1}\leq-2\sigma_{V}],\;
  a_{4}:=\PP[V^{-}_{1}\geq3\sigma_{V}]=\PP[V_{1}\leq-3\sigma_{V}]
\end{align*}
are directly computed using the normal cumulative distribution function, as shown in (\ref{3.13}).

With (\ref{A.8}), it then remains to compute
\begin{align*}
  \int^{t}_{t_{0}}\exp\bigg(\frac{-iu\varsigma^{(2)}(1-e^{-\kappa^{(2)}s})}{\kappa^{(2)}}\bigg)\dd s&=e^{-iu\varsigma^{(2)}/\kappa^{(2)}} \int^{-iu\varsigma^{(2)}e^{-\kappa^{(2)}t_{0}}/\kappa^{(2)}}_{-iu\varsigma^{(2)}e^{-\kappa^{(2)}t}/\kappa^{(2)}}\frac{e^{-s}}{\kappa^{(2)}s}\dd s\\
  &=:\frac{e^{-iu\varsigma^{(2)}/\kappa^{(2)}}}{\kappa^{(2)}}\bigg(\Ei\bigg(\frac{iu\varsigma^{(2)}e^{-\kappa^{(2)}t_{0}}}{\kappa^{(2)}}\bigg) -\Ei\bigg(\frac{iu\varsigma^{(2)}e^{-\kappa^{(2)}t}}{\kappa^{(2)}}\bigg)\bigg),
\end{align*}
with $i\in\{1,2,3,4\}$, where the substitution $-iu\varsigma^{(2)}e^{-\kappa^{(2)}s}\mapsto s$ has been used.
\end{proof}

\medskip

\noindent \textbf{Theorem \ref{thm:1}.}

\begin{proof}
For the non-default leg of the CoCo, using (\ref{4.1}), taking expectation on the discounted cash flows yields
\begin{align*}
  \E^{\ast}\Big[Ke^{-\int^{T}_{t_{0}}r_{v}\dd v}\mathbf{1}_{\{\tau>T\}}\big|\mathscr{F}_{t_{0}}\big]&=Ke^{-\int^{T}_{t_{0}}r_{v}\dd v} \PP^{\ast}\Big[\bar{J}_{T}\leq\overline{J}\big|\mathscr{F}_{t_{0}}\big]\PP^{\ast}\big[N_{3,\Lambda_{T}}=0\big|\mathscr{F}_{t_{0}}\big]\\
  &=Ke^{-\int^{T}_{t_{0}}r_{v}\dd v}(1-P_{t_{0}}(T,\overline{J}))E^{\ast}_{t_{0}}(T,1),
\end{align*}
and likewise, for each $i\in\mathbb{Z}\cap[1,M]$, since each coupon date is deterministic, then provided $t_{0}<t_{i}$,
\begin{equation*}
  \E^{\ast}\Big[c_{i}e^{-\int^{t_{i}}_{t_{0}}r_{v}\dd v}\mathbf{1}_{\{\tau>t_{i}\}}\Big|\mathscr{F}_{t_{0}}\Big]=c_{i}e^{-\int^{t_{i}}_{t_{0}}r_{v}\dd v} \big(1-P_{t_{0}}\big(t_{i},\overline{J}\big)\big)E^{\ast}_{t_{0}}(t_{i},1).
\end{equation*}
For the default leg, using (\ref{4.1.1}) and the given independence structure, we apply the law of iterated expectations to deduce that
\begin{align}\label{A.9}
  \E^{\ast}\Big[RKe^{-\int^{\tau}_{t_{0}}r_{v}\dd v}\mathbf{1}_{\{t_{0}<\tau\leq T\}}\Big|\mathscr{F}_{t_{0}}\Big]&=\E^{\ast}\Big[\varpi(1-w)Ke^{-\int^{\tau_{1}}_{t_{0}}r_{v}\dd v}\mathbf{1}_{\{t_{0}<\tau_{1}\leq T\wedge\tau_{3}\}}\Big|\mathscr{F}_{t_{0}}\Big] \nonumber\\
  &=\varpi(1-w)K\E^{\ast}\Big[e^{-\int^{\tau_{1}}_{t_{0}}r_{v}\dd v-\int^{\tau_{1}}_{t_{0}}\lambda_{3,v}\dd v}\mathbf{1}_{\{t_{0}<\tau_{1}\leq T\}}\Big|\mathscr{F}_{t_{0}}\Big] \nonumber\\
  &=\varpi(1-w)K\E^{\ast}\Big[e^{-\int^{\tau_{1}}_{t_{0}}r_{v}\dd v}\mathbf{1}_{\{t_{0}<\tau_{1}\leq T\}}E^{\ast}_{t_{0}}(\tau_{1},1)\Big|\mathscr{F}_{t_{0}}\Big] \nonumber\\
  &=\varpi(1-w)K\int_{\mathbb{R}_{+}}e^{-\int^{s}_{t_{0}}r_{v}\dd v}\mathbf{1}_{\{s\in(t_{0},T]\}}E^{\ast}_{t_{0}}(s,1) \dd_{s}\PP\big[\bar{J}_{s}>\overline{J}\big|\mathscr{F}_{t_{0}}\big] \nonumber\\
  &=\varpi(1-w)\int^{T}_{t_{0}}e^{-\int^{s}_{t_{0}}r_{v}\dd v}E^{\ast}_{t_{0}}(s,1)\dd_{s}P_{t_{0}}(s,\overline{J}),
\end{align}
where the differential is with respect to the time variable $s$.
\end{proof}

\medskip

\noindent \textbf{Theorem \ref{thm:2}.}

\begin{proof}
The non-default leg is no different from that of the write-down CoCo, as in Theorem \ref{thm:1}. Based on the stock price evolution under $\PP^{\ast}$ in (\ref{2.3.8}) and the definition (\ref{4.2.1}) of the conversion ratio, the default leg has time-$t_{0}$ value
\begin{align}\label{A.10}
  \E^{\ast}\big[\chi_{\tau}S_{\tau}e^{-\int^{\tau}_{t_{0}}r_{v}\dd v}\mathbf{1}_{\{\tau\leq T\}}\big|\mathscr{F}_{t_{0}}\big]&=\E^{\ast}\bigg[\frac{KR S^{p}_{\tau}}{S^{p}_{0}}e^{-\int^{\tau}_{t_{0}}r_{v}\dd v}\mathbf{1}_{\{\tau\leq T\}}\bigg|\mathscr{F}_{t_{0}}\bigg] \nonumber\\
  &=\frac{\varpi(1-w)K}{S^{p}_{0}}\E^{\ast}\Big[S^{p}_{\tau_{1}}e^{-\int^{\tau_{1}}_{t_{0}}r_{v}\dd v}\mathbf{1}_{\{\tau_{1}\leq T\wedge\tau_{3}\}}\Big|\mathscr{F}_{t_{0}}\Big].
\end{align}
The case $p=0$ is the same as (\ref{A.9}), so assume $p>0$ in the following. For $t\geq0$, the powered stock price process is given by
\begin{align*}
  S^{p}_{t}&=S^{p}_{0}\exp\Bigg(\int^{t}_{0}p\bigg(r_{s}-q-\frac{1}{2}\sigma^{2}-\lambda_{1}\psi_{1}(1)-\lambda_{2}\psi_{2}(1) +\gamma\lambda_{3,s}\bigg)\dd s \\
  &\qquad+p\sigma W^{\ast}_{t}+p\sum^{N_{2,t}}_{k=1}V_{k}-p\eta J_{t}+p\log(1-\gamma)N_{3,\Lambda_{t}}\Bigg).
\end{align*}
Clearly, by taking $\tilde{q}$ as the solution to
\begin{align*}
  &\quad p\bigg(r-q-\frac{1}{2}\sigma^{2}-\lambda_{1}\psi_{1}(1)-\lambda_{2}\psi_{2}(1)+\gamma\lambda_{3}\bigg)\\
  &=r-\tilde{q}-\frac{1}{2}(p\sigma)^{2} -\lambda_{1}\psi_{1}(p)-\lambda_{2}\psi_{2}(p)-((1-\gamma)^{p}-1)\lambda_{3},
\end{align*}
or
\begin{align*}
  \tilde{q}&=pq+(1-p)r+\frac{1}{2}p(1-p)\sigma^{2}+\lambda_{1}(p\psi_{1}(1)-\psi_{1}(p))+\lambda_{2}(p\psi_{2}(1)-\psi_{2}(p)) \\
  &\qquad-((1-\gamma)^{p}-1+p\gamma)\lambda_{3},
\end{align*}
we can rewrite $S^{p}$ as
\begin{align*}
  S^{p}_{t}&=S^{p}_{0}\exp\Bigg(\int^{t}_{0}\bigg(r_{s}-\tilde{q}_{s}-\frac{1}{2}(p\sigma)^{2}-\lambda_{1}\psi_{1}(p)-\lambda_{2}\psi_{2}(p) -((1-\gamma)^{p}-1)\lambda_{3,s}\bigg)\dd s \\
  &\qquad+p\sigma W^{\ast}_{t}+\sum^{N_{2,t}}_{k=1}(pV_{k})-p\eta J_{t}+p\log(1-\gamma)N_{3,\Lambda_{t}}\Bigg).
\end{align*}
The powered stock price can be seen as the price of a fictitious stock with initial price $S^{p}_{0}$, dividend yield process $\tilde{q}$, and scaled sources of randomness $p\sigma W^{\ast}+\sum^{N_{2,t}}_{k=1}(pV_{k})-p\eta J$, where $pV_{1}$ is normally distributed with mean $p\mu_{V}$ and variance $p^{2}\sigma^{2}_{V}$. By the definition of $\psi_{1}$ and $\psi_{2}$, it is also immediate that $\tilde{q}=r$ if $p=0$ and $\tilde{q}=q$ if $p=1$. Thus, we can define another equivalent probability measure $\tilde{\PP}\sim\PP^{\ast}$ using this fictitious stock as a num\'{e}raire, through the density process
\begin{align}\label{A.11}
  \frac{\dd\tilde{\PP}}{\dd\PP^{\ast}}\bigg|_{\mathscr{F}_{t}}&=\exp\Bigg(-\int^{t}_{0}\bigg(\frac{1}{2}(p\sigma)^{2}+\lambda_{1}\psi_{1}(p) +\lambda_{2}\psi_{2}(p)+((1-\gamma)^{p}-1)\lambda_{3,s}\bigg)\dd s \nonumber\\
  &\qquad+p\sigma W^{\ast}_{t}+\sum^{N_{2,t}}_{k=1}(pV_{k})-p\eta J_{t}+p\log(1-\gamma)N_{3,\Lambda_{t}}\Bigg).
\end{align}
According to the Girsanov--Meyer theorem for general semimartingales (see Jacod and Shiryaev (\citeyear{JS10}) \text{Chap.} III \text{Thm.} 7.18 \& 7.23]), it is straightforward to check that, under $\tilde{\PP}$, $W^{\ast}-p\sigma\mathrm{id}$ is a standard Brownian motion, $N_{2}$ has intensity $\lambda_{2}(\psi_{2}(p)+1)=\lambda_{2}e^{p\mu_{V}+(p\sigma_{V})^{2}/2}$, $V_{k}$'s are normally distributed with mean $\mu_{V}+p\sigma^{2}$ and variance $\sigma^{2}_{V}$, $N_{1}$ has intensity $\lambda_{1}(\psi_{1}(p)+1)=\lambda_{1}(1+p\eta/\beta)^{-\alpha}$, $A_{k}$'s are Erlang-distributed with parameters $\alpha$ (shape) and $\beta+p\eta$ (rate), and $N_{3}$ (without time inhomogeneity) has intensity $(1-\gamma)^{p}$,\footnote{The intuition behind the change in $N_{3}$ is as follows: The (forward) measure $\tilde{\PP}$ (partially) embeds risk associated with regulatory intervention into stock price reactions, and the updated intensity should complement the corresponding downward return magnitude as controlled by the parameter $\gamma$. In particular, the intensity reduces to 0 when $\gamma=1$ (bankruptcy), with the intervention risk fully absorbed into the stock price drop, and it stays at 1 when $\gamma=0$ (futility), in which case the stock price is completely unresponsive to intervention.} which results are also easily justifiable by analyzing the corresponding $\tilde{\PP}$-characteristic functions.

Then, by defining $\tilde{q}^{\circ}:=\tilde{q}+((1-\gamma)^{p}-1+p\gamma)\lambda_{3}$, which is deterministic as shown in (\ref{4.2.3}), and evaluating (\ref{A.10}) using the law of iterated expectations in a similar fashion as in (\ref{A.9}), we have that
\begin{align*}
  &\quad\E^{\ast}\big[\chi_{\tau}S_{\tau}e^{-\int^{\tau}_{t_{0}}r_{v}\dd v}\mathbf{1}_{\{t_{0}<\tau\leq T\}}\big|\mathscr{F}_{t_{0}}\big] \\
  &=\frac{\varpi(1-w)K}{S^{p}_{0}}\E^{\ast}\Big[S^{p}_{\tau}e^{-\int^{\tau}_{t_{0}}r_{v}\dd v}\mathbf{1}_{\{t_{0}<\tau_{1}\leq T\wedge\tau_{3}\}}\Big|\mathscr{F}_{t_{0}}\Big] \\
  &=\frac{\varpi(1-w)K}{S^{p}_{0}}\tilde{\E}\Big[S^{p}_{t_{0}}e^{-\int^{\tau}_{t_{0}}\tilde{q}_{v}\dd v}\mathbf{1}_{\{t_{0}<\tau_{1}\leq T\wedge\tau_{3}\}}\Big|\mathscr{F}_{t_{0}}\Big] \\
  &=\varpi(1-w)K\bigg(\frac{S_{t_{0}}}{S_{0}}\bigg)^{p}\tilde{\E}\Big[e^{-\int^{\tau_{1}}_{t_{0}}\tilde{q}^{\circ}_{v}\dd v} e^{\int^{\tau_{1}}_{t_{0}}((1-\gamma)^{p}-1+p\gamma-(1-\gamma)^{p})\lambda_{3,v}\dd v}\mathbf{1}_{\{t_{0}<\tau_{1}\leq T\}}\Big|\mathscr{F}_{t_{0}}\Big] \\
  &=\varpi(1-w)K\bigg(\frac{S_{t_{0}}}{S_{0}}\bigg)^{p}\tilde{\E}\Big[e^{-\int^{\tau_{1}}_{t_{0}}\tilde{q}^{\circ}_{v}\dd v}\tilde{E}_{t_{0}}(\tau_{1},1-p\gamma)\mathbf{1}_{\{t_{0}<\tau_{1}\leq T\}}\Big|\mathscr{F}_{t_{0}}\Big] \\
  &=\varpi(1-w)K\bigg(\frac{S_{t_{0}}}{S_{0}}\bigg)^{p}\int^{T}_{t_{0}}e^{-\int^{s}_{t_{0}}\tilde{q}^{\circ}_{v}\dd v}\tilde{E}_{t_{0}}(s,1-p\gamma)\dd_{s}\tilde{P}_{t_{0}}(s,\overline{J}),
\end{align*}
where the last line is well-defined because $1-p\gamma\in[0,1]$. This concludes the proof.
\end{proof}

\medskip

\noindent \textbf{Corollary \ref{cor:3}.}

\begin{proof}
This follows immediately from Theorem \ref{thm:2} by setting $p=1$, with $\tilde{q}\equiv\tilde{q}^{\circ}=q$.
\end{proof}

\bigskip

\section{Some details for power conversion mechanism}\label{B}

\renewcommand{\theequation}{B.\arabic{equation}}

Consider the equity-convertible CoCo in Section \ref{sec:4.2}, for which assume that the default time $\tau$ (given in (\ref{4.1})) satisfies that $\tau_{1}\leq T<\tau_{3}$, ensuring an accounting-triggered default. The following explanation is for a conversion price taking the highly nonstandard form
\begin{equation}\label{B.1}
  C_{\tau}:=\max\bigg\{\frac{1}{\Delta}\int^{\tau}_{\tau-\Delta}S_{s}\dd s,\underline{S}\bigg\},
\end{equation}
where $\underline{S}$ is a preset price floor, and $\Delta>0$ here is a fixed small observation window preceding default, typically 30 calendar days ($\Delta=1/12$).

As noted in Wilkens and Bethke (\citeyear{WB14}), the temporal averaging in (\ref{B.1}) mainly serves to smooth out spurious jumps in the stock price immediately before conversion, and if the average price happens to significantly deviate from $S_{\tau}$, then it is most likely above the price floor $\underline{S}$, and
\begin{equation*}
  C_{\tau}\approx\frac{1}{\Delta}\int^{\tau}_{\tau-\Delta}S_{s}\dd s\approx S^{p}_{\tau},
\end{equation*}
where the second approximation, with the power coefficient $p\in(0,1)$, is valid given that $C_{\tau}<S_{\tau}$ with high probability, by choosing the write-down fraction $w\in[0,1]$ such that $(1-w)K=S^{p}_{0}$ in (\ref{4.2.1}).

On the other hand, when the average price is close to $S_{\tau}$, then it follows that (with high probability)
\begin{equation*}
  C_{\tau}\approx\max\{S_{\tau},\underline{S}\}\approx S^{p}_{\tau},
\end{equation*}
which implies the choice $p\approx0$ if $S_{\tau}\leq\underline{S}$, by setting $(1-w)K=\underline{S}$; otherwise, $p\approx1$ and $w$ is chosen such that $(1-w)K=S_{0}$.



\end{appendices}

\end{document}